\newtheorem{proposition}{\hspace{6mm}Proposition}[section]
\newtheorem{lemma}{\hspace{6mm}Lemma}[section]
\def\be{\begin{equation}}
\def\ee{\end{equation}}
\title{Optimal life insurance and annuity decision under money illusion}
\author{
    Wenyuan Li\thanks{\rm Department of Statistics and Actuarial Science, The University of Hong Kong, Pokfulam, Hong Kong. Email: wylsaas@hku.hk} 
    \and 
    Pengyu Wei\thanks{\rm Insurance Risk and Finance Centre, Division of Banking \& Finance, Nanyang Business School, Nanyang Technological University, Singapore. E-mail: pengyu.wei@ntu.edu.sg}
}
\date{\today}
\begin{document}

\maketitle

\begin{abstract}
This paper investigates the optimal consumption, investment, and life insurance/annuity decisions for a family in an inflationary economy under money illusion. The family can invest in a financial market that consists of nominal bonds, inflation-linked bonds, and a stock index. The breadwinner can also purchase life insurance or annuities that are available continuously.  The family's objective is to maximize the expected utility of a mixture of nominal and real consumption, as they partially overlook inflation and tend to think in terms of nominal rather than real monetary values. We formulate this life-cycle problem as a random horizon utility maximization problem and derive the optimal strategy. We calibrate our model to the U.S. data and demonstrate that money illusion increases life insurance demand for young adults and reduces annuity demand for retirees. Our findings indicate that the money illusion contributes to the annuity puzzle and highlights the role of financial literacy in an inflationary environment.
\end{abstract}

\noindent \textbf{Keywords:}  Money illusion, life insurance, annuity, inflation
\section{Introduction}

Life insurance and annuity decision-making are crucial in studying the life-cycle model, aiding individuals in managing the mortality and financial risks in long-term investments. \cite{yaari1965uncertain} pioneers adding the random horizon to the life cycle model, highlighting the separability of consumption and bequest decisions when insurance is an option. \cite{fischer1973life} examines the life-cycle model in discrete time, observing that individuals with labor income are inclined to purchase life insurance early in life and sell it later on, and proposes annuity purchases to solve the short-sale constraints of life insurance. \cite{richard1975optimal} delves into an individual's life and annuity decisions within a continuous-time framework, finding that those anticipating high future income are likely to purchase life insurance regardless of relative preference in consumptions and bequests. \cite{pliska2007optimal} extend \cite{richard1975optimal}'s bounded lifetime assumption to an unbounded scenario, exploring life insurance demand across various economic factors and drawing significant economic insights. For pertinent work, we refer to \cite{hakansson1969optimal}, \cite{huang2008portfolio}, \cite{koijen2011optimal}, \cite{ekeland2012time}, \cite{kwak2014optimal}, \cite{wei2020optimal}, \cite{bernard2021optimal}, \cite{fischer2023optimal}, \cite{li2023optimal}, \cite{kizaki2024multi}, and \cite{wang2024optimal}, among others. However, none of the existing literature considers the effects of money illusion on an individual's consumption, investment, and insurance demands.  

The concept of money illusion, whereby individuals assess their utility based on nominal rather than real monetary value, was introduced in \cite{Fisher1928} and is substantiated by empirical research \citep[e.g.][]{shafir1997money,fehr2001does,fehr2007money,svedsater2007money}. In the economics literature, \cite{brunnermeier2008money} demonstrates that reducing inflation can significantly increase house prices for individuals affected by money illusion. \cite{miao2013economic} explores economic growth and determines that money illusion influences an agent's perception of real wealth growth and risk. \cite{basak2010equilibrium} highlight the substantial impact of money illusion on stock equilibrium prices.  \cite{lioui2023money} find that individuals influenced by money illusion tend to shift from inflation-indexed to nominal bonds. For more related work, we refer to \cite{david2013ties}, \cite{chen2009stock}, \cite{chen2013sources}, \cite{he2014myopic}, and \cite{eisenhuth2017money}. In actuarial science literature, \cite{wei2023optimal} study the money illusion effect within a defined-contribution plan framework. They reveal that the money illusion decreases an individual's holdings of inflation-linked bonds and results in substantial welfare loss. \cite{donnelly2024money} extend \cite{wei2023optimal}'s work with a minimum guarantee at retirement time. They find that the minimum guarantee can significantly reduce welfare loss caused by money illusion. However, both papers focus solely on the pre-retirement arrangement, overlooking post-retirement management. Furthermore, neither paper considers the individual's mortality risk, leaving research gaps to study the individual's insurance and annuity demands under the money illusion.

We consider the money illusion effect in a life cycle model. The family comprises the breadwinner and the rest of the family, both affected by money illusion in their consumption decisions. For the financial market, we adopt a two-factor model proposed by \cite{koijen2011optimal} to describe time-varying real interest rates, inflation rates, and risk premiums. The family can invest a part of the wealth in a stock index, nominal bonds, inflation-linked bonds, and a cash account. Simultaneously, they allocate the other part of the wealth to purchase life insurance to manage the breadwinner's mortality risk. Specifically, the family continuously pays the premium while the breadwinner is alive. Upon the breadwinner's death, the rest of the family receives a death benefit comprising the wealth and a life insurance payment, after which they continue managing their investments independently. We formulate this life-cycle problem as a random horizon utility maximization problem. The family's consumption utility combines nominal and real consumption, reflecting their inclination towards nominal monetary values over real values. In addition, the timeline is divided into three scenarios: pre-breadwinner's death and retirement, post-breadwinner's retirement but pre-breadwinner's death, and post-breadwinner's death. We derive the corresponding Hamilton-Jacobi-Bellman (HJB) equations and the optimal strategies. Under the standard constant relative risk aversion (CRRA) utility function, we obtain explicit solutions for both value functions and optimal strategies. Global existence conditions are presented to ensure that the explicit solutions do not explode in finite time. Furthermore, we prove that the explicit solutions exactly solve the HJB equations via the verification theorem.

We calibrate our model to the U.S. data and numerically illustrate the money illusion effect on the family's consumption, investment, and insurance demands. Through dynamic analysis, we plot the expected trajectories of optimal strategies evolving over time. We observe that both the high-risk-aversion breadwinner and the rest of the family will shift from an increasing to a decreasing consumption pattern when considering the money illusion. Regarding investment strategies, the family will short fewer short-term nominal bonds in exchange for longing for more long-term nominal bonds before late age (age 80), with an opposite trend following this threshold. The demand for inflation-linked bonds diminishes due to the influence of money illusion, while the impact on stock index investments remains minimal. For insurance strategies, the breadwinner purchases life insurance at working age and switches to annuities near retirement. Introducing the money illusion leads to a higher life insurance demand and a lower annuity demand for the breadwinner. The static analysis shows that the optimal consumption and annuity strategies follow an upward ``U-shape'' pattern with two factors (interest rate and inflation rate factors). In contrast, the life insurance strategy displays a downward ``U-shape'' pattern. This is because life insurance safeguards future income and substitutes for current consumption, contrasting with annuities bolstering current consumption through income streams. Lastly, we investigate the family's welfare loss from the money illusion. We observe that welfare loss increases with the degree of money illusion and the risk-aversion coefficient. This escalation is attributed to the risk-averse behavior of individuals, leading them to allocate more to riskless assets. While non-illusioned investors perceive inflation-linked bonds as riskless, illusioned investors favor nominal bonds, creating a divergence in welfare loss stemming from the preference discrepancy.

Our paper contributes to the existing literature in three aspects: (i) We study the money illusion effect on the trading strategy over short-term nominal bonds, long-term nominal bonds, inflation-linked bonds, and stock indexes. We find that when ignoring inflation risk, the family reduces the short position in short-term nominal bonds and long more long-term nominal bonds. Additionally, they decrease holdings in inflation-linked bonds while maintaining a steady stance on stocks. (ii) We explore the influence of money illusion on consumption strategies. Our results suggest a shift in consumption patterns from increasing to decreasing when ignoring the inflation risk. (iii) In examining life insurance and annuity decisions, we discover that neglecting inflation risk leads to an increased demand for life insurance but a reduced demand for annuities among breadwinners. This phenomenon contributes to the annuity puzzle, a well-documented issue where individuals exhibit reluctance to purchase annuities close to retirement age. Existing explanations for the annuity puzzle include annuity mispricing (\cite{davidoff2005annuities}), bequest motives (\cite{lockwood2012bequest}), mortality misperceive (\cite{han2021annuity}), risk aversion (\cite{milevsky2007annuitization}), and time inconsistency (\cite{zhang2021optimal}), etc. Our analysis introduces a novel perspective by highlighting money illusion as a potential driver of the annuity puzzle, particularly in an inflationary economic environment.

The rest of the paper is organized as follows: Section \ref{model_settings} introduces the model settings, including the financial market, mortality, wealth process, and breadwinner's objective. Section \ref{optimization_problem} solves the optimization problem via the HJB equations. Global existence and verification theorem are proved for explicit solutions. Section \ref{numerical_research} calibrates the model with U.S. data, conducts sensitivity analysis of the optimal strategies, and analyzes the welfare loss. Section \ref{conclusion} concludes.

\section{Model settings}\label{model_settings}
\subsection{Financial market}
We consider a financial market similar to that proposed by \cite{koijen2011optimal}, which incorporates time variations in real interest rates, inflation rates, and risk premia. Let $(\Omega, \mathcal{F}, \mathbb{P})$ be a filtered complete probability space. We use a four-dimensional vector of independent Brownian motions $Z_t$ to characterize the financial risk, which generates the filtration $\mathbb{F}:=\{\mathcal{F}_t\}_{t\in[0,T]}$.
 
Within the financial market, the real short rate is modelled as affine in a single factor, $X_1$,
\begin{equation*}
	r_t = \delta_r + X_{1,t}, ~~\delta_r > 0.
\end{equation*}
Meanwhile, the expected inflation rate is influenced by a second factor, $X_2$,
\begin{equation*}
	\pi^e_t = \delta_{\pi^e} + X_{2,t}, ~~\delta_{\pi^e}>0.
\end{equation*}
These two factors are governed by an Ornstein-Uhlenbeck process
\begin{equation}\label{Xt_process}
	dX_t = -K_X X_t dt + \Sigma_X dZ_t,
\end{equation}
where $X_t = (X_{1,t}, X_{2,t})^{\top}$, $K_X = diag(\kappa_1, \kappa_2),~\kappa_i>0,~i=1,2$, $\Sigma_X = (\sigma_1,\sigma_2)^{\top},~\sigma_i \in \mathbb{R}^4,~i=1,2$. The evolution of realized inflation is captured by the stochastic differential equation (SDE)
\begin{equation}\label{realized_inflation}
	\frac{d\Pi_t}{\Pi_t} = \pi^e_t dt + \sigma^{\top}_{\Pi} dZ_t,~\Pi_0 = 1,
\end{equation}
where $\Pi_t$ represents the level of the (consumer) price index at time $t$ and $\sigma_{\Pi} \in \mathbb{R}^4$. The equity index $S_t$ is given by
\begin{equation*}
	\frac{dS_t}{S_t} = \mu_t dt + \sigma^{\top}_{S} dZ_t,
\end{equation*}
where the drift term is defined as $\mu_t = R_t + \mu_0 + \mu^{\top}_1 X_t$, with $R_t$ representing the instantaneous nominal short rate as specified in \eqref{Rt} below. To aid in model identification, we impose the condition that the volatility matrix $(\sigma_1, \sigma_2, \sigma_{\Pi}, \sigma_S)^{\top}$ is lower triangular.

We postulate the nominal state price density $\phi$ evolve as
\begin{equation*}\label{phit}
	\frac{d \phi_t}{\phi_t} = - R_t dt - \Lambda^{\top}_t dZ_t, \quad \phi_0 = 1,
\end{equation*}
where the market prices of risk, denoted by $\Lambda_t$, are affine in the term-structure variables, i.e.,
\begin{equation*}\label{Lambda_t}
	\Lambda_t = \Lambda_0 + \Lambda_1 X_t.
\end{equation*}
We follow \cite{koijen2011optimal} to impose restrictions on $\Lambda_0$ and $\Lambda_1$
\begin{equation*}
	\Lambda_0 =
	\left(
	\begin{array}{c}
		\Lambda_{0(1)}\\
		\Lambda_{0(2)}\\
		 0\\
		\Lambda_{0(4)}\\
	\end{array}
	\right),~
	\Lambda_1 =
	\left(
	\begin{array}{ccc}
		\Lambda_{1(1,1)} & 0\\
		0 & \Lambda_{1(2,2)}\\
		0 & 0               \\
		\Lambda_{1(4,1)} & \Lambda_{1(4,2)} \\
	\end{array}
	\right),
\end{equation*}
with $\sigma^{\top}_S\Lambda_0 = \mu^{\top}_0$ and $\sigma^{\top}_S\Lambda_1 = \mu^{\top}_1$. The real state price density $\phi^R_t = \phi_t\Pi_t$ is governed by
\begin{eqnarray*}
	\frac{d\phi^R_t}{\phi^R_t} = - (R_t - \pi^e_t + \sigma^{\top}_{\Pi} \Lambda_t) dt - (\Lambda^{\top}_t - \sigma^{\top}_{\Pi}) dZ_t 
	= - r_t dt - (\Lambda^{\top}_t - \sigma^{\top}_{\Pi}) dZ_t, \quad \phi^R_0 = 1,
\end{eqnarray*}
which implies the instantaneous nominal short rate can be expressed as
\begin{equation}\label{Rt}
	R_t = \delta_R + (\iota^{\top}_2 - \sigma^{\top}_{\Pi}\Lambda_1)X_t,
\end{equation}
where $\delta_R = \delta_r + \delta_{\pi^e} - \sigma^{\top}_{\Pi}\Lambda_0$ and $\iota_2  = (1,1)^{\top}$.

Finally, we present the prices of nominal and inflation-linked bonds, deriving from the standard approach in the literature \citep[e.g.][]{duffie1996yield}. The time-$t$ price of a nominal bond with maturity $s$ has an exponential affine form
\begin{equation*}
	P(X_t,t,s) = \exp\{A_0(s-t) + [A_1(s-t)]^{\top} X_t\},
\end{equation*}
where $A_0$ and $A_1$ are determined by the following ordinary differential equation (ODE) system
\begin{eqnarray}
	&& \frac{\partial A_0(\tau)}{\partial \tau} = \frac{1}{2}[A_1(\tau)]^{\top}\Sigma_X \Sigma^{\top}_X A_1(\tau) - [A_1(\tau)]^{\top} \Sigma_X \Lambda_0 - \delta_R, ~A_0(0)=0,\label{A0}\\
	&& \frac{\partial A_1(\tau)}{\partial \tau} = -[K^{\top}_X + \Lambda^{\top}_1\Sigma^{\top}_X] A_1(\tau) - \iota_2 + \Lambda^{\top}_1 \sigma_{\Pi},~A_1(0)=0. \label{A1}
\end{eqnarray}
Additionally, the dynamics of $P(X_t,t,s)$ are governed by 
\begin{equation*}
	\frac{dP(X_t,t,s)}{P(X_t,t,s)} = \{ R_t + [A_1(s-t)]^{\top}\Sigma_X \Lambda_t \} dt + [A_1(s-t)]^{\top} \Sigma_X dZ_t.
\end{equation*}
Similarly, the time-$t$ real price of an inflation-linked bond with maturity $s$ is
\begin{equation}
	P^R(X_t,t,s) = \exp\{A^R_0(s-t) + [A^R_1(s-t)]^{\top} X_t\},\label{inflation_linked_bond}
\end{equation}
where $A^R_0$ and $A^R_1$ are subject to the ODE system
\begin{eqnarray*}
	&&\frac{\partial A^R_0(\tau)}{\partial \tau} = \frac{1}{2}[A^R_1(\tau)]^{\top}\Sigma_X\Sigma_X^{\top} A^R_1(\tau) - [A^R_1(\tau)]^{\top}\Sigma_X(\Lambda_0 - \sigma_{\Pi}) - \delta_r, \label{AR0}\\
	&&\frac{\partial A^R_1(\tau)}{\partial \tau} = -(K^{\top}_X + \Lambda^{\top}_1 \Sigma^{\top}_X)A^R_1(\tau) - e_1,\label{AR1}\\
	&&A^R_0(0)=A^R_1(0)=0, \notag
\end{eqnarray*}
where $e_i$ represents the $i$-th unit vector in $\mathbb{R}^2$. Then, the nominal price of the inflation-linked bond, $\Pi_t P^R(X_t,t,s)$, follows the SDE 
\begin{equation*}
	\frac{d(\Pi_t P^R(X_t,t,s))}{\Pi_t P^R(X_t,t,s)} = \{R_t + [A^R_1(s-t)]^{\top}\Sigma_X\Lambda_t + \sigma^{\top}_{\Pi}\Lambda_t\}dt + \{[A^R_1(s-t)]^{\top}\Sigma_X + \sigma^{\top}_{\Pi}\} dZ_t.
\end{equation*}

\subsection{Mortality}
This subsection introduces the breadwinner's mortality risk. We use $T_x$ to denote the future lifetime of a breadwinner aged $x$, which is a non-negative random variable independent of the financial market (i.e., $T_x$ is independent of the filtration $\mathbb{F}$ associated with the financial market). We then define the following probabilities
\begin{eqnarray*}
	_{t}p_x = \mathbb{P}[T_x>t], \quad 
	_{t}q_x = \mathbb{P}[T_x\leq t] = 1- {_{t}p_x}, \quad 
	\lim \limits_{t\rightarrow \infty} {_{t}p_x} =0, \quad 
 \lim \limits_{t\rightarrow \infty} {_{t}q_x} =1,
\end{eqnarray*}
where $_{t}p_x$ represents the probability that the breadwinner aged $x$ survives to at least age $x+t$, and $_{t}q_x$ is the probability that the breadwinner dies before age $x+t$. In actuarial science, it is conventional to define the instantaneous force of mortality (or hazard rate) as
\begin{equation*}
	\lambda_{x+t} = \frac{1}{_{t}p_x}\frac{d}{dt} {_{t}q_x} = - \frac{1}{_{t}p_x}\frac{d}{dt} {_{t}p_x},
\end{equation*}
leading to the relationships
\begin{eqnarray*}
_{t}p_x = \exp\left\{ - \int_0^t \lambda_{x+s} ds \right\}, \quad _{t}q_x = \int_0^t {_{s}p_x}\lambda_{x+s} ds.
\end{eqnarray*}
The probability density function of $T_x$ is then expressed as $f_{T_x}(t) = {_{t}p_x}\lambda_{x+t}$ for $t>0$.

\subsection{Wealth Process}
We consider two dates of interest: the breadwinner's retirement time denoted by $T_R$ and the terminal time denoted by $T$. The breadwinner can purchase life insurance before the first time of death time $T_x$ and retirement time $T_R$. The nominal wealth follows
\begin{equation*}
    \left\{
    \begin{array}{ll}
    dW_t &= W_t (\alpha^{\top}_t\Sigma\Lambda_t + R_t) dt + W_t \alpha^{\top}_t \Sigma dZ_t + Y^{\$}_tdt - I^{\$}_t dt - c^{\$}_{1,t}dt - c^{\$}_{2,t}dt, ~0\leq t< T_R \wedge T_x,\\
    dW_t &= W_t (\alpha^{\top}_t\Sigma\Lambda_t + R_t) dt + W_t \alpha^{\top}_t \Sigma dZ_t - I^{\$}_t dt - c^{\$}_{1,t}dt - c^{\$}_{2,t}dt, ~T_R\leq t< T_x \wedge T,\\
    dW_t &= W_t(\alpha^{\top}_t\Sigma\Lambda_t + R_t) dt + W_t \alpha^{\top}_t \Sigma dZ_t - c^{\$}_{2,t}dt, ~T_x\wedge T \leq t\leq T,
    \end{array}
    \right.
\end{equation*}
where the initial condition is $W_0=Y^{\$}_0$, $Y^{\$}$ represents the nominal income, $I^{\$}$ is the nominal insurance premium, $c^{\$}_1$ denotes the nominal consumption for the breadwinner, $c^{\$}_2$ represents the nominal consumption for the rest of the family, and $\Sigma$ is the volatility matrix given by
\begin{equation*}
\Sigma = \left( \begin{matrix}
		[A_1(T_1)]^{\top}\Sigma_X \\
		[A_1(T_2)]^{\top}\Sigma_X \\
		[A^R_1(T_3)]^{\top}\Sigma_X + \sigma^{\top}_{\Pi} \\
		\sigma^{\top}_S
	\end{matrix}
	\right).
\end{equation*}
We introduce the real wealth $W^R_t = W_t/\Pi_t$, leading to the SDEs
\begin{equation}
    \left\{
    \begin{array}{ll}
    dW^R_t &= W^R_t [r_t + (\alpha^{\top}_t\Sigma-\sigma^{\top}_{\Pi})(\Lambda_t - \sigma_{\Pi}) ] dt + W^R_t (\alpha^{\top}_t \Sigma-\sigma^{\top}_{\Pi}) dZ_t + Y_tdt - I_t dt - c_{1,t}dt\\
    &-c_{2,t}dt,~~0\leq t< T_R \wedge T_x, \label{real_wealth_1}\\
    dW^R_t &= W^R_t [r_t + (\alpha^{\top}_t\Sigma-\sigma^{\top}_{\Pi})(\Lambda_t - \sigma_{\Pi}) ] dt + W^R_t (\alpha^{\top}_t \Sigma-\sigma^{\top}_{\Pi}) dZ_t - I_t dt - c_{1,t}dt-c_{2,t}dt,\\
    &~~~~T_R\leq t< T_x \wedge T,\\
    dW^R_t &= W^R_t[r_t + (\alpha^{\top}_t\Sigma-\sigma^{\top}_{\Pi})(\Lambda_t - \sigma_{\Pi})] dt + W^R_t (\alpha^{\top}_t \Sigma - \sigma^{\top}_{\Pi}) dZ_t - c_{2,t}dt, ~T_x \wedge T \leq t \leq T.
    \end{array}
    \right.
\end{equation}

In equation \eqref{real_wealth_1}, the initial condition is $W^R_0=Y_0$.  $Y$ is the breadwinner's real income satisfying 
\begin{equation}
    dY_t = Y_t g^R_t dt, ~0\leq t < T_R \wedge T_x,\label{income_process}
\end{equation}
where $g^R_t$ is a deterministic function. The real insurance premium is denoted by $I$, $c_1$ represents the real consumption for the breadwinner, and $c_2$ is the real consumption for the rest of the family. At the death time $T_x$
\begin{equation*}
    W_{T_x} = W_{T_x-} + \frac{I_{T_x}}{\lambda_{x+T_x}}.
\end{equation*}

\subsection{Preference}
Let $U_1$ and $U_2$ denote the consumption utility of the breadwinner and the rest of the family, respectively. The rest of the family typically includes the spouse and/or children, with the spouse being the more commonly considered member \citep[see][]{bernheim1991strong, inkmann2011deep}. These consumption utilities depend not only on real consumption $c$ but also on inflation $\Pi$, which captures the family's preference for nominal monetary value over real value. Inspired by \cite{huang2008portfolio}, \cite{huang2008portfolio2}, and \cite{kwak2011optimal}, we make the assumption that the rest of the family has a certain life expectancy. The family's objective is to determine values for $c_1$, $c_2$, $I$, and $\alpha$ that maximize the expectation

\begin{align}
   & \sup \limits_{\alpha,c_1,c_2,I} E\left[ \kappa_1 \int_0^{T_x \wedge T} e^{-\delta t} U_1(c_{1,t},\Pi_t)dt+ \kappa_2 \int_0^T e^{-\delta t}U_2(c_{2,t},\Pi_t)dt\right]\label{objective}\\
   &=  \sup \limits_{\alpha,c_1,c_2,I} E\left\{ \int_0^T {_tp_x} e^{-\delta t} \left[\kappa_1 U_1(c_{1,t},\Pi_t)\right.\right.\notag\\
   &\left.\left.+\kappa_2 U_2(c_{2,t},\Pi_t)+\kappa_2\lambda_{x+t}E_t\left[\int_t^T e^{-\delta(u-t)}U_2(c_{2,u},\Pi_u)du\right]\right]dt
   \right\}.\label{fix_horizon_objective}
\end{align}
Here, $E_t[\cdot] = E[\cdot|\mathcal{F}_t]$, $\kappa_1$ and $\kappa_2$ are non-negative weight parameters for consumption utilities that satisfy $\kappa_1+\kappa_2 = 1$, and $\delta \geq 0$ represents the utility discount factor. Equation \eqref{fix_horizon_objective} transforms a random horizon problem into a fixed horizon problem.

\section{Optimization problem}\label{optimization_problem}
\subsection{HJB equations and optimal strategies}
Based on three divisions of time intervals, we introduce the first value function as
\begin{equation*}
    \Phi_1(t,w^R,\pi,X) = \sup \limits_{\alpha,c_2} E_{t,w^R,\pi,X}\left[\int_t^T e^{-\delta (s-t)} U_2(c_{2,s},\Pi_s) ds\right],
\end{equation*}
for $ T_x \wedge T \leq t \leq T$, where $E_{t,w^R,\pi,X}[\cdot]$ is short for $E[\cdot|W^R_t = w^R, \Pi_t = \pi, X_t=X]$. The secondary value function is defined as
\begin{eqnarray*}
    &&\Phi_2(t,w^R,\pi,X) = \sup \limits_{\alpha,c_1,c_2,I} E_{t,w^R,\pi,X}\left[\int_t^T 
    {_{s-t}p_{x+t}} e^{-\delta(s-t)}\left[\kappa_1 U_1(c_{1,s},\Pi_s) + \kappa_2 U_2(c_{2,s},\Pi_s)\right.\right.\\
    &&\left.\left.+\kappa_2\lambda_{x+s}\Phi_1\left(s,W^R_s+\frac{I_s}{\lambda_{x+s}},\Pi_s,X_s\right)\right]ds\right], 
\end{eqnarray*}
for $ T_R \leq t \leq  T_x \wedge T $. Lastly, for $0\leq t \leq T_R$, we define a primary value function
\begin{small}
\begin{eqnarray}
    &&V(t,w^R,\pi,X) =\sup \limits_{\alpha,c_1,c_2,I}   E_{t,w^R,\pi,X}\left[ \int_t^{T_R} 
    {_{s-t}p_{x+t}} e^{-\delta(s-t)}\left[\kappa_1 U_1(c_{1,s},\Pi_s) + \kappa_2 U_2(c_{2,s},\Pi_s)\right.\right.\notag\\
    &&\left.\left.+\kappa_2\lambda_{x+s}\Phi_1\left(s,W^R_s+\frac{I_s}{\lambda_{x+s}},\Pi_s,X_s\right)\right]ds + e^{-\delta(T_R-t)}{_{T_R-t}p_{x+t}} \Phi_2\left(T_R,W^R_{T_R},\Pi_{T_R},X_{T_R}\right)\right].\label{primal_value_function}
\end{eqnarray}
\end{small}

By the dynamic programming principle, the first value function satisfies the HJB equation
\begin{eqnarray}
	\sup \limits_{\alpha,c_2} \left\{  U_2(c_2,\pi) - \delta \Phi_1(t,w^{R},\pi,X) + \mathcal{D}^{\alpha,c_2}\Phi_1(t,w^{R},\pi,X)\right\}=0,\label{first_HJB_1}
\end{eqnarray}
where $\mathcal{D}^{\alpha,c_2}$ is the infinitesimal generator given by
\begin{align*}
&\mathcal{D}^{\alpha,c_2}\Phi_1(t,w^{R},\pi,X)=\frac{\partial \Phi_1}{\partial t}  - \frac{\partial \Phi_1}{\partial X} K_X X + \frac{\partial \Phi_1}{\partial \pi}\pi\pi^{e}_t\\
&+ \frac{\partial \Phi_1}{\partial w^R} \{ w^R[r_t + (\alpha^{\top} \Sigma -\sigma^{\top}_{\Pi})(\Lambda_t-\sigma_{\Pi})] - c_2\} + \frac{1}{2}\frac{\partial^2 \Phi_1}{(\partial w^R)^2}(w^R)^2(\alpha^{\top} \Sigma -\sigma^{\top}_{\Pi})(\Sigma^{\top}\alpha  -\sigma_{\Pi}) \notag\\
&+ \frac{1}{2}\frac{\partial^2 \Phi_1}{\partial \pi^2}\pi^2\sigma^{\top}_{\Pi}\sigma_{\Pi} +\frac{1}{2} \text{Tr}\bigg( \Sigma^{\top}_X \frac{\partial^2 \Phi_1}{\partial X^{\top} \partial X } \Sigma \bigg) + w^{R}(\alpha^{\top}\Sigma - \sigma^{\top}_{\Pi})\Sigma^{\top}_X \frac{\partial^2 \Phi_1}{\partial w^{R} \partial X^{\top}} \\
&+ \frac{\partial^2 \Phi_1}{\partial w^{R} \partial \pi}w^{R} \pi(\alpha^{\top}\Sigma-\sigma^{\top}_{\Pi})\sigma_{\Pi} + \frac{\partial^2 \Phi_1}{\partial \pi \partial X} \Sigma_X\sigma_{\Pi} \pi.
\end{align*}
After solving it, the optimal consumption and trading strategy are given by
\begin{eqnarray}
    c^*_{2,t} &=& (U_2')^{-1}\left( \frac{\partial \Phi_1}{\partial w^R},\Pi_t\right),\label{first_opt_c2}\\
    \alpha^*_t &=& \frac{-(\Sigma^{\top})^{-1}}{W^R_t\frac{\partial^2\Phi_1}{(\partial w^R)^2}}\left[ \frac{\partial \Phi_1}{\partial w^R}(\Lambda_t-\sigma_{\Pi}) + \frac{\partial^2 \Phi_1}{\partial w^R\partial \pi}\Pi_t \sigma_{\Pi} + \Sigma^{\top}_X \frac{\partial^2 \Phi_1}{\partial w^R \partial X^{\top}} \right]+(\Sigma^{\top})^{-1}\sigma_{\Pi}.\label{first_opt_alpha}
\end{eqnarray}
The HJB equation for the secondary value function is subject to 
\begin{eqnarray}
 &&	\sup \limits_{\alpha,I,c_1,c_2} \left\{\kappa_1 U_1(c_1,\pi) + \kappa_2 U_2(c_2,\pi) +\kappa_2\lambda_{x+t}\Phi_1\left(t,w^R+\frac{I}{\lambda_{x+t}},\pi,X\right)\right.\notag\\
 &&\left. - (\lambda_{x+t}+\delta) \Phi_2(t,w^{R},\pi,X) + \mathcal{D}^{\alpha,I,c_1,c_2}\Phi_2(t,w^{R},\pi,X)\right\}=0,\label{secondary_HJB_1}
\end{eqnarray}
where $\mathcal{D}^{\alpha,I,c_1,c_2}$ is the infinitesimal generator given by
\begin{align*}
&\mathcal{D}^{\alpha,I,c_1,c_2}\Phi_2(t,w^{R},\pi,X)=\frac{\partial \Phi_2}{\partial t}  - \frac{\partial \Phi_2}{\partial X} K_X X + \frac{\partial \Phi_2}{\partial \pi}\pi\pi^{e}_t\\
&+ \frac{\partial \Phi_2}{\partial w^{R}} \{ w^{R}[r_t + (\alpha^{\top} \Sigma -\sigma^{\top}_{\Pi})(\Lambda_t-\sigma_{\Pi})] - I - c_1 - c_2\} \notag\\
& + \frac{1}{2}\frac{\partial^2 \Phi_2}{(\partial w^{R})^2}(w^{R})^2(\alpha^{\top} \Sigma -\sigma^{\top}_{\Pi})(\Sigma^{\top}\alpha  -\sigma_{\Pi}) + \frac{1}{2}\frac{\partial^2 \Phi_2}{\partial \pi^2}\pi^2\sigma^{\top}_{\Pi}\sigma_{\Pi}+\frac{1}{2} \text{Tr}\bigg( \Sigma^{\top}_X \frac{\partial^2 \Phi_2}{\partial X^{\top} \partial X } \Sigma \bigg) \notag\\
&  + w^{R}(\alpha^{\top}\Sigma - \sigma^{\top}_{\Pi})\Sigma^{\top}_X \frac{\partial^2 \Phi_2}{\partial w^{R} \partial X^{\top}} + \frac{\partial^2 \Phi_2}{\partial w^{R} \partial \pi}w^{R} \pi(\alpha^{\top}\Sigma-\sigma^{\top}_{\Pi})\sigma_{\Pi} + \frac{\partial^2 \Phi_2}{\partial \pi \partial X} \Sigma_X\sigma_{\Pi} \pi,
\end{align*}
and the optimal strategies are
\begin{eqnarray}
    c^*_{1,t} &=& (U_1')^{-1} \left(\frac{1}{\kappa_1} \frac{\partial \Phi_2}{\partial w^R},\Pi_t\right),\label{secondary_opt_c1}\\
    c^*_{2,t} &=& (U_2')^{-1} \left( \frac{1}{\kappa_2} \frac{\partial \Phi_2}{\partial w^R},\Pi_t\right),\label{secondary_opt_c2}\\
    \alpha^*_t &=& \frac{-(\Sigma^{\top})^{-1}}{w^R \frac{\partial^2 \Phi_2}{(\partial w^R)^2}}\left[\frac{\partial \Phi_2}{\partial w^R}(\Lambda_t - \sigma_{\Pi}) + \frac{\partial^2 \Phi_2}{\partial w^R \partial \pi}\Pi_t\sigma_{\Pi}+\Sigma^{\top}_X\frac{\partial^2 \Phi_2}{\partial w^R \partial X^{\top}}\right]+(\Sigma^{\top})^{-1}\sigma_{\Pi},\label{secondary_opt_alpha}\\
    I^*_t &=& \lambda_{x+t} \left[ \left(\frac{\partial \Phi_1}{\partial w^R}\right)^{-1}\left(t,\frac{1}{\kappa_2}\frac{\partial \Phi_2}{\partial w^R},\Pi_t,X_t\right)-W^R_t \right].\label{secondary_opt_I}
\end{eqnarray}
where $\frac{\partial \Phi_1}{\partial w}$ is the partial derivative with the second variable of $\Phi_1(t,w,\pi)$. 

For the primary value function, we need a special treatment for the income process $Y_t$. Following the approach in \cite{deelstra2003optimal}, we introduce the surplus process $W^{\widetilde{Y}}_t$ as
\begin{equation}\label{surplus_process}
	W^{\widetilde{Y}}_t = W^R_t + \widetilde{Y}(t,X_t),
\end{equation}
where $\widetilde{Y}(t,X_t)$ represents the time-$t$ value of (discounted) future income 
\begin{equation*}\label{adc}
	\widetilde{Y}(t,X_t) = \int_t^T {_{s-t}p_{x+t}}P^R(X_t,t,s)Y_s ds.
\end{equation*}
Applying Ito's formula, we can express the differential of $\widetilde{Y}(t,X_t)$ as
\begin{eqnarray}\label{tildeC_SDE1}
	d\widetilde{Y}(t,X_t) &=& -Y_t dt + (r_t+\lambda_{x+t})\widetilde{Y}(t,X_t)dt + \frac{\partial \widetilde{Y}(t,X_t)}{\partial X}\Sigma_X (\Lambda_t-\sigma_{\Pi})dt \notag\\ 
	&&+ \frac{\partial \widetilde{Y}(t,X_t)}{\partial X} \Sigma_X dZ_t.
\end{eqnarray}
Assuming the existence of a process $\xi_t$, we can rewrite the above SDE as
\begin{eqnarray}\label{tildeC_SDE2}
	d\widetilde{Y}(t,X_t) &=& -Y_t dt + \widetilde{Y}(t,X_t)[r_t + (\xi^{\top}_t \Sigma -\sigma^{\top}_{\Pi})(\Lambda_t - \sigma_{\Pi})] dt  + \mu_{x+t} \widetilde{Y}(t,X_t) dt \notag\\
	&&+ \widetilde{Y}(t,X_t)(\xi^{\top}_t\Sigma - \sigma^{\top}_{\Pi})dZ_t,
\end{eqnarray}
which allows us to derive the $\xi$ by comparing the relevant terms in \eqref{tildeC_SDE1} and \eqref{tildeC_SDE2}
\begin{equation*}
	\xi_t = \frac{1}{\widetilde{Y}(t,X_t)} (\Sigma^{\top})^{-1}\Sigma^{\top}_X \frac{\partial \widetilde{Y}(t,X_t)}{\partial X^{\top}} + (\Sigma^{\top})^{-1}\sigma_{\Pi}.\label{xi_t}
\end{equation*}
Furthermore, we derive the SDE for the surplus process by combining \eqref{real_wealth_1} and \eqref{tildeC_SDE2}
\begin{eqnarray}
	dW^{\widetilde{Y}}_t &=& dW^R_t + d\widetilde{Y}(t,X_t) \notag\\
	&=& W^{\widetilde{Y}}_t \{r_t + (\beta^{\top}_t \Sigma - \sigma^{\top}_{\Pi})(\Lambda_t - \sigma_{\Pi})\}dt + W^{\widetilde{Y}}_t(\beta^{\top}_t\Sigma - \sigma^{\top}_{\Pi})dZ_t\notag\\
	&& + \mu_{x+t} \widetilde{Y}(t,X_t) dt - I_t dt  - c_{1,t}dt - c_{2,t}dt,\label{WY_1}
\end{eqnarray}
where $0\leq t < T \wedge T_x$ and $\beta^{\top}_t = [W^R_t \alpha^{\top}_t + \widetilde{Y}(t,X_t)\xi^{\top}_t]/W^{\widetilde{Y}}_t$. The SDE \eqref{WY_1} represents the investment in the financial market and the purchase of life insurance with premium $ - \mu_{x+t} \widetilde{Y}(t, X_t) + I_t$. When the breadwinner dies before retirement, the surplus process has the following jump
\begin{equation*}\label{death_benefit}
	W^{\widetilde{Y}}_t = W^{\widetilde{Y}}_{t-}-\widetilde{Y}(t,X_t) + \frac{I_t}{\mu_{x+t}},~\text{if}~T_x=t<T. 
\end{equation*} 
Then, by definition \eqref{surplus_process}, and given that $W^{\widetilde{Y}}_{T_R}=W^R_{T_R}$ at $T_R$, the objective function \eqref{primal_value_function} can be transformed to
\begin{small}
\begin{eqnarray*}
 &&V(t,w^{\widetilde{Y}},\pi,X) = \sup \limits_{\beta,c_1,c_2,I}  E_{t,w^{\widetilde{Y}},\pi,X}\left[ \int_t^{T_R} 
    {_{s-t}p_{x+t}} e^{-\delta(s-t)}\left[\kappa_1 U_1(c_{1,s},\Pi_s) + \kappa_2 U_2(c_{2,s},\Pi_s)+\kappa_2\right.\right.\notag\\
&&\left.\left. \lambda_{x+s}\Phi_1\left(s,W^{\widetilde{Y}}_t - \widetilde{Y}(t,X_t) + \frac{I_t}{\mu_{x+t}},\Pi_s,X_s\right)\right]ds + e^{-\delta(T_R-t)}{_{T_R-t}p_{x+t}} \Phi_2\left(T_R,W^{\widetilde{Y}}_{T_R},\Pi_{T_R},X_{T_R}\right)\right],
\end{eqnarray*}
\end{small}
where $E_{t,w^{\widetilde{Y}},\pi,X}[\cdot]$ is short for $E[\cdot|W^{\widetilde{Y}}_t = w^{\widetilde{Y}}, \Pi_t = \pi, X_t=X]$.
Moreover, the corresponding HJB equation is
\begin{eqnarray}
	\sup \limits_{\beta,I,c_1,c_2} \left\{ \kappa_1 U_1(c_1,\pi) + \kappa_2 U_2(c_2,\pi) +\kappa_2 \lambda_{x+t} \Phi_1\left(t,w^{\widetilde{Y}}-\widetilde{Y}(t,X)+\frac{I}{\lambda_{x+t}},\pi,X\right)\right.\notag\\
    \left.- (\lambda_{x+t}+\delta) V(t,w^{\widetilde{Y}},\pi,X) + \mathcal{D}^{\beta,I,c_1,c_2}V(t,w^{\widetilde{Y}},\pi,X)\right\}=0,\label{primary_adj_HJB_1}
\end{eqnarray}
where $\mathcal{D}^{\beta,I,c_1,c_2}$ is the infinitesimal generator given by
\begin{align}
&\mathcal{D}^{\beta,I,c_1,c_2}V(t,w^{\widetilde{Y}},\pi,X)=\frac{\partial V}{\partial t}  - \frac{\partial V}{\partial X} K_X X + \frac{\partial V}{\partial \pi}\pi\pi^{e}_t \notag\\
&+ \frac{\partial V}{\partial w^{\widetilde{Y}}} \{ w^{\widetilde{Y}}[r_t + (\beta^{\top} \Sigma -\sigma^{\top}_{\Pi})(\Lambda_t-\sigma_{\Pi})]+ \lambda_{x+t}\widetilde{Y}(t,X) - I - c_1 - c_2\} \notag\\
& + \frac{1}{2}\frac{\partial^2 V}{(\partial w^{\widetilde{C}})^2}(w^{\widetilde{C}})^2(\beta^{\top} \Sigma -\sigma^{\top}_{\Pi})(\Sigma^{\top}\beta  -\sigma_{\Pi}) + \frac{1}{2}\frac{\partial^2 V}{\partial \pi^2}\pi^2\sigma^{\top}_{\Pi}\sigma_{\Pi} +\frac{1}{2} \text{Tr}\bigg( \Sigma^{\top}_X \frac{\partial^2 V}{\partial X^{\top} \partial X } \Sigma \bigg)\notag\\
& + w^{\widetilde{Y}}(\beta^{\top}_t\Sigma - \sigma^{\top}_{\Pi})\Sigma^{\top}_X \frac{\partial^2 V}{\partial w^{\widetilde{Y}} \partial X^{\top}}  + \frac{\partial^2 V}{\partial w^{\widetilde{Y}} \partial \pi}w^{\widetilde{Y}} \pi(\beta^{\top}\Sigma-\sigma^{\top}_{\Pi})\sigma_{\Pi} + \frac{\partial^2 V}{\partial \pi \partial X} \Sigma_X\sigma_{\Pi} \pi, \label{infi_generator_primal}
\end{align}
and the optimal strategies are
\begin{eqnarray}
    c^*_{1,t} &=& (U_1')^{-1} \left( \frac{1}{\kappa_1}\frac{\partial V}{\partial w^{\widetilde{Y}}},\Pi_t\right),\label{primary_opt_c1_1}\\
    c^*_{2,t} &=& (U_2')^{-1} \left( \frac{1}{\kappa_2}\frac{\partial V}{\partial w^{\widetilde{Y}}},\Pi_t\right),\label{primary_opt_c2_1}\\
    \beta^*_t &=& \frac{-(\Sigma^{\top})^{-1}}{w^{\widetilde{Y}} \frac{\partial^2 V}{(\partial w^{\widetilde{Y}})^2}}\left[\frac{\partial V}{\partial w^{\widetilde{Y}}}(\Lambda_t - \sigma_{\Pi}) + \frac{\partial^2 V}{\partial w^{\widetilde{Y}} \partial \pi}\Pi_t\sigma_{\Pi}+\Sigma^{\top}_X\frac{\partial^2 V}{\partial w^{\widetilde{Y}} \partial X^{\top}}\right]+(\Sigma^{\top})^{-1}\sigma_{\Pi},\label{primary_opt_beta_1}\\
    I^*_t &=& \lambda_{x+t} \left(\frac{\partial \Phi_1}{\partial w}\right)^{-1}\left(t,\frac{1}{\kappa_2}\frac{\partial V}{\partial w^{\widetilde{Y}}},\Pi_t,X_t\right)-\lambda_{x+t}W^{\widetilde{Y}}_t+\lambda_{x+t}\widetilde{Y}(t,X_t) .\label{primary_opt_I_1}
\end{eqnarray}
where $\frac{\partial \Phi_1}{\partial w}$ is the partial derivative with the second variable of $\Phi_1(t,w,\pi)$. 

\subsection{Explicit solutions under CRRA utility}
Inspired by \cite{miao2013economic}, \cite{wei2023optimal}, and \cite{donnelly2024money}, we restrict consumption utilities to the following form
\begin{equation*}
    U_i(c_t,\Pi_t) = \frac{1}{1-\gamma_i} [c^{1-\theta_i}_t(\Pi_tc_t)^{\theta_i}]^{1-\gamma_i}, ~i = 1,2,
\end{equation*}
where $c_t$ and $\Pi_t c_t$ represent the real and nominal consumptions, respectively, and $\theta_i \in [0,1]$ represents the degree of money illusion. For tractability, we assume \( \gamma_1 = \gamma_2 = \gamma \) and \( \theta_1 = \theta_2 = \theta \) to obtain explicit solutions. When $\theta = 0$, the breadwinner is non-illusioned and only considers real consumption. When $\theta = 1$, the breadwinner is fully-illusioned and focuses solely on nominal consumption. As $\theta$ increases from 0 to 1, the breadwinner increasingly values the nominal value and ignores inflation risk.

Under this utility specification, we derive explicit solutions for the HJB equations \eqref{first_HJB_1}-\eqref{primary_adj_HJB_1}. These explicit solutions provide valuable insights into optimal strategies considering varying degrees of money illusion.

\begin{proposition}\label{prop1}
The candidate solution to HJB \eqref{first_HJB_1} is given by
\begin{eqnarray*}
    G_1(t,W^R_t,\Pi_t,X_t) &=& \frac{1}{1-\gamma} (W^R_t)^{1-\gamma} \Pi^{\theta(1-\gamma)}_t f_1(t,X_t)^{\gamma},
\end{eqnarray*}
where 
\begin{eqnarray*}
    f(X_t,\tau) &=& \exp\left( \Gamma_0(\tau)+\Gamma^{\top}_1(\tau)X_t + \frac{1}{2}X^{\top}_t\Gamma_2(\tau)X_t\right),\\
    f_1(t,X_t)  &=& \int_t^T e^{-\frac{\delta}{\gamma}(s-t)} f(X_t,s-t)ds,
\end{eqnarray*}
Functions $\Gamma_0(\tau) \in \mathbb{R}$, $\Gamma_1(\tau) \in \mathbb{R}^2$, and $\Gamma_2(\tau) \in \mathbb{R}^2 \times \mathbb{R}^2$ follow the ODE system
\begin{eqnarray}
   &&\frac{\partial \Gamma_2(\tau)}{\partial \tau}  - \Gamma_2(\tau) Z_2 \Gamma_2(\tau)  - Z^{\top}_1\Gamma_2(\tau)- \Gamma_2(\tau) Z_1 - Z_0=0,  ~\Gamma_2(0)=0,\label{Gamma2_tau}\\
   &&\frac{\partial \Gamma_1(\tau)}{\partial \tau}  - \Gamma_2(\tau)B_2\Gamma_1(\tau) - \Gamma_2(\tau)B_{11} - B_{12}\Gamma_1(\tau) - B_0=0, ~\Gamma_1(0)=0,\label{Gamma1_tau}\\
   &&\frac{\partial \Gamma_0(\tau)}{\partial \tau} - \Gamma^{\top}_1(\tau) D_2 \Gamma_1(\tau) - \Gamma^{\top}_1(\tau)D_1 - \frac{1}{2} \text{Tr} \{ \Sigma^{\top}_X \Gamma_2(\tau) \Sigma_X \} - D_0=0, ~\Gamma_0(0)=0,\label{Gamma0_tau}
\end{eqnarray}
in which
\begin{eqnarray*}
  &&Z_2 = \Sigma_X \Sigma^{\top}_X, Z_1 = \frac{1-\gamma}{\gamma}  \Sigma_X \Lambda_1 - K_X, Z_0 = \frac{1-\gamma}{\gamma^2}\Lambda^{\top}_1\Lambda_1,\\
  &&B_2 = Z_2, B_{11} =  \Sigma_X \left[\frac{1-\gamma}{\gamma}(\Lambda_0 - \sigma_{\Pi})+\theta \frac{1-\gamma}{\gamma}\sigma_{\Pi}\right], B_{12} = Z^{\top}_1, \\
  &&B_0 = \frac{1-\gamma}{\gamma}e_1 + \frac{1-\gamma}{\gamma}\theta e_2 + \frac{1-\gamma}{\gamma^2} \Lambda^{\top}_1(\Lambda_0 - \sigma_{\Pi}) + \theta \left( \frac{1-\gamma}{\gamma} \right)^2 \Lambda^{\top}_1 \sigma_{\Pi} ,\\
  &&D_2 = \frac{1}{2}Z_2, D_1 = B_{11}, D_0 = \frac{1-\gamma}{\gamma} (\delta_r+\theta \delta_{\pi}) + \frac{1-\gamma}{2\gamma^2}(\Lambda^{\top}_0-\sigma^{\top}_{\Pi})(\Lambda_0-\sigma_{\Pi})\\
  &&+\frac{1}{2}\theta \frac{1-\gamma}{\gamma} \left( \theta \frac{1-\gamma}{\gamma} -1 \right)\sigma^{\top}_{\Pi}\sigma_{\Pi} + \theta \left(\frac{1-\gamma}{\gamma}\right)^2(\Lambda^{\top}_0-\sigma^{\top}_{\Pi})\sigma_{\Pi}.
\end{eqnarray*}
The candidate strategies are given by
\begin{eqnarray*}
   &&c^*_{2,t} = W^R_t f_1(t,X_t)^{-1}, \\
   &&\alpha^*_t = \frac{1}{\gamma} (\Sigma^{\top})^{-1} \left[(\Lambda_t - \sigma_{\Pi}) + \theta(1-\gamma)\sigma_{\Pi} + \Sigma^{\top}_X\frac{\gamma}{f_1(t,X_t)}\frac{\partial f_1(t,X_t)}{\partial X^{\top}}\right]+(\Sigma^{\top})^{-1}\sigma_{\Pi}.
\end{eqnarray*}
\end{proposition}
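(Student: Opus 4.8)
The plan is to verify directly that the proposed $G_1$ solves the HJB equation \eqref{first_HJB_1} and that the displayed feedback controls are its maximizers. I would begin from the separable ansatz
\[
G_1(t,w^R,\pi,X) = \frac{1}{1-\gamma}(w^R)^{1-\gamma}\pi^{\theta(1-\gamma)}g(t,X),
\]
which is dictated by the CRRA-with-money-illusion specification $U_2(c,\pi)=\frac{1}{1-\gamma}c^{1-\gamma}\pi^{\theta(1-\gamma)}$: homogeneity of degree $1-\gamma$ in $c$ forces the same homogeneity in $w^R$, and the factor $\pi^{\theta(1-\gamma)}$ is inherited from the inflation argument. The unknown is $g(t,X)$, and the entire claim reduces to showing $g=f_1^{\gamma}$ with $f_1$ as stated.

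First I would compute the partials of $G_1$ and substitute them into the first-order conditions \eqref{first_opt_c2}--\eqref{first_opt_alpha}. Homogeneity makes the relevant ratios collapse: $\partial_{w^R}G_1/\big(w^R\,\partial^2_{w^R}G_1\big)=-1/\gamma$, $\partial^2_{w^R\pi}G_1/\big(w^R\,\partial^2_{w^R}G_1\big)=-\theta(1-\gamma)/(\gamma\pi)$, and $\Sigma^{\top}_X\,\partial^2_{w^R X^{\top}}G_1/\big(w^R\,\partial^2_{w^R}G_1\big)=-\Sigma^{\top}_X\,\partial_{X^{\top}}g/(\gamma g)$. Inverting $U_2'$ gives $c^*_2=w^R g^{-1/\gamma}$, and the three bracketed terms in \eqref{first_opt_alpha} reassemble into the stated $\alpha^*$ once one uses $\partial_{X^{\top}}g/g=\gamma\,\partial_{X^{\top}}f_1/f_1$ under $g=f_1^{\gamma}$. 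Matching $c^*_2=w^R g^{-1/\gamma}$ with the claimed $c^*_{2,t}=W^R_t f_1^{-1}$ then pins down $g=f_1^{\gamma}$.

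The substantive step is to insert these maximizers back into \eqref{first_HJB_1} through the generator $\mathcal{D}^{\alpha,c_2}$. After dividing by the common factor $\frac{1}{1-\gamma}(w^R)^{1-\gamma}\pi^{\theta(1-\gamma)}$, the wealth and inflation variables drop out and one is left with a scalar equation for $g(t,X)$; the power substitution $g=f_1^{\gamma}$ linearizes it (the quadratic-in-control terms combine with the $1/\gamma$ factors so that, as in the complete-market CRRA setting, the nonlinear dependence on $f_1$ cancels), producing
\[
\frac{\partial f_1}{\partial t} + \widehat{\mathcal{L}} f_1 - \frac{\delta}{\gamma} f_1 + 1 = 0, \qquad f_1(T,X)=0,
\]
where $\widehat{\mathcal{L}}$ is a second-order operator in $X$ with affine drift, diffusion $\frac12\text{Tr}(\Sigma^{\top}_X\,\partial^2_{XX}\cdot\,\Sigma_X)$, and an affine-in-$X$ zeroth-order coefficient built from $r_t$, $\pi^e_t$, and $\Lambda_t=\Lambda_0+\Lambda_1 X$; the ``$+1$'' is exactly the running utility evaluated at $c^*_2=w^R f_1^{-1}$ combined with the consumption drift term. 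I would then solve this linear problem by Feynman--Kac: writing $f_1(t,X)=\int_t^T e^{-\frac{\delta}{\gamma}(s-t)}f(X,s-t)\,ds$ and differentiating under the integral sign, one checks that $f_1$ satisfies the displayed PDE precisely when $f(X,\tau)$ solves $\partial_\tau f=\widehat{\mathcal{L}}f$ with $f(X,0)=1$ (the lower limit contributes the ``$+1$'', the upper limit gives $f_1(T,X)=0$). Substituting $f=\exp(\Gamma_0+\Gamma^{\top}_1 X+\frac12 X^{\top}\Gamma_2 X)$ and dividing by $f$ turns this into a polynomial identity in $X$; since $\partial^2_{XX}f/f=\Gamma_2+(\Gamma_1+\Gamma_2 X)(\Gamma_1+\Gamma_2 X)^{\top}$, matching the quadratic, linear, and constant parts yields the matrix Riccati ODE \eqref{Gamma2_tau}, the linear ODE \eqref{Gamma1_tau}, and the quadrature \eqref{Gamma0_tau}, with $\Gamma_i(0)=0$ coming from $f(X,0)=1$.

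The main obstacle I anticipate is the bookkeeping in the reduction step: expanding the completed-square form $(\alpha^{*\top}\Sigma-\sigma^{\top}_{\Pi})(\Sigma^{\top}\alpha^*-\sigma_{\Pi})$ together with all the cross terms involving $\partial^2_{w^R\pi}$, $\partial^2_{w^R X}$, and $\partial^2_{\pi X}$, and checking that after the $g=f_1^{\gamma}$ substitution every contribution organizes into the stated coefficients $Z_i$, $B_i$, $D_i$. Isolating the quadratic-in-$X$ part so as to recover the Riccati term $\Gamma_2 Z_2 \Gamma_2$ (which originates from the outer-product piece of $\partial^2_{XX}f/f$, not from any nonlinearity in the $f_1$-equation itself) is the most delicate accounting, and confirming the exact cancellation that renders the $f_1$-PDE linear is where I would spend the most care.
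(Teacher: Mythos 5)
Your proposal is correct and follows essentially the same route as the paper: a direct verification that the exponential--affine ansatz solves the HJB equation, with the candidate controls read off from the first-order conditions. The only organizational difference is that the paper verifies Proposition \ref{prop3} by brute-force substitution of the partial derivatives of $G$ and then obtains Proposition \ref{prop1} as the special case $\widetilde{Y}\equiv 0$, $\kappa_1=0$, $\kappa_2=1$, whereas you verify $G_1$ directly and make explicit the intermediate structure (the power transform $g=f_1^{\gamma}$ linearizing the reduced equation, the Duhamel representation of $f_1$, and the quadratic/linear/constant matching that produces \eqref{Gamma2_tau}--\eqref{Gamma0_tau}) that the paper leaves implicit.
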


\begin{proposition}\label{prop2}
The candidate solution to HJB \eqref{secondary_HJB_1} is given by
\begin{eqnarray*}
    &&G_2(t,W^R_t,\Pi_t,X_t) = \frac{1}{1-\gamma} (W^R_t)^{1-\gamma} \Pi^{\theta(1-\gamma)}_t f_2(t,X_t)^{\gamma},
\end{eqnarray*}
where 
\begin{eqnarray*}
    f(X_t,\tau) &=& \exp\left( \Gamma_0(\tau)+\Gamma^{\top}_1(\tau)X_t + \frac{1}{2}X^{\top}_t\Gamma_2(\tau)X_t\right),\\
    f_1(t,X_t)  &=& \int_t^T e^{-\frac{\delta}{\gamma}(s-t)} f(X_t,s-t)ds,\\
    f_2(t,X_t) &=& \kappa_1^{\frac{1}{\gamma}}\int_t^{T}{_{s-t}p_{x+t}} e^{-\frac{\delta}{\gamma}(s-t)}f(X_t,s-t)ds + \kappa_2^{\frac{1}{\gamma}}f_1(t,X_t).
\end{eqnarray*}
Functions $\Gamma_0(\tau) \in \mathbb{R}$, $\Gamma_1(\tau) \in \mathbb{R}^2$, and $\Gamma_2(\tau) \in \mathbb{R}^2 \times \mathbb{R}^2$ follow the ODE system \eqref{Gamma2_tau}-\eqref{Gamma0_tau}. The optimal strategies are given by
\begin{eqnarray*}
   c^*_{1,t} &=& \kappa^{\frac{1}{\gamma}}_1 W^R_t f_2(t,X_t)^{-1}, \\
   c^*_{2,t} &=& \kappa^{\frac{1}{\gamma}}_2 W^R_t f_2(t,X_t)^{-1}, \\
   \alpha^*_t &=& \frac{1}{\gamma}(\Sigma^{\top})^{-1} \left[(\Lambda_t-\sigma_{\Pi}) + \theta(1-\gamma)\sigma_{\Pi}+\Sigma^{\top}_X\frac{\gamma}{f_2(t,X_t)}\frac{\partial f_2(t,X_t)}{\partial X^{\top}}\right]+(\Sigma^{\top})^{-1}\sigma_{\Pi},\\
   I^*_t &=& \lambda_{x+t} W^R_t\left[\kappa^{\frac{1}{\gamma}}_2\frac{f_1(t,X_t)}{f_2(t,X_t)}-1\right].
\end{eqnarray*}
\end{proposition}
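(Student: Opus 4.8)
The plan is to verify the proposed $G_2$ by direct substitution into the HJB equation \eqref{secondary_HJB_1}, reusing the machinery already developed for Proposition \ref{prop1}. Since the utility is CRRA and $G_1$ was multiplicatively separable, I would posit the same ansatz $G_2(t,w^R,\pi,X)=\frac{1}{1-\gamma}(w^R)^{1-\gamma}\pi^{\theta(1-\gamma)}f_2(t,X)^{\gamma}$, so that the wealth and inflation dependence factor out and the problem collapses to a scalar equation for $f_2$. The exponential-quadratic kernel $f(X,\tau)=\exp(\Gamma_0(\tau)+\Gamma_1(\tau)^\top X+\tfrac12 X^\top\Gamma_2(\tau)X)$ and its Riccati system \eqref{Gamma2_tau}--\eqref{Gamma0_tau} are inherited verbatim; the genuinely new ingredients are the weights $\kappa_1,\kappa_2$, the hazard rate $\lambda_{x+t}$, and the coupling to the post-death value through $\Phi_1$.

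First I would read off the candidate controls. Differentiating the ansatz gives $\partial_{w^R}G_2=(w^R)^{-\gamma}\pi^{\theta(1-\gamma)}f_2^{\gamma}$, and feeding the derivatives into the first-order conditions \eqref{secondary_opt_c1}--\eqref{secondary_opt_alpha} yields $c_1^{*}=\kappa_1^{1/\gamma}w^R f_2^{-1}$, $c_2^{*}=\kappa_2^{1/\gamma}w^R f_2^{-1}$, and the stated $\alpha^{*}$, where $\alpha^{*}$ is precisely the choice that cancels the quadratic gradient term $(\partial_X f_2)^2$ and thereby keeps the reduced equation linear. For the insurance control I would exploit that $\Phi_1=G_1$ is known explicitly from Proposition \ref{prop1}: inverting $\partial_{w}\Phi_1(t,w,\pi,X)=w^{-\gamma}\pi^{\theta(1-\gamma)}f_1^{\gamma}$ in its wealth slot turns \eqref{secondary_opt_I} into $I^{*}=\lambda_{x+t}w^R(\kappa_2^{1/\gamma}f_1/f_2-1)$, so the post-death wealth collapses to $\kappa_2^{1/\gamma}w^R f_1/f_2$.

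Next I would substitute the ansatz and these optimizers back into \eqref{secondary_HJB_1}. After evaluating the two CRRA utilities and $\Phi_1$ at their optimizers, every term carries the common factor $\tfrac{1}{1-\gamma}(w^R)^{1-\gamma}\pi^{\theta(1-\gamma)}$, leaving a mixture of pieces proportional to $f_2^{\gamma}$ and to $f_2^{\gamma-1}$; dividing through by $\gamma f_2^{\gamma-1}$ should reduce everything to the linear PDE $\partial_t f_2+\mathcal{M}f_2-(\tfrac{\delta}{\gamma}+\lambda_{x+t})f_2+\kappa_1^{1/\gamma}+\kappa_2^{1/\gamma}+\kappa_2^{1/\gamma}\lambda_{x+t}f_1=0$, where $\mathcal{M}$ is the same spatial operator as in Proposition \ref{prop1}. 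The delicate accounting — which I expect to be the main obstacle — concerns the two appearances of $\lambda_{x+t}$: the bequest term $\kappa_2\lambda_{x+t}\Phi_1$ sits at order $f_2^{\gamma-1}$, whereas the premium outflow $-I^{*}\partial_{w^R}G_2$ splits into an order-$f_2^{\gamma-1}$ part and an order-$f_2^{\gamma}$ part. One must check that the order-$f_2^{\gamma}$ part of the premium exactly converts the naive discount $\tfrac{\lambda_{x+t}+\delta}{\gamma}$ arising from $-(\lambda_{x+t}+\delta)\Phi_2$ into the unscaled hazard discount $\tfrac{\delta}{\gamma}+\lambda_{x+t}$, while the two order-$f_2^{\gamma-1}$ contributions merge into the single clean source $\kappa_2^{1/\gamma}\lambda_{x+t}f_1$.

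Finally I would verify that $f_2=\kappa_1^{1/\gamma}g+\kappa_2^{1/\gamma}f_1$, with $g(t,X)=\int_t^T {}_{s-t}p_{x+t}\,e^{-\frac{\delta}{\gamma}(s-t)}f(X,s-t)\,ds$, solves this PDE. Differentiating $g$ under the integral sign and invoking the survival identity $\partial_t\,{}_{s-t}p_{x+t}=\lambda_{x+t}\,{}_{s-t}p_{x+t}$ together with $f(X,0)=1$ shows that $g$ obeys the PDE with discount $\tfrac{\delta}{\gamma}+\lambda_{x+t}$ and unit source, whereas $f_1$ obeys its mortality-free analogue; by linearity the $g$-piece supplies the full hazard discount and the source $\kappa_1^{1/\gamma}$, and the $f_1$-piece supplies $\kappa_2^{1/\gamma}$ together with the residual $\kappa_2^{1/\gamma}\lambda_{x+t}f_1$ demanded by the insurance term. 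In every integrand the surviving $X$-dependence reduces to the requirement $\partial_\tau f=\mathcal{M}f$, i.e. exactly the Riccati system \eqref{Gamma2_tau}--\eqref{Gamma0_tau} already established in Proposition \ref{prop1}, and the terminal condition $f(X,0)=1$ forces $f_2(T,X)=0$, so that $G_2$ matches the post-death value at $T$. This closes the verification that $G_2$ is a candidate solution delivering the stated optimal strategies.
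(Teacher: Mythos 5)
Your proposal is correct and follows essentially the same route as the paper: posit the multiplicative ansatz $\tfrac{1}{1-\gamma}(w^R)^{1-\gamma}\pi^{\theta(1-\gamma)}f_2^{\gamma}$, read the controls off the first-order conditions \eqref{secondary_opt_c1}--\eqref{secondary_opt_I} using the explicit $\Phi_1=G_1$, and verify by substitution into \eqref{secondary_HJB_1}. The only organizational difference is that the paper verifies the more general Proposition \ref{prop3} and obtains Proposition \ref{prop2} as the special case $\widetilde{Y}\equiv 0$, whereas you verify \eqref{secondary_HJB_1} directly and, usefully, make explicit two steps the paper leaves implicit: the reduction to the linear source equation $\partial_t f_2+\mathcal{M}f_2-(\tfrac{\delta}{\gamma}+\lambda_{x+t})f_2+\kappa_1^{1/\gamma}+\kappa_2^{1/\gamma}+\kappa_2^{1/\gamma}\lambda_{x+t}f_1=0$ (your bookkeeping of the two $\lambda_{x+t}$ contributions and of the conversion of $-\tfrac{\lambda+\delta}{1-\gamma}f_2^{\gamma}+\lambda f_2^{\gamma}$ into the discount $\tfrac{\delta}{\gamma}+\lambda$ checks out), and the Feynman--Kac-type verification that the stated integral formula for $f_2$ solves it via $\partial_t\,{}_{s-t}p_{x+t}=\lambda_{x+t}\,{}_{s-t}p_{x+t}$ and $f(X,0)=1$.
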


\begin{proposition}\label{prop3}
The candidate solution to HJB \eqref{primary_adj_HJB_1} is given by
\begin{equation}
    G(t,W^{\widetilde{Y}}_t,\Pi_t,X_t) = \frac{1}{1-\gamma} (W^{\widetilde{Y}}_t)^{1-\gamma} \Pi^{\theta(1-\gamma)}_t f_2(t,X_t)^{\gamma},\label{adj_G}
\end{equation}
where 
\begin{eqnarray*}
    f(X_t,\tau) &=& \exp\left( \Gamma_0(\tau)+\Gamma^{\top}_1(\tau)X_t + \frac{1}{2}X^{\top}_t\Gamma_2(\tau)X_t\right),\\
    f_1(t,X_t)  &=& \int_t^T e^{-\frac{\delta}{\gamma}(s-t)} f(X_t,s-t)ds,\\
    f_2(t,X_t) &=& \kappa_1^{\frac{1}{\gamma}}\int_t^{T}{_{s-t}p_{x+t}} e^{-\frac{\delta}{\gamma}(s-t)}f(X_t,s-t)ds + \kappa_2^{\frac{1}{\gamma}}f_1(t,X_t),
\end{eqnarray*}
Functions $\Gamma_0(\tau) \in \mathbb{R}$, $\Gamma_1(\tau) \in \mathbb{R}^2$, and $\Gamma_2(\tau) \in \mathbb{R}^2 \times \mathbb{R}^2$ follow the ODE system \eqref{Gamma2_tau}-\eqref{Gamma0_tau}. The optimal strategies are given by
\begin{eqnarray}
   &&c^*_{1,t} = \kappa^{\frac{1}{\gamma}}_1 W^{\widetilde{Y}}_t f_2(t,X_t)^{-1}, \label{primary_opt_c1_2}\\
   &&c^*_{2,t} =\kappa^{\frac{1}{\gamma}}_2  W^{\widetilde{Y}}_t  f_2(t,X_t)^{-1}, \label{primary_opt_c2_2}\\
   &&\beta^*_t = \frac{(\Sigma^{\top})^{-1}}{\gamma}\left[\Lambda_t -\sigma_{\Pi} + \theta(1-\gamma)\sigma_{\Pi}+ \Sigma^{\top}_X \frac{\gamma}{f_2(t,X_t)}\frac{\partial f_2(t,X_t)}{\partial X^{\top}}\right] + (\Sigma^{\top})^{-1}\sigma_{\Pi}\label{primary_opt_beta_2}\\
   &&I^*_t = \lambda_{x+t} \left\{\kappa^{\frac{1}{\gamma}}_2W^{\widetilde{Y}}_t\frac{f_1(t,X_t)}{f_2(t,X_t)}-W^R_t\right\}.\label{primary_opt_I_2}
\end{eqnarray}
Since $\beta^{\top}_t = [W^R_t \alpha^{\top}_t + \widetilde{Y}(t,X_t)\xi^{\top}_t]/W^{\widetilde{Y}}_t$, the corresponding trading strategy is 
\begin{align*}
   &\alpha^*_t = \frac{W^R_t + \widetilde{Y}(t,X_t)}{\gamma W^R_t}(\Sigma^{\top})^{-1} \left\{(\Lambda_t-\sigma_{\Pi}) + \theta(1-\gamma)\sigma_{\Pi}\right.\\
   &\left.+\Sigma^{\top}_X \left[-\gamma[W^R_t+\widetilde{Y}(t,X_t)]^{-1} \frac{\partial \widetilde{Y}(t,X_t)}{\partial X^{\top}}+\frac{\gamma}{f_2(t,X_t)}\frac{\partial f_2(t,X_t)}{\partial X^{\top}}\right]\right\}+(\Sigma^{\top})^{-1}\sigma_{\Pi}.
\end{align*}
\end{proposition}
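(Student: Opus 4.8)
The plan is to proceed by guess-and-verify, exploiting the fact that after the surplus transformation \eqref{surplus_process} the primary problem has the same structure as the secondary problem of Proposition \ref{prop2}, with the real wealth $W^R$ replaced by the surplus $W^{\widetilde{Y}}$ and with the extra mortality-credit drift $\lambda_{x+t}\widetilde{Y}(t,X)$ appearing in \eqref{WY_1}. Accordingly I posit the separable ansatz \eqref{adj_G}, $G=\frac{1}{1-\gamma}(W^{\widetilde{Y}})^{1-\gamma}\Pi^{\theta(1-\gamma)}f_2(t,X)^{\gamma}$, and show it solves \eqref{primary_adj_HJB_1} provided $\Gamma_0,\Gamma_1,\Gamma_2$ obey \eqref{Gamma2_tau}--\eqref{Gamma0_tau}.

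First I would record the partial derivatives of $G$. Since the CRRA-with-illusion felicity collapses to $U_i(c,\pi)=\frac{1}{1-\gamma}c^{1-\gamma}\pi^{\theta(1-\gamma)}$, its marginal-utility inverse is $(U_i')^{-1}(y,\pi)=y^{-1/\gamma}\pi^{\theta(1-\gamma)/\gamma}$. Substituting $\partial G/\partial w^{\widetilde{Y}}=(W^{\widetilde{Y}})^{-\gamma}\Pi^{\theta(1-\gamma)}f_2^{\gamma}$ into the general first-order conditions \eqref{primary_opt_c1_1}--\eqref{primary_opt_c2_1} and the investment rule \eqref{primary_opt_beta_1} gives the closed forms \eqref{primary_opt_c1_2}--\eqref{primary_opt_beta_2}; the $\Pi$-powers cancel, leaving the wealth-proportional consumption rules and the affine portfolio rule. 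For the insurance control I would combine \eqref{primary_opt_I_1} with the inverse of $\partial G_1/\partial w$ from Proposition \ref{prop1}, namely $(\partial G_1/\partial w)^{-1}(t,z,\pi,X)=z^{-1/\gamma}\pi^{\theta(1-\gamma)/\gamma}f_1$, which yields \eqref{primary_opt_I_2}. A key intermediate identity is that at the optimum the post-death wealth argument collapses, $w^{\widetilde{Y}}-\widetilde{Y}+I^*/\lambda_{x+t}=\kappa_2^{1/\gamma}W^{\widetilde{Y}}f_1/f_2$ (using $W^R=W^{\widetilde{Y}}-\widetilde{Y}$), so that $\kappa_2\lambda_{x+t}\Phi_1$ evaluated there equals $\frac{\lambda_{x+t}}{1-\gamma}\kappa_2^{1/\gamma}(W^{\widetilde{Y}})^{1-\gamma}\Pi^{\theta(1-\gamma)}f_1 f_2^{-(1-\gamma)}$. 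Finally the stated $\alpha^*_t$ follows from $\beta^*_t$ by the definition $\beta^{\top}_t=[W^R_t\alpha^{\top}_t+\widetilde{Y}(t,X_t)\xi^{\top}_t]/W^{\widetilde{Y}}_t$ and the expression for $\xi_t$.

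Next I would insert the ansatz and these controls back into \eqref{primary_adj_HJB_1}. Every term carries the common factor $(W^{\widetilde{Y}})^{1-\gamma}\Pi^{\theta(1-\gamma)}$; verifying that the wealth homogeneity (degree $1-\gamma$) and the inflation homogeneity (degree $\theta(1-\gamma)$) cancel across the running utilities, the generator \eqref{infi_generator_primal}, and the mortality term reduces the HJB to a scalar equation in $(t,X)$ for $f_2$. Collecting the quadratic-in-$\beta$ and the cross terms $\partial^2/\partial w\partial X$ and $\partial^2/\partial w\partial\pi$ produces exactly the coefficient blocks $Z_i,B_i,D_i$. Since $f_2$ is a sum of integrals of $f(X,\tau)=\exp(\Gamma_0(\tau)+\Gamma_1(\tau)^{\top}X+\frac{1}{2}X^{\top}\Gamma_2(\tau)X)$, differentiating under the integral sign and matching the quadratic, linear, and constant terms in $X$ shows the scalar equation holds for all $X$ if and only if $\Gamma_2,\Gamma_1,\Gamma_0$ satisfy the Riccati/linear system \eqref{Gamma2_tau}--\eqref{Gamma0_tau}, with the terminal data at $T_R$ supplied by $G(T_R,\cdot)=\Phi_2(T_R,\cdot)=G_2(T_R,\cdot)$.

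The main obstacle is the bookkeeping that ties the mortality and income pieces together. Differentiating $f_2=\kappa_1^{1/\gamma}\int_t^T {_{s-t}p_{x+t}}\,e^{-\frac{\delta}{\gamma}(s-t)}f\,ds+\kappa_2^{1/\gamma}f_1$ in $t$ produces boundary contributions at $s=t$ (where ${_0p_{x+t}}=1$ and $f(X,0)=1$) together with the term $\partial_t\,{_{s-t}p_{x+t}}=\lambda_{x+t}\,{_{s-t}p_{x+t}}$; these must reproduce precisely the running felicities $\kappa_1U_1+\kappa_2U_2$, the mortality-discount factor $-(\lambda_{x+t}+\delta)$, and the $\kappa_2\lambda_{x+t}\Phi_1$ credit, while the extra drift $\lambda_{x+t}\widetilde{Y}(t,X)\,\partial G/\partial w^{\widetilde{Y}}$ from the surplus dynamics must absorb the income-dependent remainder. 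Checking that all these $f_1$-linked terms telescope correctly---so that the governing ODE system coincides with the one already validated in Propositions \ref{prop1}--\ref{prop2}---is the delicate step; the remaining affine-quadratic matching is routine.
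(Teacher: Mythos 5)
Your proposal is correct and follows essentially the same guess-and-verify route as the paper's Appendix A: compute the partial derivatives of the ansatz $G$, obtain the closed-form controls from the first-order conditions \eqref{primary_opt_c1_1}--\eqref{primary_opt_I_1} (with the $\Pi$-powers cancelling exactly as you note), and substitute back into \eqref{primary_adj_HJB_1} so that matching the quadratic, linear, and constant terms in $X$ recovers the ODE system \eqref{Gamma2_tau}--\eqref{Gamma0_tau}. Your write-up is in fact more explicit than the paper's about the reduction to a scalar equation for $f_2$ and the telescoping of the mortality and income terms, but the underlying argument is the same.
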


\subsection{The global existence and verification theorem}
Several linear ODEs, including \eqref{Gamma1_tau} and \eqref{Gamma0_tau}, which are linear ODEs, determine the candidate solutions. These equations have unique solutions that exist globally \citep[see Theorem 1.1.1. in][]{abou2012matrix}. However, the ODE \eqref{Gamma2_tau} is a Hermitian matrix Riccati differential equation (HRDE), which requires special treatment for its existence. The HRDE can be represented as a matrix in the following way
\begin{eqnarray}\label{HRDE}
	\frac{\partial \Gamma_2(\tau)}{\partial \tau}  &=& (\widetilde{I}_2, \Gamma_2(\tau))JH(\tau)\begin{pmatrix}
		\widetilde{I}_2 \\
		\Gamma_2(\tau)
	\end{pmatrix}	:= \mathcal{H}(\Gamma_2;H), \tau\in[0,T],
\end{eqnarray}
where $\widetilde{I}_2$ is the 2nd-order identity matrix, and
\begin{equation*}
	J := \begin{pmatrix}
		0_{2\times 2} & \widetilde{I}_2 \\
		-\widetilde{I}_2 & 0_{2 \times 2}
	\end{pmatrix}
	\in \mathbb{R}^2 \times \mathbb{R}^2,
\text{ and }
	H := \begin{pmatrix}
		-Z_1 & -Z_2 \\
		Z_0 & Z^{\top}_1
	\end{pmatrix}
	\in \mathbb{R}^2 \times \mathbb{R}^2, \label{H_matrix}
\end{equation*}
which is called the Hamiltonian matrix. The global existence of the HRDE \eqref{HRDE} is heavily influenced by the relative risk aversion coefficient $\gamma$, which is also a factor for the verification theorem. Inspired by \cite{honda2011verification}, we divide the proofs in this subsection into two cases $\gamma>1$ and $0<\gamma<1$.

\begin{proposition}\label{global_exist_gamma_great_1}
	For $\gamma>1$, define the admissible set as
	\begin{equation*}\label{adset1_gamma_great_1}
		\mathcal{A}_{\gamma}(0,T_R):=
		\left\{
		\begin{array}{c|c}
			&  \text{$\beta(t,X_t):[0,T_R]\times \mathbb{R}^2\rightarrow \mathbb{R}^4$ }\\
			(\beta, I, c_1, c_2)  &  \text{grows linearly with respect to $X_t$, }\\
			&  \text{and SDE \eqref{WY_1} has a unique strong solution.}
		\end{array}
		\right\}.
	\end{equation*} 
    If $\Sigma_X\Sigma^{\top}_X>0$ and $\Lambda^{\top}_1\Lambda_1>0$, then the candidate solution $G(t,W^{\widetilde{Y}}_t,\Pi_t,X_t)$ in Proposition \ref{prop3} exists in $[0,T]$ and equals the primal value function $V(t,W^{\widetilde{Y}}_t,\Pi_t,X_t)$. The strategies $(\beta^*,I^*,c^*_1, c^*_2)$ given by \eqref{primary_opt_c1_2} to \eqref{primary_opt_I_2} are the optimal strategies. For matrices, ``$>$'' (``$<$'') indicates positive (negative) definite.
\end{proposition}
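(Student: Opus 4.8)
The plan is to establish the two assertions separately: first that the ODE system \eqref{Gamma2_tau}--\eqref{Gamma0_tau} defining the candidate $G$ admits a global solution on $[0,T]$, and then the verification identity $V=G$ together with the optimality of $(\beta^*,I^*,c_1^*,c_2^*)$. Once $\Gamma_2$ is known, \eqref{Gamma1_tau} and \eqref{Gamma0_tau} are linear and have global solutions by the cited theory, so everything reduces to the Hermitian Riccati equation \eqref{HRDE}. I would linearize it via the Hamiltonian system: let $(U,V)$ solve $\frac{d}{d\tau}(U^\top,V^\top)^\top = H\,(U^\top,V^\top)^\top$ with $(U(0),V(0))=(\widetilde I_2,0)$, so that $(U,V)^\top=e^{H\tau}(\widetilde I_2,0)^\top$ exists for all $\tau$ and $\Gamma_2=VU^{-1}$ solves \eqref{Gamma2_tau} wherever $U$ is invertible. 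Since $H$ is Hamiltonian ($JH$ symmetric, equivalently $H^\top J+JH=0$), the form $W^\top JW=U^\top V-V^\top U$ is conserved; it vanishes at $\tau=0$, so $U^\top V$ is symmetric and $\Gamma_2$ is symmetric throughout. Hence global existence is equivalent to keeping $U(\tau)$ invertible on $[0,T]$.

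The two hypotheses enter decisively here. We have $Z_2=\Sigma_X\Sigma_X^\top>0$ and, because $\gamma>1$ gives $\frac{1-\gamma}{\gamma^2}<0$, also $Z_0=\frac{1-\gamma}{\gamma^2}\Lambda_1^\top\Lambda_1<0$. With these signs I would show the negative–semidefinite cone is forward invariant: on its boundary, if $v$ is a null eigenvector of $\Gamma_2$ then $v^\top\dot\Gamma_2 v=v^\top Z_0 v<0$, so $\Gamma_2(\tau)\le0$ cannot be exited. Writing $\Psi=-\Gamma_2\ge0$ recasts \eqref{Gamma2_tau} as the dissipative equation $\dot\Psi=-\Psi Z_2\Psi+(Z_1^\top\Psi+\Psi Z_1)+(-Z_0)$ with $\Psi(0)=0$, where the quadratic term $-\Psi Z_2\Psi$ is negative semidefinite (as $Z_2>0$, $\Psi\ge0$) and dominates the linear growth after completing the square, $Z_1^\top\Psi+\Psi Z_1\le \epsilon\,\Psi Z_2\Psi+C\widetilde I_2$. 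Comparison with the scalar equation $\dot m=-a\,m^2+b$ then yields a uniform bound $0\le\Psi(\tau)\le M\widetilde I_2$ on $[0,T]$, so $\Gamma_2$ does not blow up and $U(\tau)$ stays invertible. Producing this a priori bound --- ruling out finite-time blow-up of the Riccati flow --- is the \textbf{main obstacle}, and it is precisely where both $\Sigma_X\Sigma_X^\top>0$ (through $Z_2>0$) and $\Lambda_1^\top\Lambda_1>0$ (through $Z_0<0$) are indispensable.

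For verification I would proceed layer by layer ($\Phi_1=G_1$ on $[T_x\wedge T,T]$, then $\Phi_2=G_2$ on $[T_R,T_x\wedge T]$, then $V=G$ on $[0,T_R]$), each by the same scheme. Fixing an admissible control, define $N_s=e^{-\int_t^s(\lambda_{x+u}+\delta)du}G(s,W^{\widetilde Y}_s,\Pi_s,X_s)+\int_t^s e^{-\int_t^r(\lambda_{x+u}+\delta)du}[\kappa_1U_1+\kappa_2U_2+\kappa_2\lambda_{x+r}\Phi_1(\cdots)]dr$, where $e^{-\int_t^s\lambda_{x+u}du}={}_{s-t}p_{x+t}$ reproduces the survival weighting in \eqref{primary_adj_HJB_1}. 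Applying Itô's formula and \eqref{primary_adj_HJB_1} shows the drift of $N_s$ is $\le0$ for every admissible control and $=0$ for the candidate $(\beta^*,I^*,c_1^*,c_2^*)$, so $N$ is a supermartingale in general and a local martingale at the candidate. Localizing by stopping times $\tau_n$ gives $E[N_{\tau_n\wedge T_R}]\le G(t,\cdot)$, and passing $n\to\infty$ with the boundary identity $G(T_R,\cdot)=\Phi_2(T_R,\cdot)$ (valid since $W^{\widetilde Y}_{T_R}=W^R_{T_R}$ and both use $f_2$) yields $J(\text{control})\le G(t,\cdot)$ for all controls, with equality at the candidate; hence $V=G$ and optimality.

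The delicate point in this last step is removing the localization. For $\gamma>1$ the utilities, and therefore $G$, are negative and bounded above by $0$, so the supermartingale direction follows from Fatou's lemma, while the candidate direction uses the explicit exponential–quadratic form of $f_2$ with the now-bounded $\Gamma_i$ to produce the moment estimates needed for (conditional) dominated convergence. Finally I would confirm the candidate lies in $\mathcal A_\gamma(0,T_R)$: $\beta^*$ in \eqref{primary_opt_beta_2} is affine in $X_t$ (because $\Lambda_t=\Lambda_0+\Lambda_1 X_t$ is affine and $f_2^{-1}\partial_X f_2$ is affine owing to the quadratic exponent), hence grows linearly in $X_t$, and the resulting linear wealth SDE \eqref{WY_1} with affine coefficients admits a unique strong solution, so that $(\beta^*,I^*,c_1^*,c_2^*)$ given by \eqref{primary_opt_c1_2}--\eqref{primary_opt_I_2} is admissible and optimal.
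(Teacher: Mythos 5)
Your treatment of the Riccati equation \eqref{Gamma2_tau} is sound and in fact more self-contained than the paper's, which simply invokes the comparison theorems of Abou-Kandil et al.; your invariance-of-the-cone argument ($v^{\top}\dot\Gamma_2 v=v^{\top}Z_0v<0$ at a null eigenvector, using $Z_0=\frac{1-\gamma}{\gamma^2}\Lambda_1^{\top}\Lambda_1<0$ for $\gamma>1$) plus the scalar Riccati comparison using $Z_2=\Sigma_X\Sigma_X^{\top}>0$ correctly isolates where both hypotheses enter, and the admissibility check of $(\beta^*,I^*,c_1^*,c_2^*)$ matches the paper's Step 1.

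The verification step, however, has a genuine gap. For $\gamma>1$ the utilities and the candidate $G$ are \emph{negative}, so your process $N_s$ is bounded \emph{above} by $0$, not below. Fatou's lemma in the direction you need — passing from $E[N_{\tau_n\wedge T_R}]\le G(t,\cdot)$ to $E[N_{T_R}]\le G(t,\cdot)$ — requires a lower bound (or uniform integrability from below): with $N\le 0$, reverse Fatou only yields $\limsup_n E[N_{\tau_n}]\le E[N_{T_R}]$, which is the wrong inequality and leaves open the possibility $E[N_{T_R}]>G(t,\cdot)$. You have effectively transplanted the argument that works for $0<\gamma<1$ (where the value process is nonnegative; this is exactly the paper's Appendix C) into the $\gamma>1$ case. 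The paper's proof for $\gamma>1$ avoids this by writing $g^{\beta,I,c_1,c_2}(T_R,\cdot)\le g^{\beta,I,c_1,c_2}(t,\cdot)\,\mathcal{E}(T_R,h)/\mathcal{E}(t,h)$ and proving that the stochastic exponential $\mathcal{E}(\cdot,h)$ is a \emph{true} martingale via a linear-growth Novikov-type lemma (Lemma \ref{martingale_lemma}); this is precisely why the admissible set $\mathcal{A}_{\gamma}(0,T_R)$ for $\gamma>1$ is defined with the linear-growth restriction on $\beta(t,X_t)$, a restriction your argument never uses. The same device, not dominated convergence, handles the equality at the candidate strategy. To repair your proof you would need to replace the Fatou step by establishing the martingale property of the relevant exponential local martingale under the linear-growth condition on admissible controls.
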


For candidate solution $G_1$, define its admissible set as
\begin{equation*}\label{adset2_gamma_great_1}
		\mathcal{A}^{(1)}_{\gamma}(0,T):=
		\left\{
		\begin{array}{c|c}
			&  \text{$\alpha(t,X_t):[0,T]\times \mathbb{R}^2\rightarrow \mathbb{R}^4$ }\\
			(\alpha, c_2)  &  \text{grows linearly with respect to $X_t$, }\\
			&  \text{and SDE \eqref{real_wealth_1} has a unique strong solution.}
		\end{array}
		\right\}.
\end{equation*} 

For candidate solution $G_2$, define its admissible set as
\begin{equation*}\label{adset3_gamma_great_1}
		\mathcal{A}^{(2)}_{\gamma}(T_R,T):=
		\left\{
		\begin{array}{c|c}
			&  \text{$\alpha(t,X_t):[T_R,T]\times \mathbb{R}^2\rightarrow \mathbb{R}^4$ }\\
			(\alpha, I, c_1, c_2)  &  \text{grows linearly with respect to $X_t$, }\\
			&  \text{and SDE \eqref{real_wealth_1} has a unique strong solution.}
		\end{array}
		\right\}.
\end{equation*} 

Notice $G_1$ is the special case of $G$ restricted to $\mathcal{A}^{(1)}_{\gamma}(0,T)$ when $\widetilde{Y}(t,X_t)\equiv 0$, $\kappa_1=0$, and $\kappa_2=1$. Similarly, $G_2$ is the special case of $G$ restricted to $\mathcal{A}^{(2)}_{\gamma}(T_R,T)$ when $\widetilde{Y}(t,X_t)\equiv 0$. The same approach used in Proposition \ref{global_exist_gamma_great_1} can be applied to prove the global existence and verification theorems for $G_1$ and $G_2$.

For $0 < \gamma < 1$, the existence of \eqref{Gamma2_tau} can be established using Radon's Lemma with additional conditions. Let $(Q,P)^{\top}$ represent a solution to the linear system of differential equations
\begin{eqnarray}\label{ode_PQ}
	\frac{d}{d\tau}
	\begin{pmatrix}
		Q(\tau)\\
		P(\tau)
	\end{pmatrix}
	= H
	\begin{pmatrix}
		Q(\tau)\\
		P(\tau)
	\end{pmatrix}, Q(0) = \widetilde{I}_{2}, P(0) = \Gamma_2(0)Q(0) = 0.
\end{eqnarray}
According to Radon's Lemma \citep[see Theorem 3.1.1 in][]{abou2012matrix}, the solution to \eqref{Gamma2_tau} can be expressed as $\Gamma_2(\tau) = P(\tau)/Q(\tau)$. Next, we only need $\Gamma_2(\tau)<0$ to guarantee the candidate solution's global existence. For tractability, we adopt the assumption by \cite{abou2012matrix} that $H$ is diagonalizable. This means that there exists a 4-dimensional basis of eigenvectors
\begin{equation*}
	v_1, ..., v_4 \in \mathbb{C}^4,
\end{equation*}
where $\mathbb{C}^4$ denotes the complex vector space of $4 \times 1$ complex vectors. The corresponding eigenvalues are $\lambda_1, ..., \lambda_4$ sorted by their real parts
\begin{equation*}
	\mathcal{R}(\lambda_1) \leq \mathcal{R}(\lambda_2) \leq \mathcal{R}(\lambda_3) \leq \mathcal{R}(\lambda_{4}).
\end{equation*}
Let $V = (v_1,..., v_4) \in \mathbb{C}^{4\times 4}$, where $\mathbb{C}^{4\times 4}$ denotes the complex vector space of $4 \times 4$ complex matrices, then the solution to \eqref{ode_PQ} can be expressed as
\begin{equation*}\label{sol_ode_PQ}
	\begin{pmatrix}
		Q(\tau)\\
		P(\tau)
	\end{pmatrix}
	=V e^{\Delta \tau} V^{-1} \begin{pmatrix}
		Q(0)\\
		P(0)
	\end{pmatrix}
	=V e^{\Delta \tau} V^{-1} \begin{pmatrix}
		\widetilde{I}_{2}\\
		0
	\end{pmatrix},
\end{equation*}
where $\Delta := V^{-1}HV = \text{diag}(\lambda_1, ..., \lambda_4)$. Furthermore, define
\begin{equation}\label{f_lambda}
	f_{\lambda}(\lambda) = |\lambda \widetilde{I}_4 - H| = \lambda^4 + b\lambda^3 + c\lambda^2 + d\lambda + j,
\end{equation}
we can finally prove the following proposition of global existence and verification.
\begin{proposition}\label{global_exist_gamma_01}
	For $0<\gamma<1$, define the admissible set as
	\begin{equation*}\label{adset1_gamma_01}
		\mathcal{A}_{\gamma}(0,T_R):=
		\left\{
		\begin{array}{c|c}
			(\beta, I, c_1, c_2)  &  \text{$(\beta,I,c_1,c_2)$ ~\text{such that $W^{\widetilde{Y}}_t>0$, }}\\
			&  \text{and SDE \eqref{WY_1} has a unique strong solution.}
		\end{array}
		\right\}.
	\end{equation*}
	If
	\begin{equation}\label{root_condition}
		\widetilde{\Delta}>0,~ q<0, ~ s<\frac{q^2}{4},
	\end{equation}
	\begin{equation}\label{det_condition}
		\text{det}|Q(\tau)|\neq 0 ~\text{and}~ P(\tau)/Q(\tau)<0 ~\text{for}~ \forall \tau \in (0,T],
	\end{equation}
	then the candidate solution $G(t,W^{\widetilde{Y}}_t,\Pi_t,X_t)$ in Proposition \ref{prop3} exists in $[0,T]$ and equals the primal value function $V(t,W^{\widetilde{Y}}_t,\Pi_t,X_t)$. Moreover, the strategy $(\beta^*,I^*,c^*_1,c^*_2)$ given by \eqref{primary_opt_c1_2}-\eqref{primary_opt_I_2} is the optimal portfolio and insurance strategy. The expressions of $\widetilde{\Delta}$, $q$, and $s$ are given in Appendix \ref{appendix3}. 
\end{proposition}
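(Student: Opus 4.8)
The plan is to prove Proposition \ref{global_exist_gamma_01} in two stages: first establishing the global existence and negative-definiteness of the Riccati solution $\Gamma_2$ on all of $[0,T]$, and then running a verification argument in the style of \cite{honda2011verification} specialized to $0<\gamma<1$. For the first stage I would use Radon's Lemma exactly as set up in \eqref{ode_PQ}: diagonalizing the linearization $H$ of the Hermitian Riccati equation \eqref{HRDE} as $\Delta=V^{-1}HV$, the candidate is $\Gamma_2(\tau)=P(\tau)/Q(\tau)$ with $(Q,P)^{\top}=Ve^{\Delta\tau}V^{-1}(\widetilde{I}_2,0)^{\top}$. Because $JH$ is symmetric (as $Z_0,Z_2$ are symmetric), $H$ is Hamiltonian, so its spectrum is symmetric about the origin and the characteristic polynomial $f_{\lambda}$ in \eqref{f_lambda} is even (the odd coefficients vanish, consistent with $b=-\mathrm{tr}(H)=0$); the eigenvalue problem thus reduces to a biquadratic, and the root condition \eqref{root_condition} on $\widetilde{\Delta},q,s$ is precisely what forces the four eigenvalues to be real. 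Condition \eqref{det_condition} then supplies $\det Q(\tau)\neq0$ and $\Gamma_2(\tau)=P(\tau)/Q(\tau)<0$ for all $\tau\in(0,T]$, with $\Gamma_2(0)=0$ from the initial data. This negativity is the linchpin: it makes $f(X_t,\tau)=\exp(\Gamma_0+\Gamma_1^{\top}X_t+\tfrac12 X_t^{\top}\Gamma_2 X_t)$ a Gaussian-integrable function of $X_t$, so the integrals $f_1,f_2$ (which run $\tau$ up to $T$) converge and the candidate $G$ in \eqref{adj_G} is finite and smooth on $[0,T]$; the companion linear equations \eqref{Gamma1_tau}-\eqref{Gamma0_tau} then have global solutions by the cited linear theory.

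For the verification stage I would apply Itô's formula to $e^{-\delta(s-t)}\,{}_{s-t}p_{x+t}\,G(s,W^{\widetilde{Y}}_s,\Pi_s,X_s)$ along an arbitrary admissible control $(\beta,I,c_1,c_2)\in\mathcal{A}_{\gamma}(0,T_R)$, where $W^{\widetilde{Y}}$ obeys the surplus SDE \eqref{WY_1}; differentiating the survival factor ${}_{s-t}p_{x+t}$ reproduces the $-(\lambda_{x+s}+\delta)$ discounting and the bequest term appearing in the generator \eqref{infi_generator_primal}. As verified in the derivation of Proposition \ref{prop3}, $G$ solves the HJB equation \eqref{primary_adj_HJB_1}; since under CRRA with $0<\gamma<1$ the value function is concave in wealth and the utilities are concave in consumption, the pointwise Hamiltonian is concave in the controls, so its supremum is attained at the candidate strategies \eqref{primary_opt_c1_2}-\eqref{primary_opt_I_2} and the drift of the discounted process is bounded above by minus the running reward, with equality exactly at $(\beta^*,I^*,c_1^*,c_2^*)$. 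Integrating to a localizing stopping time $\tau_n$, taking expectations to annihilate the stochastic integral, and using the terminal matching $G(T_R,W^R_{T_R},\cdot)=\Phi_2(T_R,W^R_{T_R},\cdot)$ built into $f_2$, I obtain $G(t)\geq J(t;\cdot)$ for every admissible control, hence $G\geq V$; and along the candidate optimizer the inequality becomes an equality, forcing $V\geq G$, so $G=V$ and the candidate strategies are optimal. The reduction recorded after Proposition \ref{global_exist_gamma_great_1}, that $G_1$ and $G_2$ are specializations of $G$, lets the identical argument deliver the verification for $\Phi_1$ and $\Phi_2$.

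The main obstacle is the transversality and uniform-integrability bookkeeping required to pass $\tau_n\uparrow T_R$ and to justify that the candidate optimizer actually attains $G$, which is delicate precisely because $0<\gamma<1$ renders $G$ and the utilities positive (so the bound runs opposite to the $\gamma>1$ case of Proposition \ref{global_exist_gamma_great_1}) and because the coefficients grow with $X_t$, the price level enters through $\Pi_t^{\theta(1-\gamma)}$, and $G$ carries the quadratic-exponential factor $e^{\frac12 X_t^{\top}\Gamma_2 X_t}$. The inequality $G\geq V$ follows relatively cheaply from positivity and Fatou's lemma, but the reverse direction needs genuine uniform integrability of the stopped family and the admissibility of the optimal feedback (unique strong solution of \eqref{WY_1} with $W^{\widetilde{Y}}_t>0$, which holds since the optimal consumption is proportional to wealth). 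I would discharge these by deriving uniform moment estimates for $W^{\widetilde{Y}}_t$, $\Pi_t$, and the Gaussian factor in $X_t$, invoking $\Gamma_2<0$ to guarantee the relevant exponential moments are finite, and then applying dominated convergence. The hypotheses \eqref{root_condition}-\eqref{det_condition} feed directly into this, since keeping $\Gamma_2$ negative definite on $[0,T]$ is exactly what makes these estimates available.
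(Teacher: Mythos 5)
Your proposal follows essentially the same route as the paper: Radon's Lemma with the root conditions \eqref{root_condition} forcing four distinct real eigenvalues of $H$ (hence diagonalizability) and condition \eqref{det_condition} delivering $\Gamma_2(\tau)=P(\tau)/Q(\tau)<0$ on $(0,T]$, followed by a verification argument that exploits the nonnegativity of the value process when $0<\gamma<1$ to obtain $G\geq J$ via localization and Fatou's lemma, and the explicit exponential form of the optimal wealth to confirm $W^{\widetilde{Y}}_t>0$ for admissibility. The only substantive deviation is in closing the equality along the optimal control, where you propose uniform moment estimates plus dominated convergence while the paper instead invokes its exponential-martingale lemma (Lemma \ref{martingale_lemma}) applied to the linearly growing integrand $h^{\beta^*,I^*,c^*_1,c^*_2}$; both are standard and lead to the same conclusion.
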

For candidate solution $G_1$, define its admissible set as
\begin{equation*}\label{adset2_gamma_01}
		\mathcal{A}^{(1)}_{\gamma}(0,T):=
		\left\{
		\begin{array}{c|c}
			(\alpha,c_2) &  \text{$(\alpha,c_2)$ ~\text{such that $W^R_t>0$, }}\\
			&  \text{and SDE \eqref{real_wealth_1} has a unique strong solution.}
		\end{array}
		\right\}.
\end{equation*} 

For candidate solution $G_2$, define its admissible set as
\begin{equation*}\label{adset3_gamma_01}
		\mathcal{A}^{(2)}_{\gamma}(T_R,T):=
		\left\{
		\begin{array}{c|c}
			(\alpha, I, c_1, c_2)  & \text{$(\alpha, I, c_1, c_2)$ ~\text{such that $W^R_t>0$, }}\\
			&  \text{and SDE \eqref{real_wealth_1} has a unique strong solution.}
		\end{array}
		\right\}.
\end{equation*} 

Following the discussions after Proposition \ref{global_exist_gamma_great_1}, one can apply the approach in Proposition \ref{global_exist_gamma_01} to prove $G_1$ and $G_2$'s global existences and verification theorems.

\section{Numerical results}\label{numerical_research}
\subsection{Model calibration}
According to the parameter settings in \cite{huang2008portfolio}, we consider a breadwinner who is 35 years old at the initial time and retires at the age of 65. The family ceases making investment decisions at the breadwinner's age of 95, so $T_R = 30$ and $T = 60$. The breadwinner allocates wealth among 3-year nominal bonds, 10-year nominal bonds, 10-year inflation-linked bonds, the equity index, and cash ($T_1=3, T_2=T_3=10$) and also purchases life insurance. Following \cite{koijen2011optimal}, we suppose that the growth rate $g^R_t$ in the real income \eqref{income_process} is given by
\begin{equation*}\label{g^R_t}
	g^R_t = 0.1682 - 0.00646(45+t) + 0.00006(45+t)^2, 
\end{equation*}
which corresponds to an individual with a high school education in the estimates of \cite{cocco2005consumption} and \cite{munk2010dynamic}. We assume the breadwinner's force of mortality is subject to the Gompertz law
\begin{equation*}
	\lambda_{x+t} = \frac{1}{9.5} e^{\frac{x+t-86.3}{9.5}}, ~x=35,
\end{equation*}
and set other base model parameters as 
\begin{eqnarray*}
	&&\delta = 0.10,~~ W_0 = 35.00,~~ Y_0 = 25.00.
\end{eqnarray*} 

We utilize monthly data from the U.S. financial market, spanning from June 1961 to December 2023. To estimate the parameters, we employ zero-coupon nominal yields from Gurkaynak et al. (2007), comprising eight different maturities: three months, six months, one year, two years, three years, five years, seven years, and ten years. The realized inflation index is obtained from the CRSP's Consumer Price Index for All Urban Consumers (CPI-U NSA index). Additionally, we utilize the equity index based on the CRSP's value-weighted NYSE/Amex/Nasdaq index, which includes dividend payments.

We implement a Kalman filter algorithm to estimate the two factors and model parameters, detailed in Appendix \ref{appendix8}. The results are presented in Table \ref{financial_market_estimate_table} and Figure \ref{fig_estimated_r_pi}. Aligned with \cite{koijen2011optimal}, we observe that $\kappa_1 > \kappa_2$, which indicates that expected inflation is more persistent than the real short rate.  In terms of innovations, we detect a negative correlation between the real short rate and expected inflation ($\sigma_{2(1)}<0$). Regarding the equity index process, we find that the risk premium diminishes with the real short rate and expected inflation ($\mu_{1(1)}, \mu_{1(2)} < 0$). Moreover, the unconditional price of risk, $\Lambda_0$, is negative for the real short rate and expected inflation but positive for the equity index. Notably, all the parameters in the conditional price of risk, $\Lambda_1$, are negative, implying that the price of risk decreases with two factors $X_t$. Figure \ref{fig_estimated_r_pi} depicts the estimated short rates and expected inflation.

\begin{table}[htbp]
	\caption{\textbf{Estimation results for the financial market}} 
	\smallskip
	\centering 
	\begin{tabular}{c r c r c r}\label{financial_market_estimate_table}
		Parameter     &Estimate &Parameter     &Estimate
        &Parameter     &Estimate\\
		\hline
		\multicolumn{6}{l}{Average short rate \& average expected inflation}\\
		 $\delta_r$           &0.01254       &$\delta_R$           &0.05120  &$\delta_{\pi^e}$     &0.03831  \\
		\multicolumn{6}{l}{Two-factor process}\\
		 $\kappa_1$           &0.61921       &$\kappa_2$           &0.18894  &$\sigma_{1(1)}$      &0.02209  \\  
		 $\sigma_{2(1)}$      &-0.00673      &$\sigma_{2(2)}$      &0.01408  &                     &         \\
		\multicolumn{6}{l}{Realized inflation process}\\
		 $\sigma_{\Pi(1)}$    &0.00042       &$\sigma_{\Pi(2)}$    &0.00207  &$\sigma_{\Pi(3)}$    &0.01363  \\
	    \multicolumn{6}{l}{Equity index process}\\    
	     $\mu_0$              &0.04600       &$\mu_{1(1)}$         &-1.97000 &$\mu_{1(2)}$         &-1.41000  \\    $\sigma_{S(1)}$      &-0.01974     & $\sigma_{S(2)}$     &-0.01785 &$\sigma_{S(3)}$      &-0.00793 \\ 
		 $\sigma_{S(4)}$      &0.15410       &                     &         & &\\
	    \multicolumn{6}{l}{Prices of risk of real short rate, inflation, and equity}\\
	     $\Lambda_{0(1)}$     &0.00487      &$\Lambda_{0(2)}$     &-0.17007 & $\Lambda_{0(4)}$     &0.27943 \\      $\Lambda_{1(1,1)}$   &-9.92002    &$\Lambda_{1(2,2)}$   &-9.98001     &$\Lambda_{1(4,1)}$   & -14.05465\\
		 $\Lambda_{1(4,2)}$   & -10.30593   &                     &          &&\\
		 \hline
	\end{tabular}
	\begin{flushleft}
		{\small
			The parameters in the table are annualized. $\Lambda_0(1)$, $\Lambda_0(2)$, $\Lambda_1(4,1)$, and $\Lambda_1(4,2)$ can be obtained by solving three equations: $\delta_R = \delta_r + \delta_{\pi^e} - \sigma^{\top}_{\Pi}\Lambda_0$, $\sigma^{\top}_S\Lambda_0=\mu_0$, $\sigma^{\top}_S\Lambda_1=\mu_1$. So, there are 21 parameters in total to be estimated. 
		}
	\end{flushleft}
\end{table}
\begin{figure}[htbp]	
	{\flushleft  
	  \includegraphics[width=7in,height=2.5in]{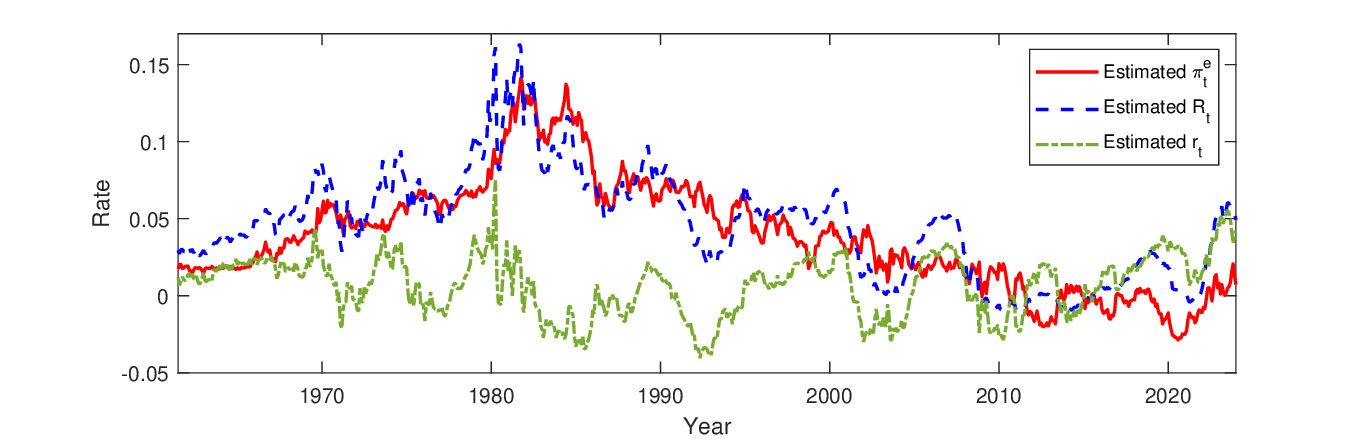}
	\caption{Estimated short rate and expected inflation process. The solid line is the estimated expected inflation $\pi^e_t$. The dashed line is the estimated nominal short rate $R_t$. The dash-dotted line is the estimated real short rate $r_t$.}\label{fig_estimated_r_pi}
    }
\end{figure}

\subsection{Sensitivity of optimal strategies to the age}\label{sensitivity_to_age}

In this subsection, we analyze how the optimal strategies change with age. We conduct Monte Carlo simulations involving 10,000,000 paths, using a time step of one year. The expected trading, consumption, and insurance strategies are shown in Figures  \ref{fig_beta_2D_gamma10:figures} - \ref{fig_insurance_2D_gamma5:figures}. Specifically, Figures \ref{fig_beta_2D_gamma10:figures} and \ref{fig_insurance_2D_gamma10:figures} correspond to a risk aversion coefficient of $\gamma=10$, while Figures \ref{fig_beta_2D_gamma5:figures} and \ref{fig_insurance_2D_gamma5:figures} correspond to $\gamma=5$.

To obtain more insightful observations, we decompose the optimal trading strategy $\beta^*$ in \eqref{primary_opt_beta_2} into the following three components
\begin{eqnarray}
	\beta^*_t =\underbrace{\frac{(\Sigma^{\top})^{-1}}{\gamma}\Lambda_t}_\text{standard myopic demand} +\underbrace{\frac{\gamma-1}{\gamma}(1-\theta)(\Sigma^{\top})^{-1}\sigma_{\Pi}}_\text{inflation hedging demand} +\underbrace{(\Sigma^{\top})^{-1}\Sigma^{\top}_X\frac{1}{f_2(t,X_t)}\frac{\partial f_2(t,X_t)}{\partial X^{\top}}}_\text{intertemporal hedging demand}. \label{adj_beta_decomposition}
\end{eqnarray}
Among three components, the standard myopic demand (SMD) characterizes the risk-return trade-off of the assets. The inflation hedging demand (IFHD) represents the family's aspiration to hedge against realized inflation $\Pi_t$ ($\sigma_{\Pi}$ is the volatility term of $\Pi_t$, as specified in \eqref{realized_inflation}). Lastly, the intertemporal hedging demand (ITHD), determined by the investment horizon, reflects the family's intent to hedge against potential changes in future investment opportunity sets.

We present the unconditional expectations of SMD and IFHD in Table \ref{Expected_smd_ifhd}. The expected SMD is a constant vector because $X_t$ in \eqref{Xt_process} follows a normal distribution $N(0,\Sigma_t)$, where $\Sigma_t = \int_0^t e^{-K_x(t-s)}\Sigma_X \Sigma^{\top}_X e^{-K^{\top}_X(t-s)} ds$. Moreover, $\text{IFHD}_4$ is zero as the fourth entry of $\sigma_{\Pi}$ is zero, which is implied by the assumption that  $(\sigma_1, \sigma_2, \sigma_{\Pi}, \sigma_S)^{\top}$ is lower triangular. 
\begin{table}[htbp]
	\caption{\textbf{Expected standard myopic demand and inflation hedging demand}} 
	\smallskip
	\centering 
	\begin{tabular}{c c c c c}\label{Expected_smd_ifhd}
              & $\text{SMD}_1$ & $\text{SMD}_2$ & $\text{SMD}_3$ & $\text{SMD}_4$\\
     Values   &-0.293846  & 0.226313 &0.105499 &0.181331\\
     \hline
              & $\text{IFHD}_1$ & $\text{IFHD}_2$ & $\text{IFHD}_3$ & $\text{IFHD}_4$\\
     $\theta=0.0$ & -2.089113 & 0.735923 & 0.900000 & 0.000000\\
     $\theta=0.2$ & -1.671290 & 0.588739 & 0.720000 & 0.000000\\
     $\theta=0.4$ & -1.253468 & 0.441554 & 0.540000 & 0.000000\\
     $\theta=0.6$ & -0.835645 & 0.294369 & 0.360000 & 0.000000\\
     $\theta=0.8$ & -0.417823 & 0.147185 & 0.180000 & 0.000000\\
     $\theta=1.0$ &  0.000000 & 0.000000 & 0.000000 & 0.000000\\
     \hline
	\end{tabular}
 	\begin{flushleft}
		{\small
  $\text{SMD}_i$ is the $i$th entry of the standard myopic demand vector.
  $\text{IFHD}_i$ is the $i$th entry of the inflation hedging demand vector.}
	\end{flushleft}
 \label{Table_SMD_IFHD}
\end{table} 

We plot the expected optimal trading strategies $\beta^*$ in Figure \ref{fig_beta_2D_gamma10:figures}. A comparison between Figures \ref{fig_beta_2D_gamma10:figure1} and \ref{fig_beta_2D_gamma10:figure2} reveals that the breadwinner is likely to short the short-term nominal bond in favor of the high-risk premium from the long-term nominal bond. When $\theta$ goes to 1 (implying a higher valuation of the nominal value by the breadwinner), there is a decrease in shorting the short-term nominal bond and an increase in longing the long-term nominal bond. Figures \ref{fig_beta_2D_gamma10:figure3} and \ref{fig_beta_2D_gamma10:figure4} indicate that the breadwinner tends to long the inflation-linked bond and stocks. When $\theta$ goes to 1, the demand for inflation-linked bonds decreases, while the demand for stocks remains unchanged. The expected ITHD is also depicted in Figure \ref{fig_beta_2D_gamma10:figures}. The $\text{ITHD}_3$ and $\text{ITHD}_4$ are zero, given the third and fourth rows of $\Sigma_X$ are zero. Finally, we observe that ITHD predominantly influences the evolution of expected investment strategies.

Figure \ref{fig_insurance_2D_gamma10:figures} shows the expected optimal consumption and insurance strategies. It is noticeable that both the breadwinner and the family exhibit a higher consumption rate in their early years when they disregard inflation risk (i.e., the consumption pattern transitions from increasing to decreasing as $\theta$ approaches 1). From Figure \ref{fig_insurance_2D_gamma10:figure3}, we obtain three key observations for the insurance premium: (1) $E[I^*_t]$ change from positive to negative through time; (2) the positive range is significantly smaller than the negative range; (3) $E[I^*_t]$ moves upward as $\theta$ approaches 1. Observation (1) can be understood in the context of the opposing roles of life insurance and annuities. Life insurance protects a breadwinner's income before retirement, while an annuity serves as a source of ``income" after retirement. Consequently, to safeguard income, the breadwinner purchases life insurance ($I^*_t>0$) prior to retirement and switches to an annuity ($I^*_t<0$) approaching retirement (as interpreted in \cite{fischer1973life}, \cite{pirvu2012optimal}, and \cite{shen2016optimal}). Observation (2) is reflected in the maximum positive value in Figure \ref{fig_insurance_2D_gamma10:figure3} being 0.277k USD per year and the maximum negative value being 16.04k USD per year. This aligns with the real-world scenario where life insurance is considerably cheaper than an annuity. Despite the significant price gap, Figure \ref{fig_insurance_2D_gamma10:figure4} demonstrates that life insurance and annuities' expected payoffs (insurance face value) are within the same range. Observation (3) reveals that the demand for life insurance increases and the demand for annuity decreases when the family ignores inflation risk. In addition to life insurance demand, we also examined the breadwinner's bequest demand. After rearranging \eqref{primary_opt_I_2}, we define the bequest-wealth ratio as
\begin{equation}
    \frac{\kappa^{\frac{1}{\gamma}}_2f_1(t,X_t)}{f_2(t,X_t)} = \frac{W^R_t + \frac{I^*_t}{\lambda_{x+t}}}{W^{R}_t + \widetilde{Y}(t,X_t)}, \label{bequest_wealth_ratio}
\end{equation}
where the right-hand side is the ratio of the death benefit over the surplus process. We plot the expected curves of surplus process and wealth-bequest ratio in Figures \ref{fig_insurance_2D_gamma10:figure5} and \ref{fig_insurance_2D_gamma10:figure6}. The figures show that they both perform a hump shape and move downward when the inflation risk is ignored.

We also plot the optimal strategies under $\gamma=5$ in Figures \ref{fig_beta_2D_gamma5:figures} and \ref{fig_insurance_2D_gamma5:figures} as the family becomes more risk-seeking. Compared to the case of $\gamma = 10$, we find: (1) the family will short more short-term nominal bonds and long more long-term nominal bonds, inflation-linked bonds, and stocks. (2) Their demand for life insurance decreases while their annuity demand increases. (3) Their expected surplus process sharply increases, but their bequest-wealth ratio slightly declines.

In general, the whole family increases their life insurance demand and reduces their annuity demand when ignoring the inflation risk. This observation leads us to two key conclusions: (1) Money illusion might contribute to the annuity puzzle, which is a phenomenon of individuals not purchasing sufficient annuities to fund their retirement. (2) Promoting inflation education could increase retirees' voluntary purchase of annuities.

\begin{figure}[htbp]
    \centering
    \subfigure[3-year nominal bonds]
    {
	\includegraphics[width=3.3in,height=1.7in]{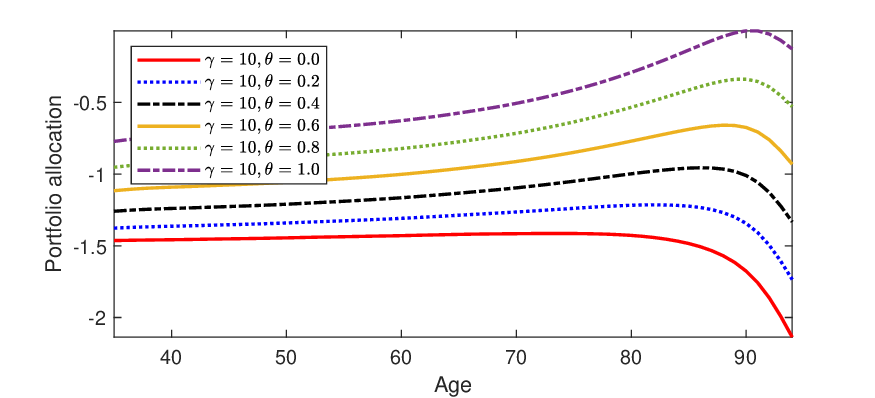}
        \label{fig_beta_2D_gamma10:figure1}
    }
    \hspace{-0.45in}
    \subfigure[10-year nominal bonds]
    {
	\includegraphics[width=3.3in,height=1.7in]{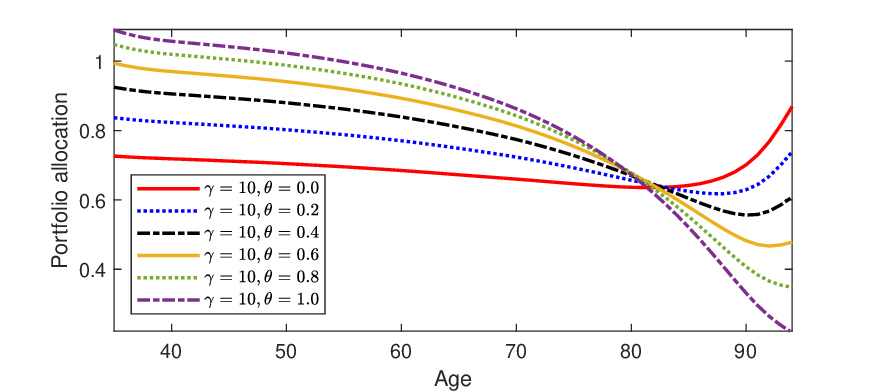}
        \label{fig_beta_2D_gamma10:figure2}
    }
    \\
    \subfigure[10-year inflation-linked bonds]
    {
        \includegraphics[width=3.3in,height=1.7in]{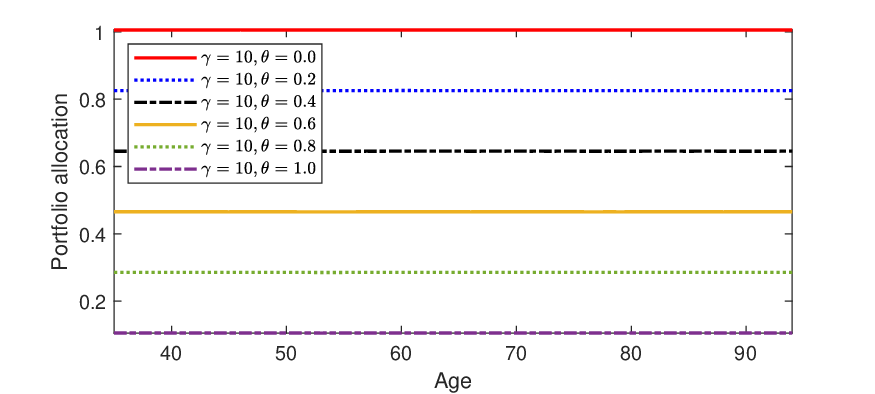}
        \label{fig_beta_2D_gamma10:figure3}
    }
    \hspace{-0.45in}
    \subfigure[Stocks]
    {
        \includegraphics[width=3.3in,height=1.7in]{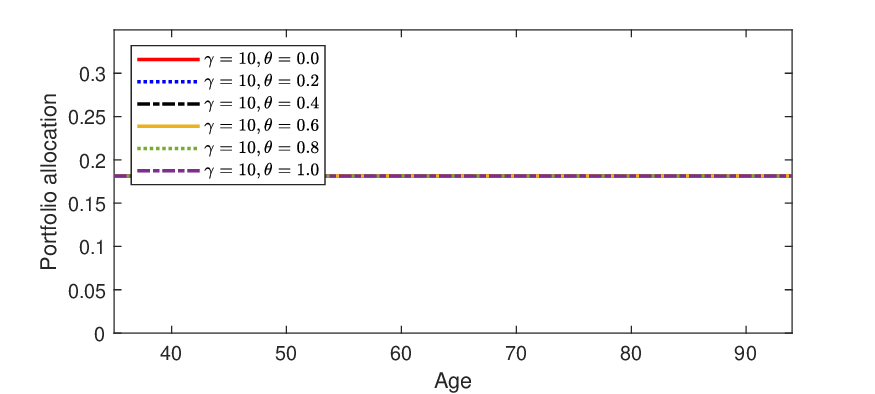}
        \label{fig_beta_2D_gamma10:figure4}
    }
    \subfigure[3-year nominal bonds (ITHD)]
    {
	\includegraphics[width=3.3in,height=1.7in]{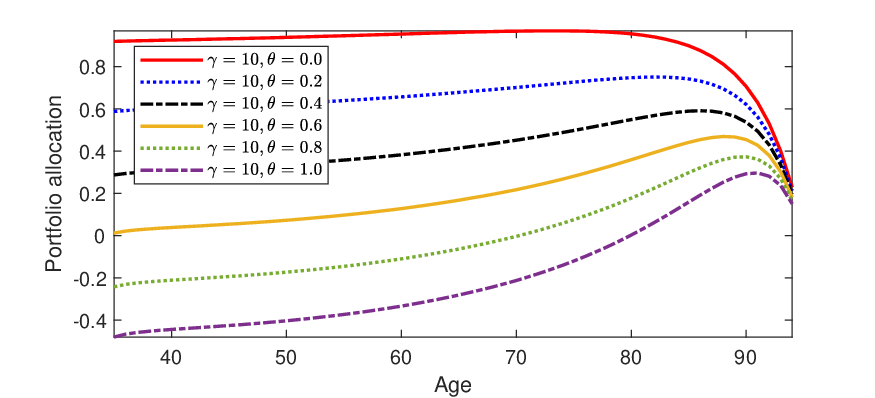}
        \label{fig_beta_2D_gamma10:figure5}
    }
    \hspace{-0.45in}
    \subfigure[10-year nominal bonds (ITHD)]
    {
	\includegraphics[width=3.3in,height=1.7in]{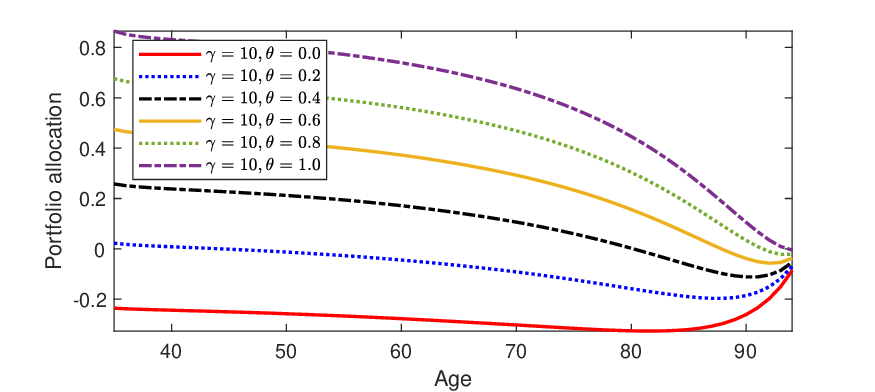}
        \label{fig_beta_2D_gamma10:figure6}
    }
    \\
    \subfigure[10-year inflation-linked bonds (ITHD)]
    {
        \includegraphics[width=3.3in,height=1.7in]{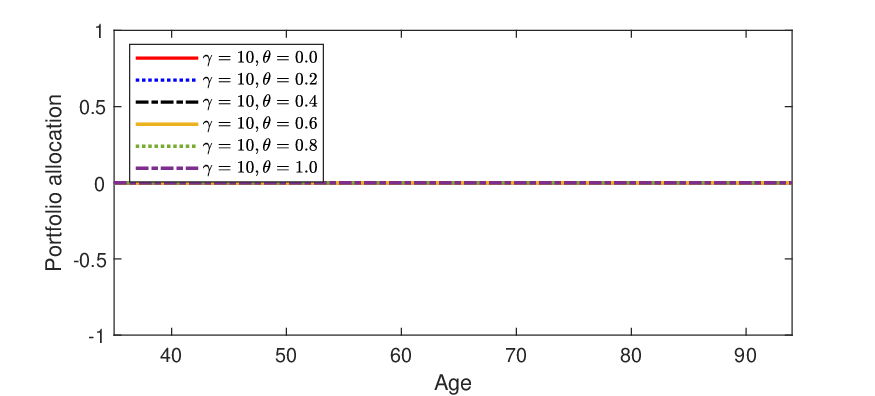}
        \label{fig_beta_2D_gamma10:figure7}
    }
    \hspace{-0.45in}
    \subfigure[Stocks (ITHD)]
    {
        \includegraphics[width=3.3in,height=1.7in]{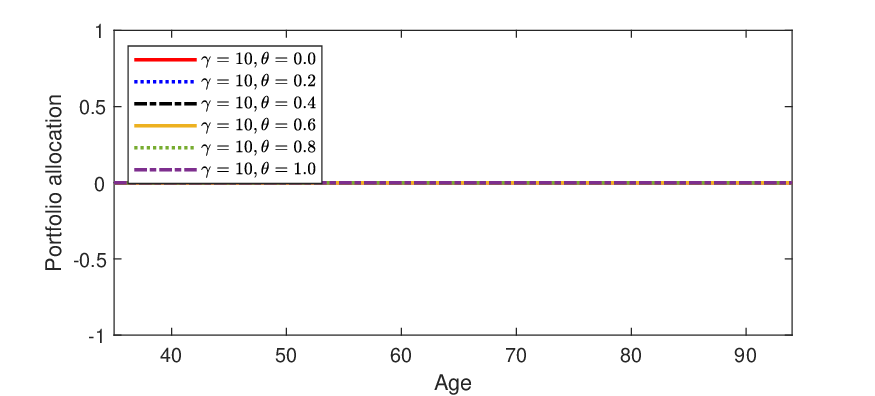}
        \label{fig_beta_2D_gamma10:figure8}
    }
    \caption{Expected optimal trading strategies for surplus process under $\gamma=10$. }
    \label{fig_beta_2D_gamma10:figures}
\end{figure}

\begin{figure}[htbp]
    \centering
    \subfigure[Breadwinner's consumption]
    {
	\includegraphics[width=3.0in,height=2.5in]{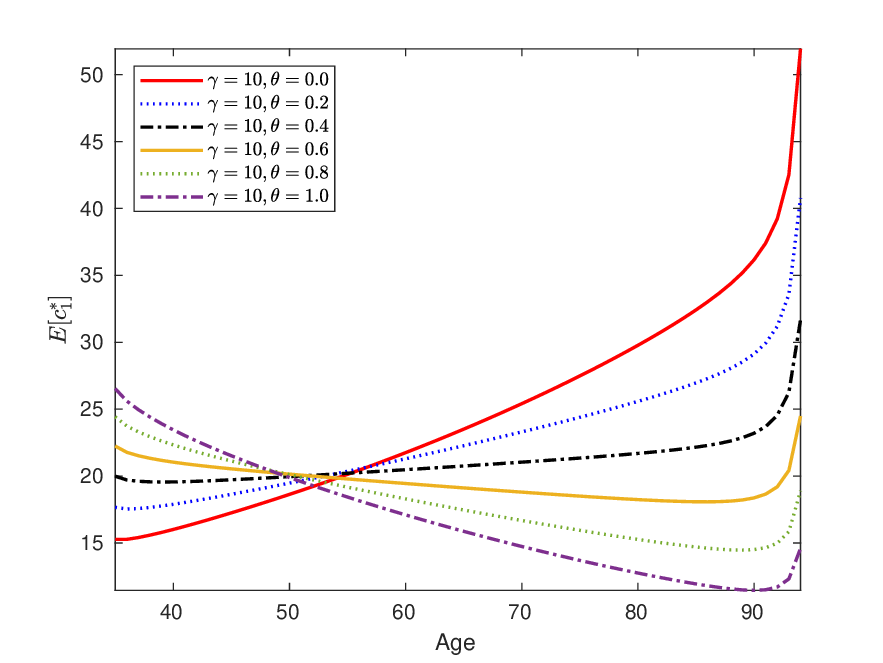}
        \label{fig_insurance_2D_gamma10:figure1}
    }
    \hspace{-0.45in}
    \subfigure[The rest of family's consumption]
    {
	\includegraphics[width=3.0in,height=2.5in]{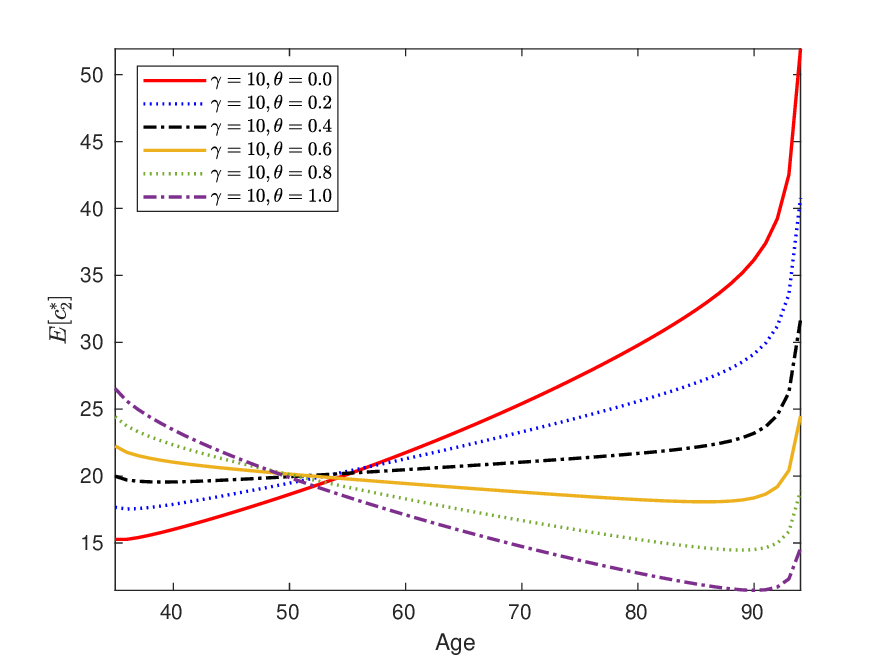}
        \label{fig_insurance_2D_gamma10:figure2}
    }
    \\
    \vspace{-0.10in}
    \subfigure[Insurance premium]
    {
        \includegraphics[width=3.0in,height=2.5in]{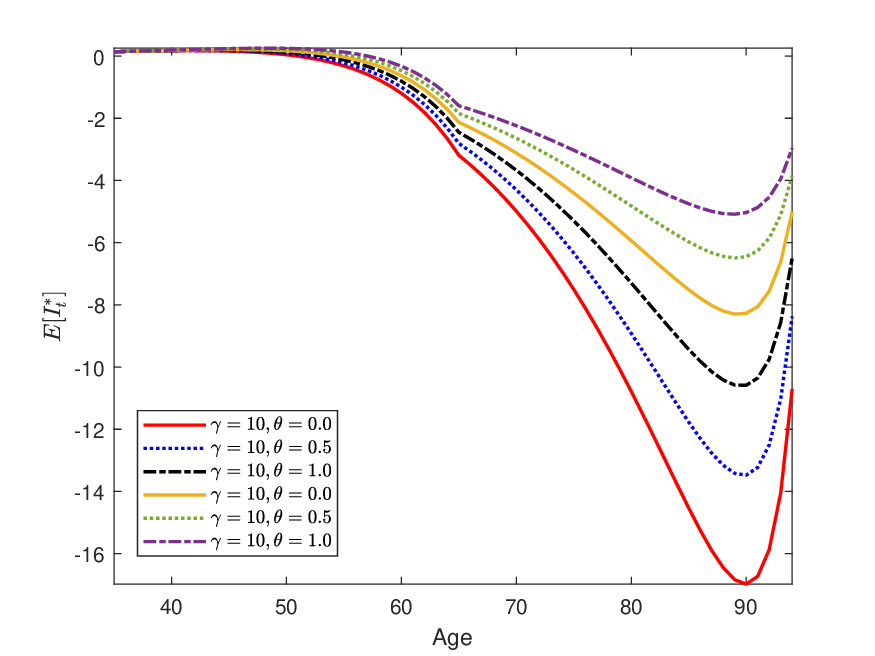}
        \label{fig_insurance_2D_gamma10:figure3}
    }
    \hspace{-0.45in}
    \subfigure[Insurance face value]
    {
        \includegraphics[width=3.0in,height=2.5in]{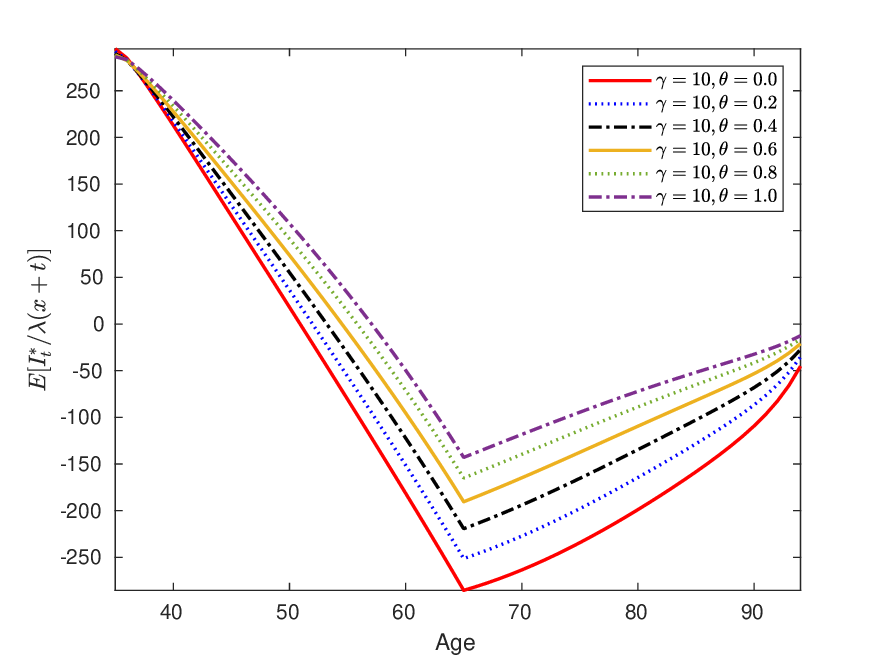}
        \label{fig_insurance_2D_gamma10:figure4}
    }
    \\
    \vspace{-0.10in}
    \subfigure[Surplus process]
    {
        \includegraphics[width=3.0in,height=2.5in]{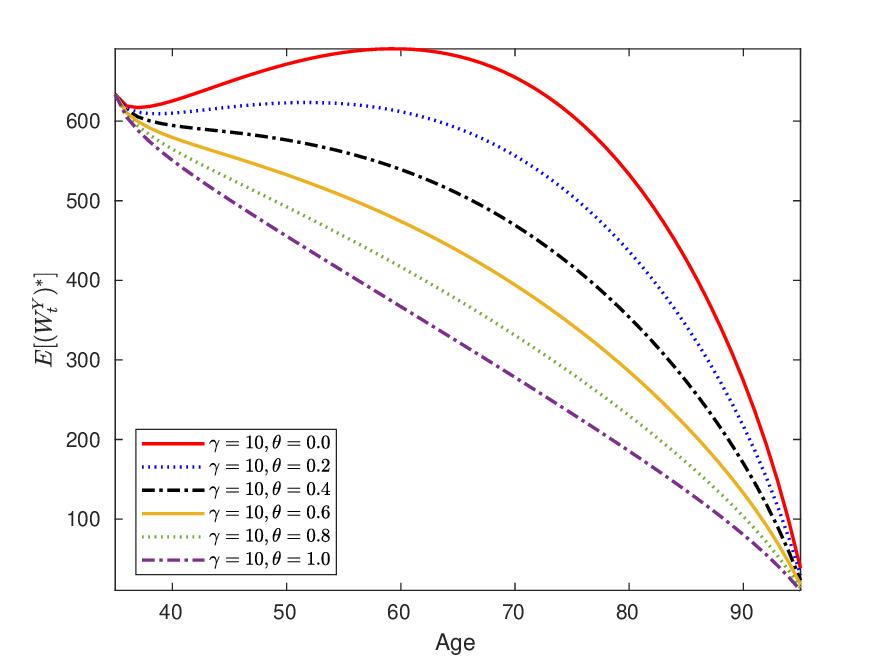}
        \label{fig_insurance_2D_gamma10:figure5}
    }
    \hspace{-0.45in}
    \subfigure[Bequest-wealth ratio]
    {
        \includegraphics[width=3.0in,height=2.5in]{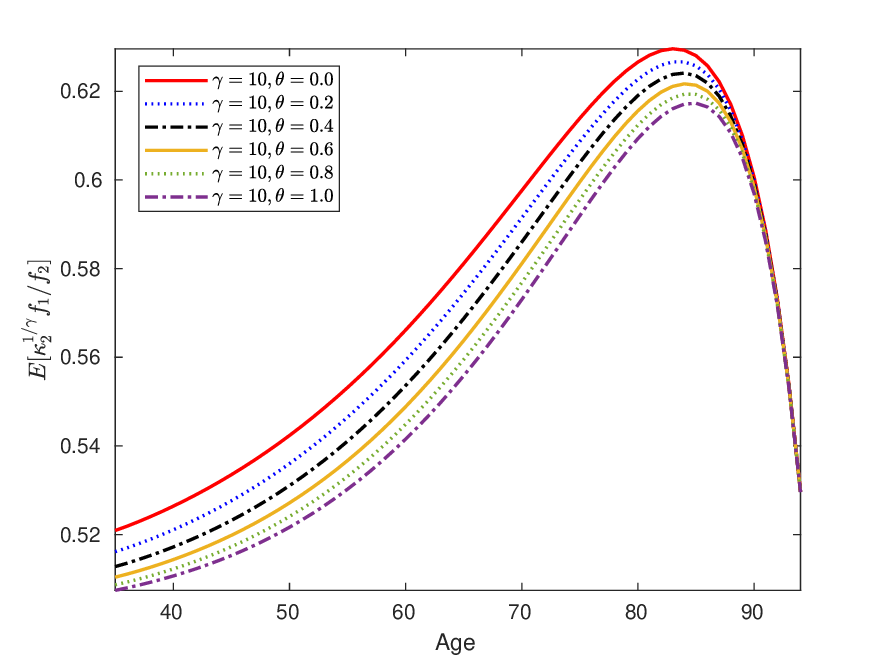}
        \label{fig_insurance_2D_gamma10:figure6}
    }
    \caption{Expected optimal consumption and insurance strategies under $\gamma=10$. }
    \label{fig_insurance_2D_gamma10:figures}
\end{figure}

\begin{figure}[htbp]
    \centering
    \subfigure[3-year nominal bonds]
    {
        \includegraphics[width=3.3in,height=1.7in]{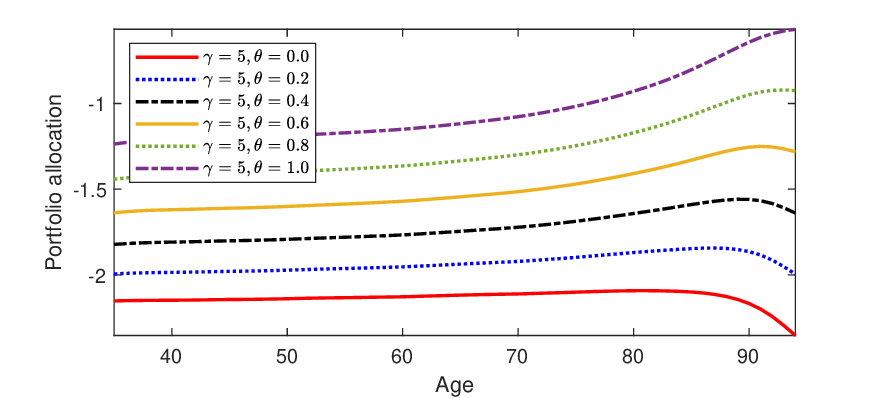}
        \label{fig_beta_2D_gamma5:figure1}
    }
    \hspace{-0.45in}
    \subfigure[10-year nominal bonds]
    {
        \includegraphics[width=3.3in,height=1.7in]{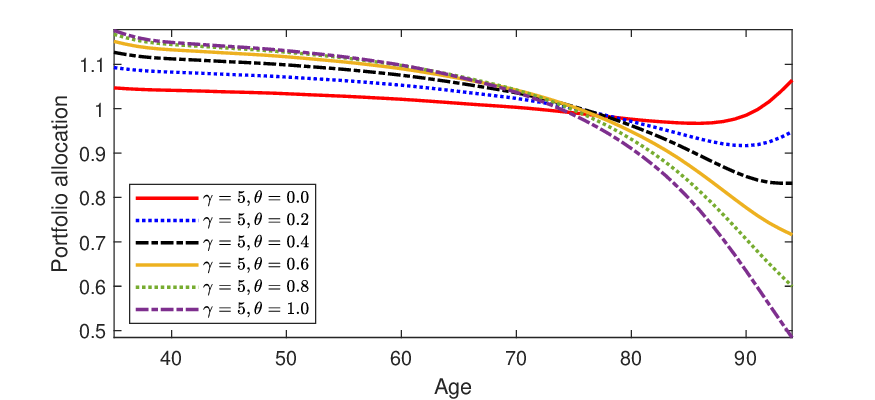}
        \label{fig_beta_2D_gamma5:figure2}
    }
    \\
    \subfigure[10-year inflation-linked bonds]
    {
        \includegraphics[width=3.3in,height=1.7in]{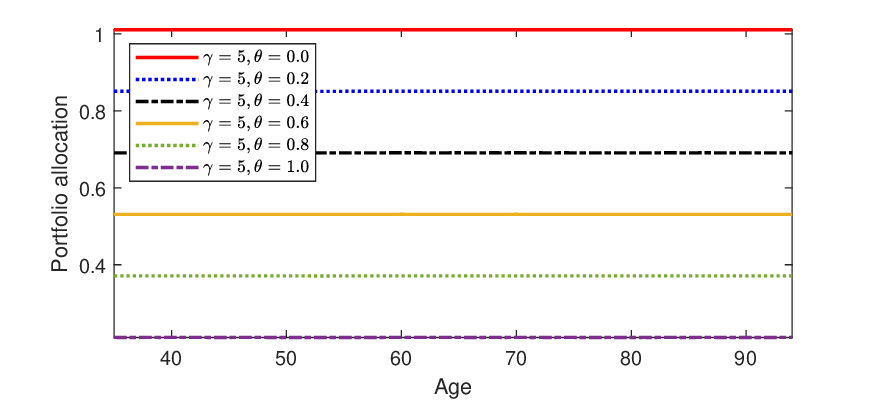}
        \label{fig_beta_2D_gamma5:figure3}
    }
    \hspace{-0.45in}
    \subfigure[Stocks]
    {
        \includegraphics[width=3.3in,height=1.7in]{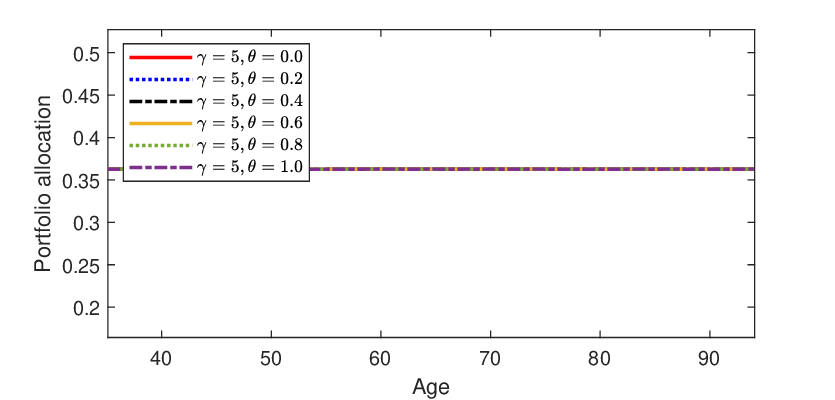}
        \label{fig_beta_2D_gamma5:figure4}
    }
    \subfigure[3-year nominal bonds (ITHD)]
    {
        \includegraphics[width=3.3in,height=1.7in]{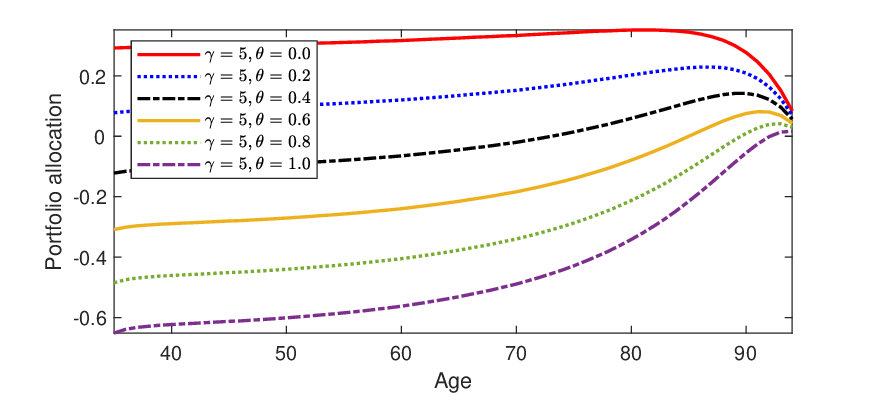}
        \label{fig_beta_2D_gamma5:figure5}
    }
    \hspace{-0.45in}
    \subfigure[10-year nominal bonds (ITHD)]
    {
        \includegraphics[width=3.3in,height=1.7in]{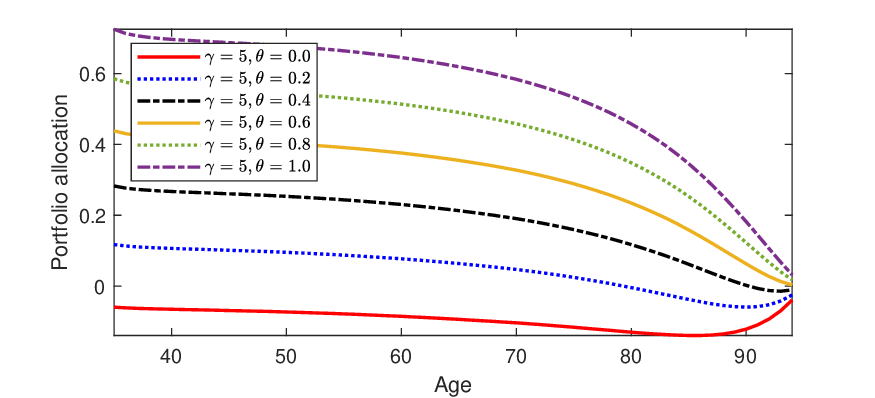}
        \label{fig_beta_2D_gamma5:figure6}
    }
    \\
    \subfigure[10-year inflation-linked bonds (ITHD)]
    {
        \includegraphics[width=3.3in,height=1.7in]{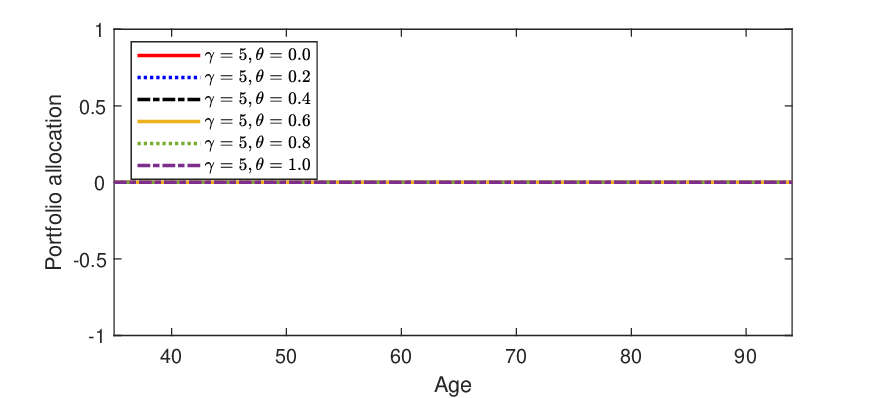}
        \label{fig_beta_2D_gamma5:figure7}
    }
    \hspace{-0.45in}
    \subfigure[Stocks (ITHD)]
    {
        \includegraphics[width=3.3in,height=1.7in]{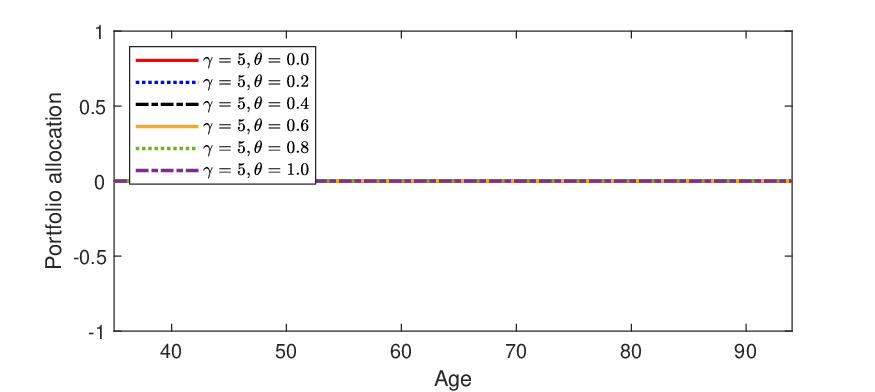}
        \label{fig_beta_2D_gamma5:figure8}
    }
    \caption{Expected optimal investment strategies for surplus process under $\gamma=5$. }
    \label{fig_beta_2D_gamma5:figures}
\end{figure}

\begin{figure}[htbp]
    \centering
    \subfigure[Breadwinner's consumption]
    {
        \includegraphics[width=3.0in,height=2.5in]{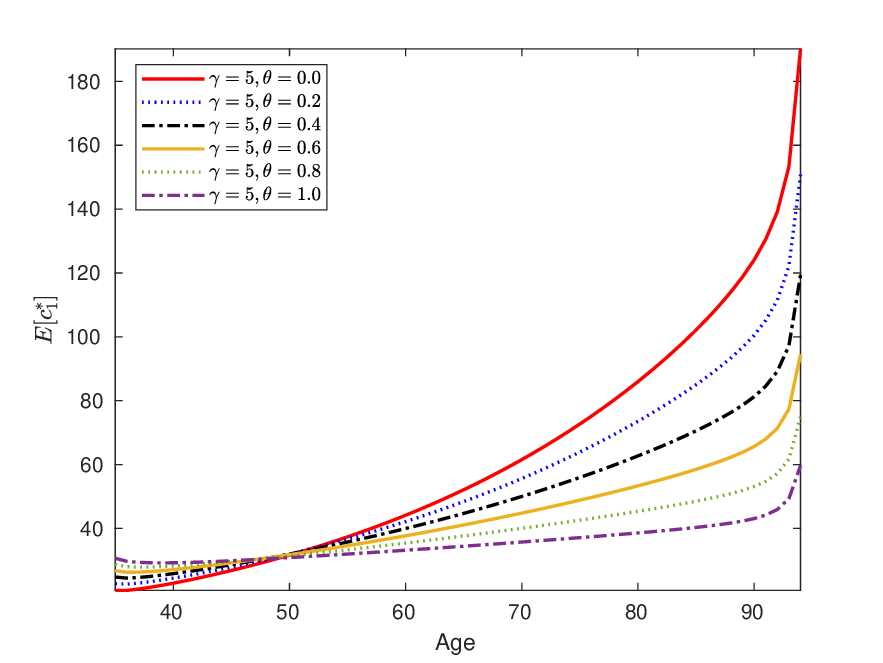}
        \label{fig_insurance_2D_gamma5:figure1}
    }
    \hspace{-0.45in}
    \subfigure[The rest of family's consumption]
    {
	\includegraphics[width=3.0in,height=2.5in]{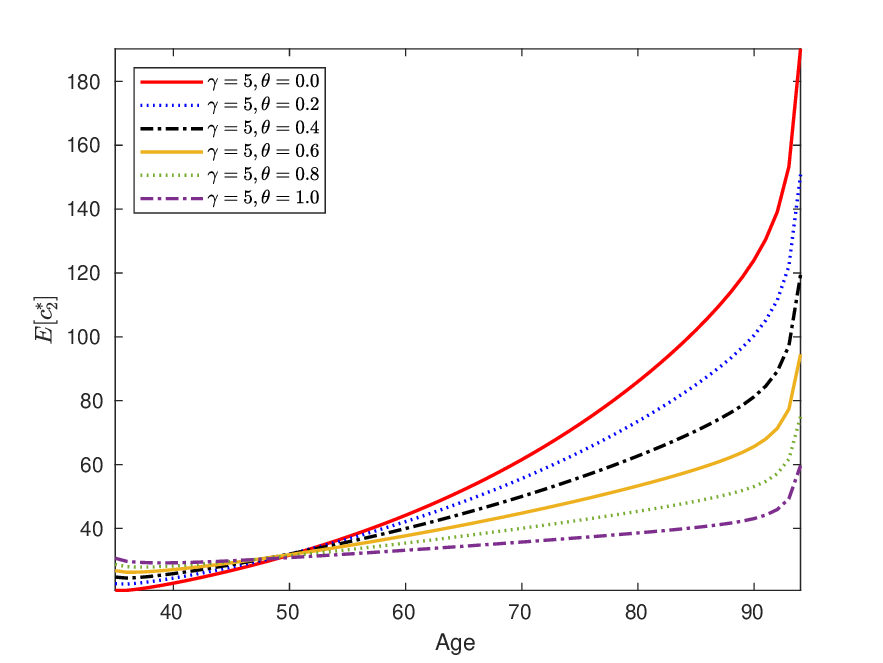}
        \label{fig_insurance_2D_gamma5:figure2}
    }
    \\
    \vspace{-0.10in}
    \subfigure[Insurance premium]
    {
        \includegraphics[width=3.0in,height=2.5in]{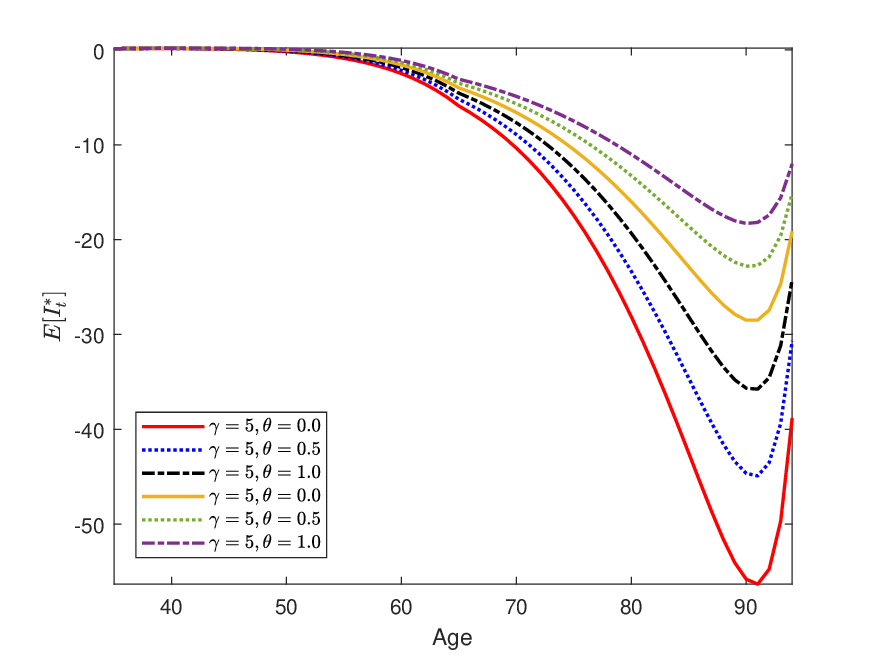}
        \label{fig_insurance_2D_gamma5:figure3}
    }
    \hspace{-0.45in}
    \subfigure[Insurance face value]
    {
        \includegraphics[width=3.0in,height=2.5in]{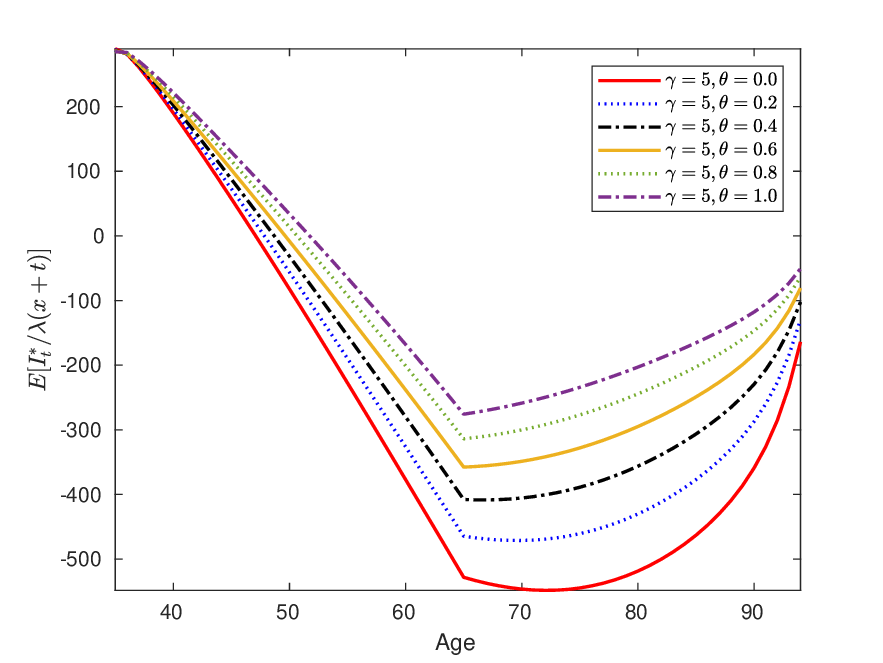}
        \label{fig_insurance_2D_gamma5:figure4}
    }
    \\
    \vspace{-0.10in}
    \subfigure[Surplus process]
    {
        \includegraphics[width=3.0in,height=2.5in]{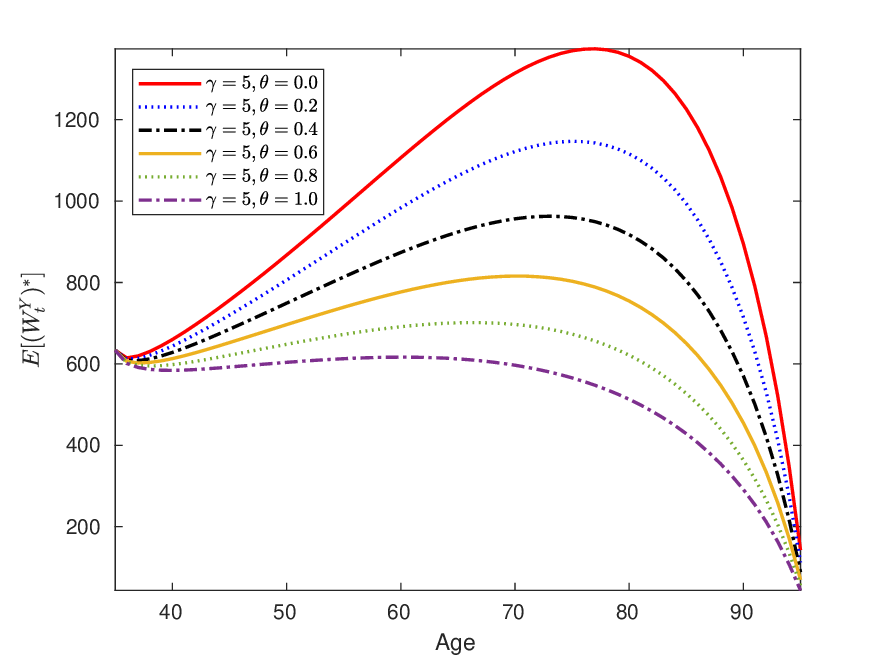}
        \label{fig_insurance_2D_gamma5:figure5}
    }
    \hspace{-0.45in}
    \subfigure[Bequest-wealth ratio]
    {
        \includegraphics[width=3.0in,height=2.5in]{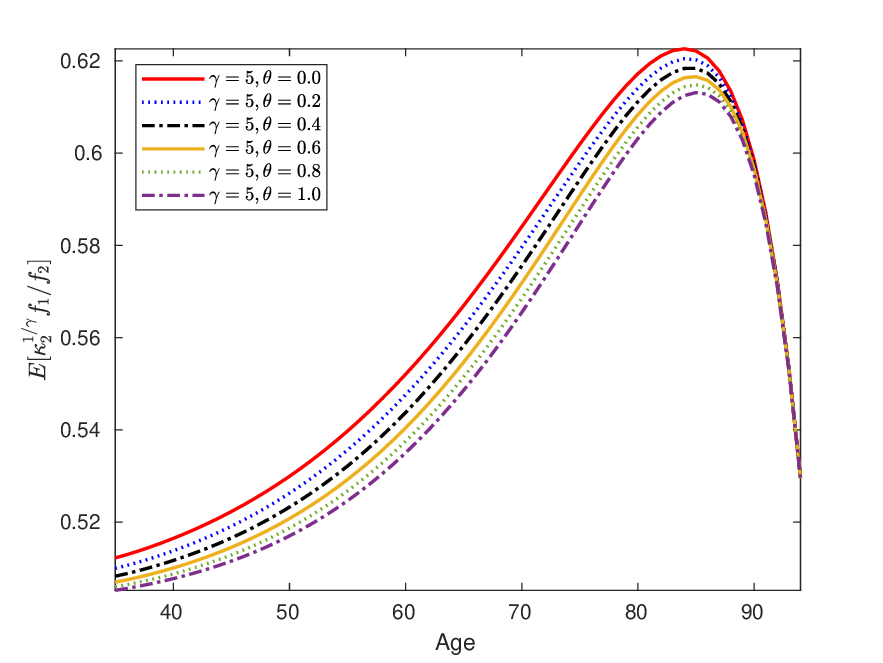}
        \label{fig_insurance_2D_gamma5:figure6}
    }
    \caption{Expected optimal insurance strategy and its components under $\gamma=5$. }
    \label{fig_insurance_2D_gamma5:figures}
\end{figure}

\subsection{Sensitivity of optimal strategies concerning the two factors $X_t$}

This section performs a static analysis of optimal strategies, considering two factors $X_t$. For all figures in this section, the range for $X_1$ is set as $[-0.1454,0.1454]$, and for $X_2$, it is $[-0.1696,0.1696]$. These ranges represent the maximum values for $X_1$ and $X_2$, respectively, in the Monte Carlo simulation. This simulation is conducted with 10,000,000 paths, with a time step of one year.

Figure \ref{fig_beta_3D_theta0} illustrates the optimal trading strategy $\beta^*$ considering two factors without money illusion ($\theta = 0$). The demand for nominal bonds escalates when the expected inflation $\pi^e_t$ increases. On the other hand, the demand for short-term nominal bonds rises with the real short rate $r_t$, while the demand for long-term nominal bonds shows an inverse relation. This outcome indicates that short-term nominal bonds are more favorable when two factors fluctuate over a short period. Figures for inflation-linked bonds and stocks are also presented, showing a decrease when $\pi^e_t$ and $r_t$ increase. In addition to the optimal trading strategy $\beta^*$, we plot the figures for standard myopic demand (SMD) and intertemporal hedging demand (ITHD). According to \eqref{adj_beta_decomposition}, $\text{ITHD}_3$ and $\text{ITHD}_4$ are null as the third and fourth entries of $\Sigma_X$ are zeros. Furthermore, the inflation hedging demand (IFHD) is outlined in Table \ref{Table_SMD_IFHD}, depicted as a constant vector that decreases linearly with $\theta$. Our sensitivity analysis reveals that ITHD primarily shapes the breadwinner's allocation in long-term nominal bonds, whereas SMD governs the allocations to the other three financial instruments.

Figure \ref{fig_beta_3D_theta08} shows the optimal trading strategy under $\theta=0.8$. When compared with Figure \ref{fig_beta_3D_theta0}, it is evident that when the entire family ignores inflation risk, they make several adjustments to their short-period allocations: (1) They purchase more short-term nominal bonds and fewer long-term nominal bonds. (2) They reduce their long positions and increase their short positions in inflation-linked bonds. (3) Their trading strategy for stocks remains relatively unchanged.

Figure \ref{fig_I_3D_theta0} plots the optimal consumption and insurance/annuity strategies when $\theta=0$. The results show that the family's consumption and annuity purchases follow an upward ``U-shape" with respect to the two factors (see Figure \ref{fig_I_3D_theta0:annuity}, the negative downward ``U-shape" of  $I^*$ means a positive upward ``U-shape" annuity demand). In comparison, life insurance purchases exhibit a downward ``U-shape'' with two factors. In other words, when the real short rate and the expected inflation rate are both high or low, the family increases their consumption and annuity purchases while reducing their life insurance purchases. A potential explanation for the above observation is that life insurance protects future income, acting as a substitute for current consumption. Simultaneously, annuities function as a source of current income, supplementing current consumption. As a result, annuity demand shows an upward ``U-shape" aligned with consumption. In contrast, life insurance demand performs a downward ``U-shape". Moreover, we also find the following interesting findings: (1) Life insurance demand is more sensitive to expected inflation than to the real short rate. (2) Annuity demand is sensitive to expected inflation and real short rates. Lastly, according to \eqref{bequest_wealth_ratio}, we plot the figures for the bequest-wealth ratio and future income. The result shows that the bequest-wealth ratio primarily shapes the demand for life insurance and annuities. 

Figure \ref{fig_I_3D_theta08} displays the optimal consumption, life insurance, and annuity strategies when $\theta=0.8$. Compared with Figure \ref{fig_I_3D_theta0}, we find that the family ignoring the inflation risk will: (1) consume more in their early years and consume less in their old years, (2) buy more insurance and fewer annuities, and (3) slightly decrease their bequest-wealth demand. These results coincide with the sensitivity analysis to age, as detailed in Section \ref{sensitivity_to_age}.

\begin{sidewaysfigure}[htbp]
	\centering
    \subfigure[$\beta^*_1$]
    {
	    	\includegraphics[width=2.0in,height=2.0in]{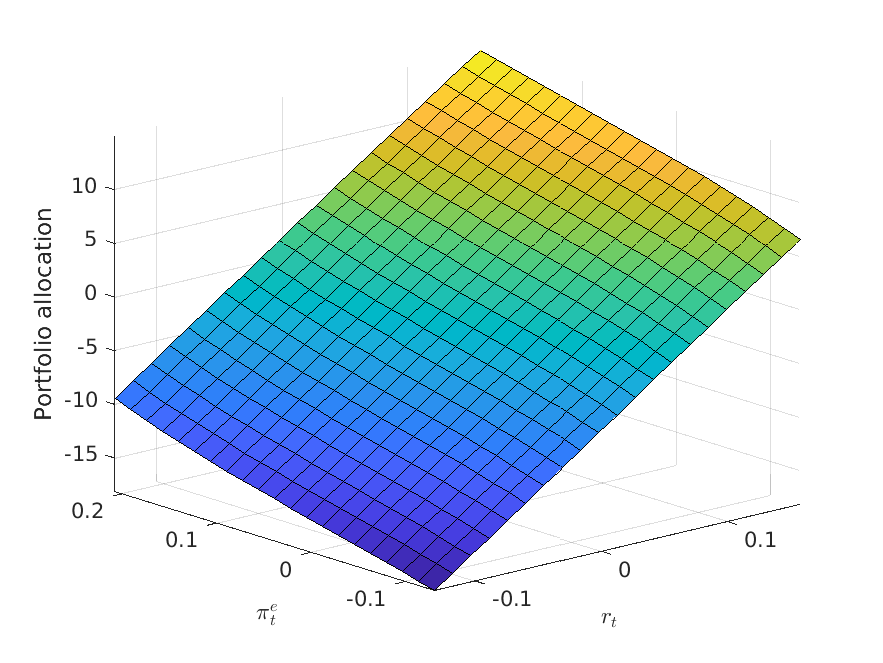}
    }
    \subfigure[$\beta^*_2$]
    {
    		\includegraphics[width=2.0in,height=2.0in]{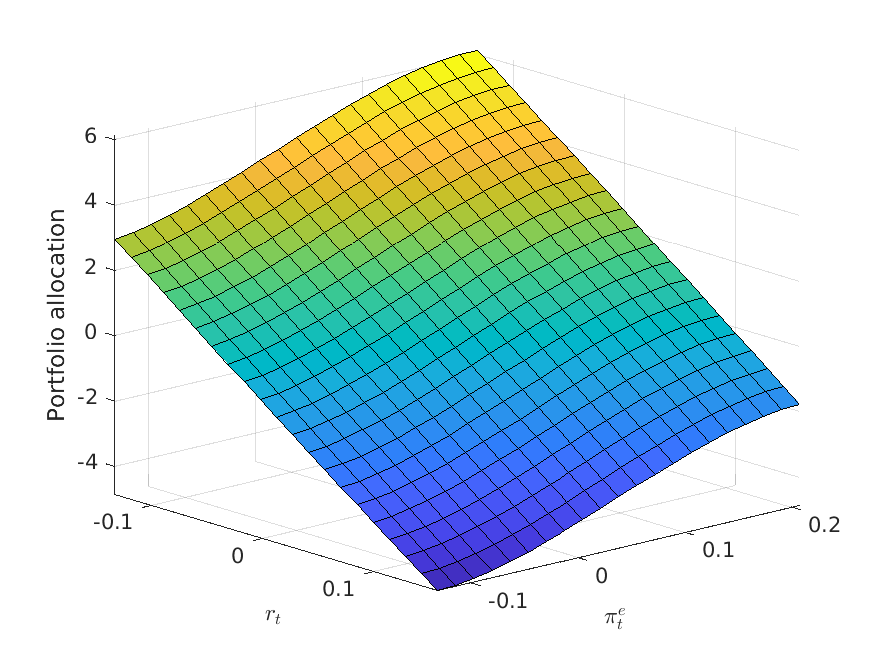}
    }
    \subfigure[$\beta^*_3$]
    {
    		\includegraphics[width=2.0in,height=2.0in]{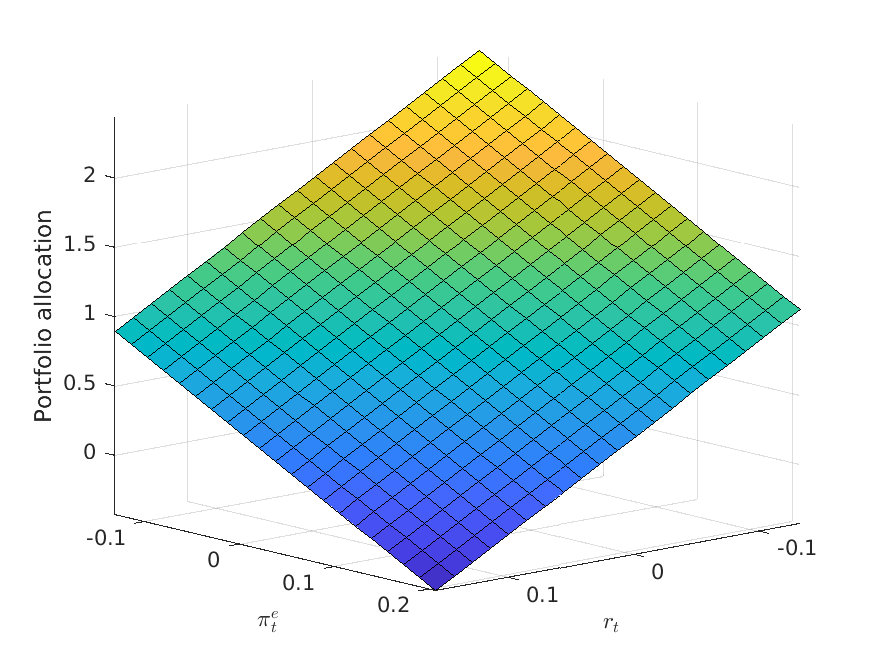}
    }
    \subfigure[$\beta^*_4$]
    {
	    	\includegraphics[width=2.0in,height=2.0in]{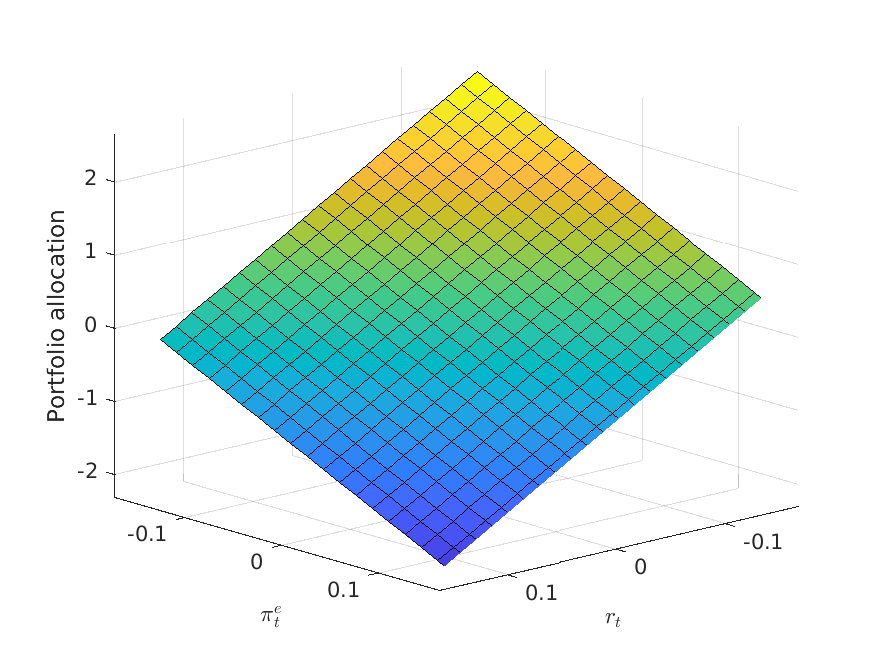}
    }
    \subfigure[$\text{SMD}_1$]
    {
    	\includegraphics[width=2.0in,height=2.0in]{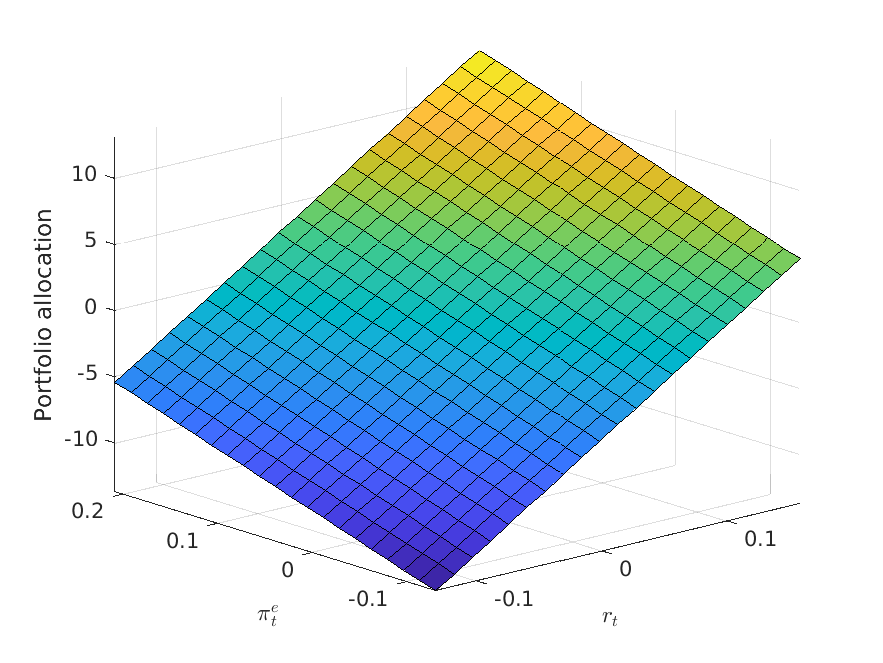}
    }
    \subfigure[$\text{SMD}_2$]
    {
    	\includegraphics[width=2.0in,height=2.0in]{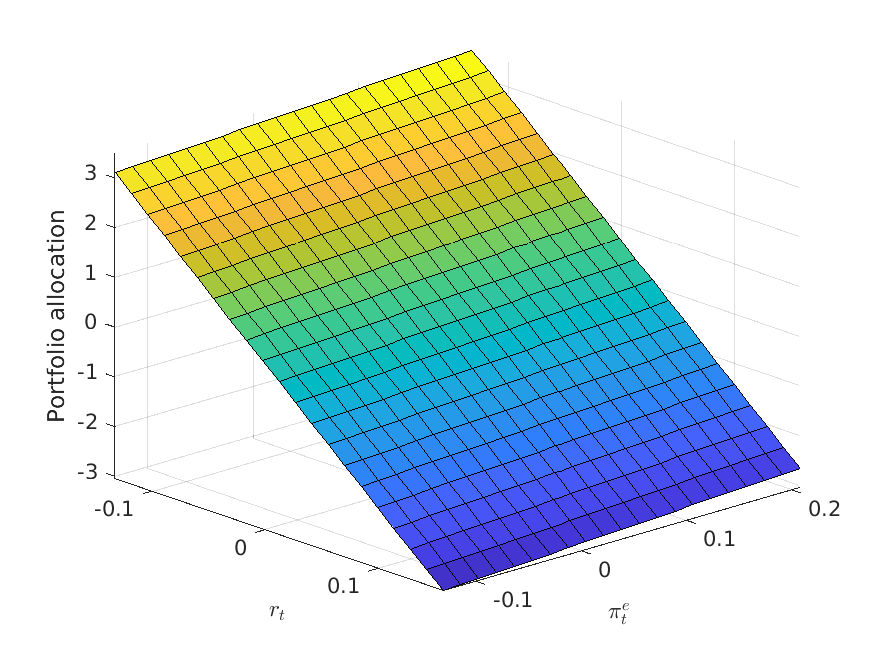}
    }
    \subfigure[$\text{SMD}_3$]
    {
    	\includegraphics[width=2.0in,height=2.0in]{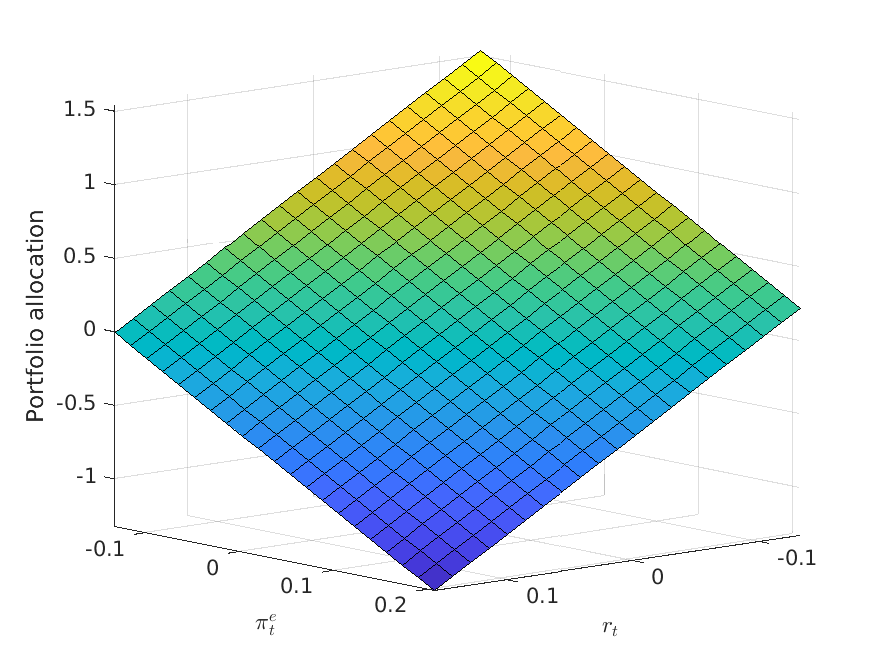}
    }
    \subfigure[$\text{SMD}_4$]
    {
    	\includegraphics[width=2.0in,height=2.0in]{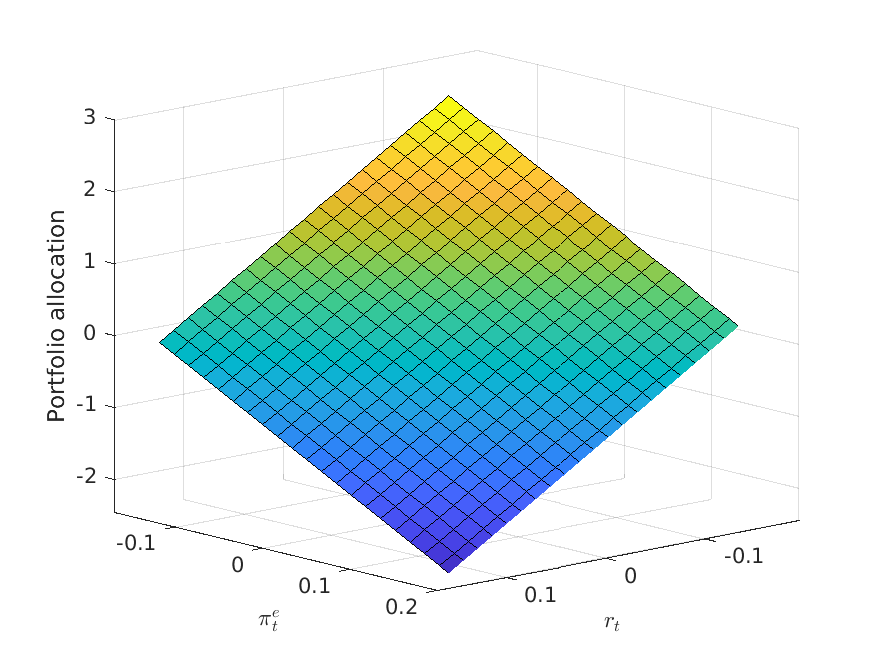}
    }
    \subfigure[$\text{ITHD}_1$]
    {
    	\includegraphics[width=2.0in,height=2.0in]{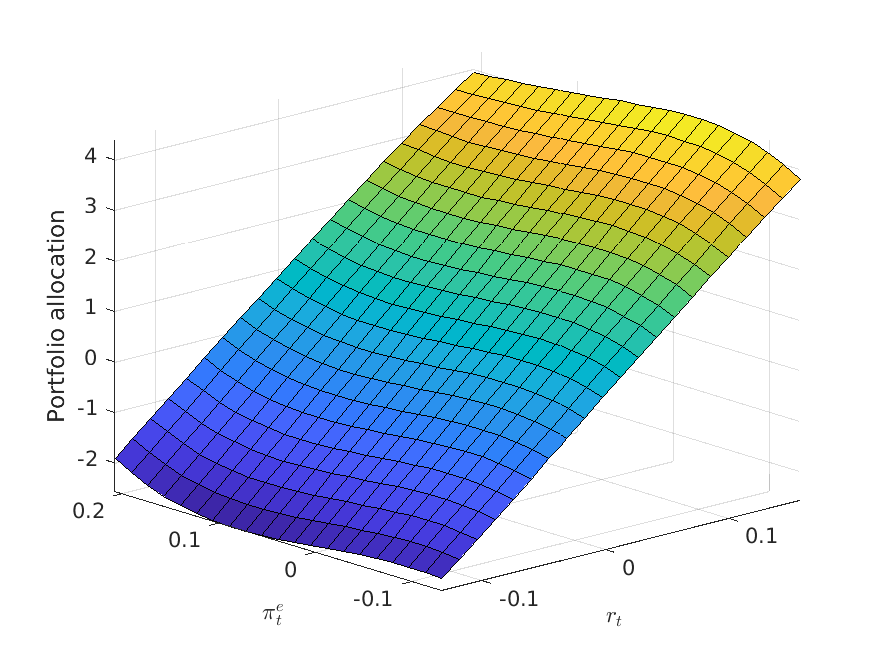}
    }
    \subfigure[$\text{ITHD}_2$]
    {
    	\includegraphics[width=2.0in,height=2.0in]{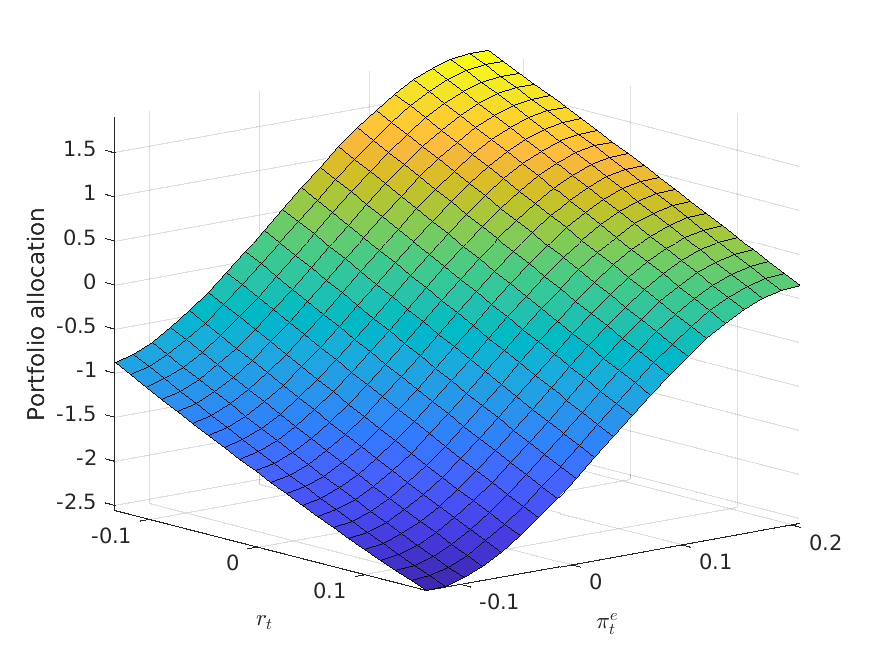}
    }
    \subfigure[$\text{ITHD}_3$]
    {
    	\includegraphics[width=2.0in,height=2.0in]{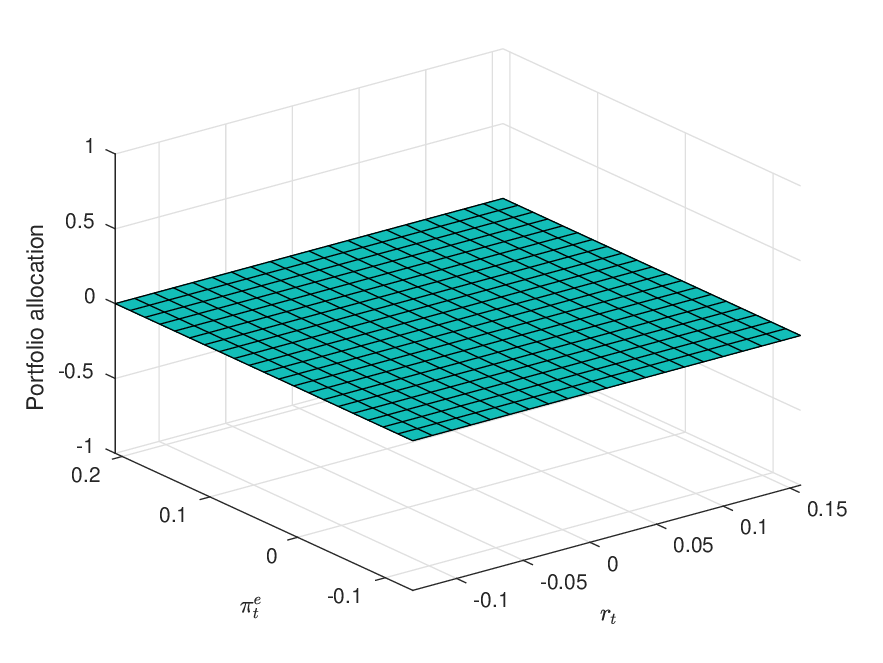}
    }
    \subfigure[$\text{ITHD}_4$]
    {
    	\includegraphics[width=2.0in,height=2.0in]{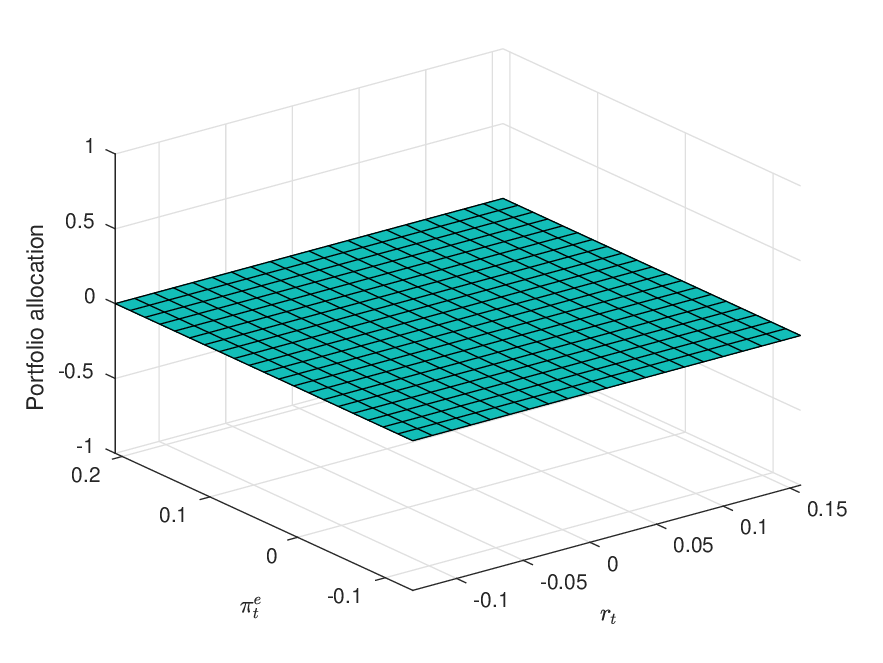}
    }
	\caption{Family's trading strategy when $\theta=0$ and $t=5$ (age 40).}
 \label{fig_beta_3D_theta0}
\end{sidewaysfigure}

\begin{sidewaysfigure}[htbp]
	\centering
    \subfigure[$\beta^*_1$]
    {
	    	\includegraphics[width=2.0in,height=2.0in]{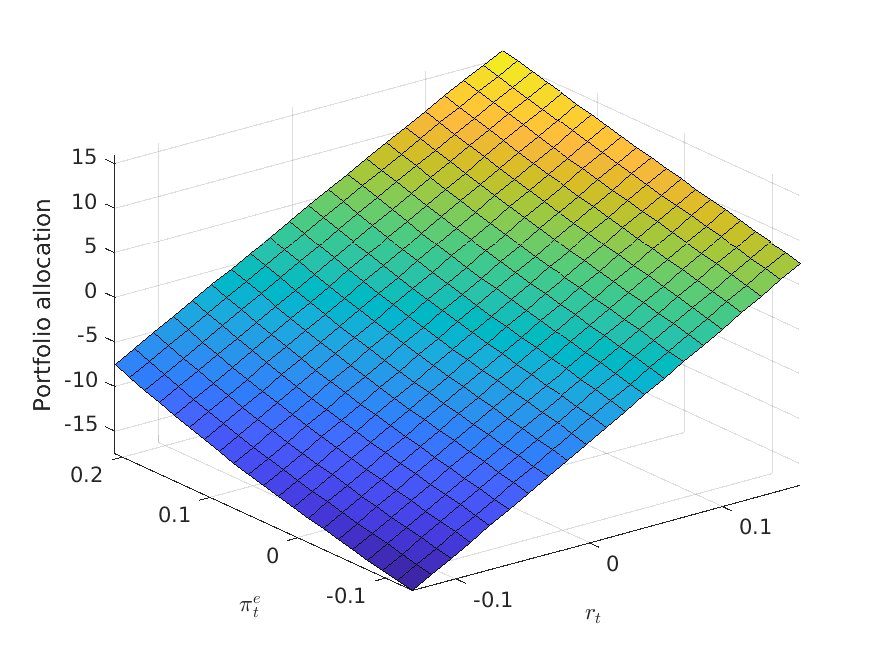}
    }
    \subfigure[$\beta^*_2$]
    {
    		\includegraphics[width=2.0in,height=2.0in]{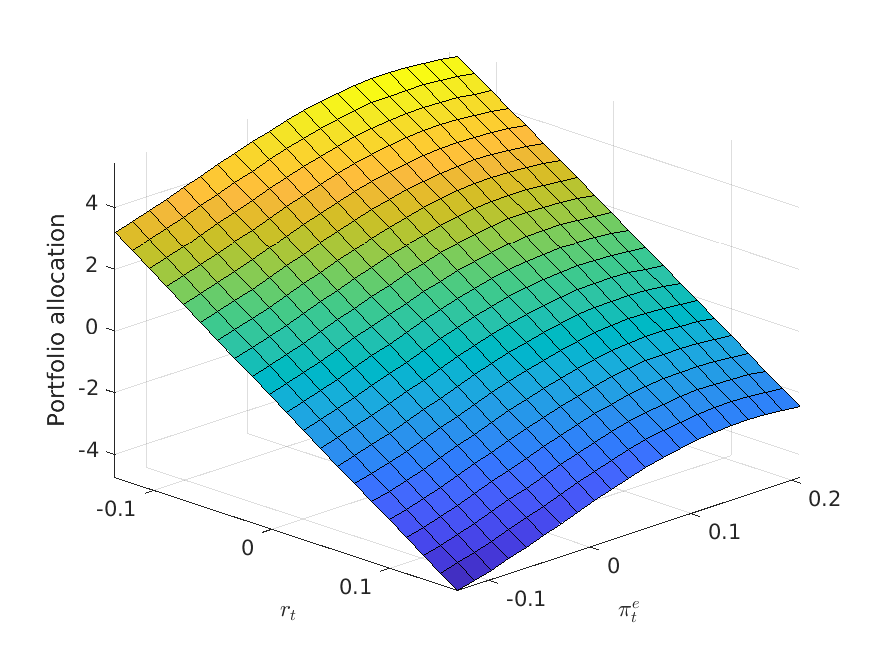}
    }
    \subfigure[$\beta^*_3$]
    {
    		\includegraphics[width=2.0in,height=2.0in]{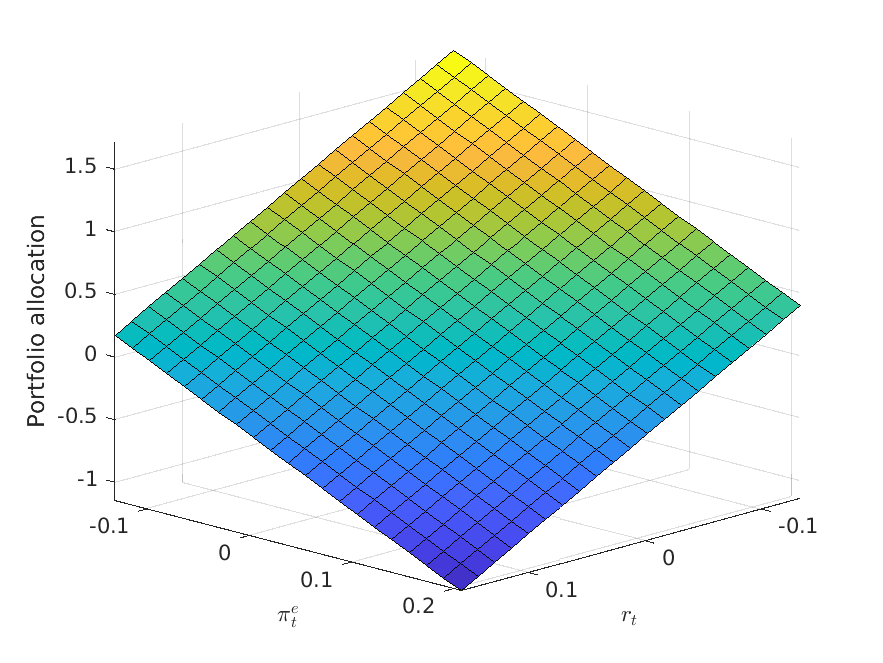}
    }
    \subfigure[$\beta^*_4$]
    {
	    	\includegraphics[width=2.0in,height=2.0in]{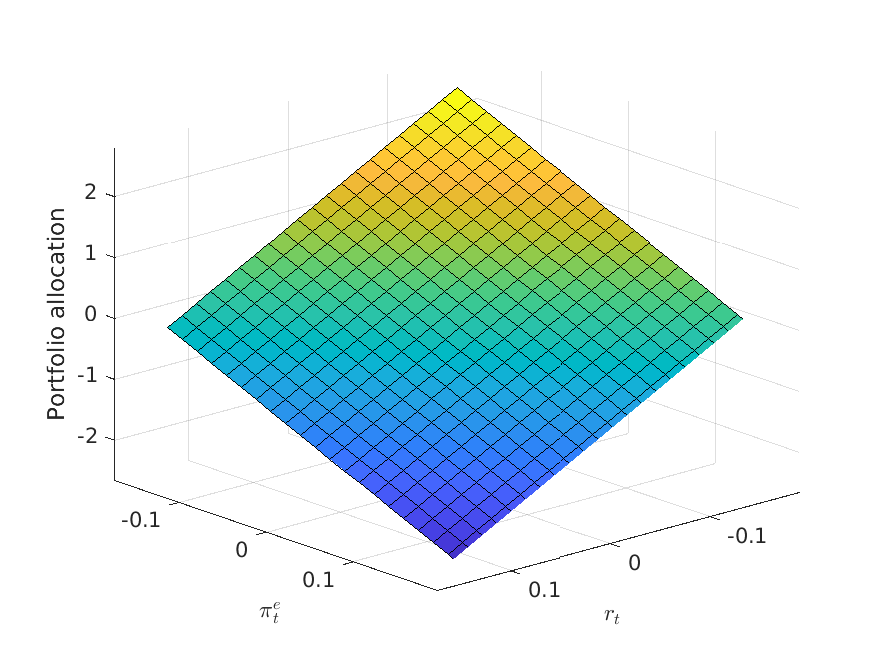}
    }
    \subfigure[$\text{SMD}_1$]
    {
    	\includegraphics[width=2.0in,height=2.0in]{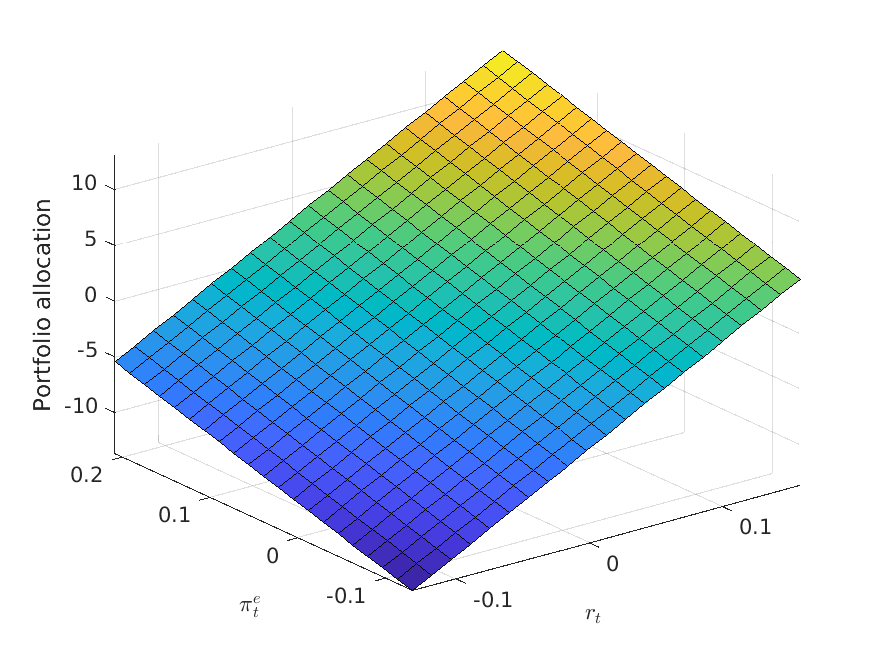}
    }
    \subfigure[$\text{SMD}_2$]
    {
    	\includegraphics[width=2.0in,height=2.0in]{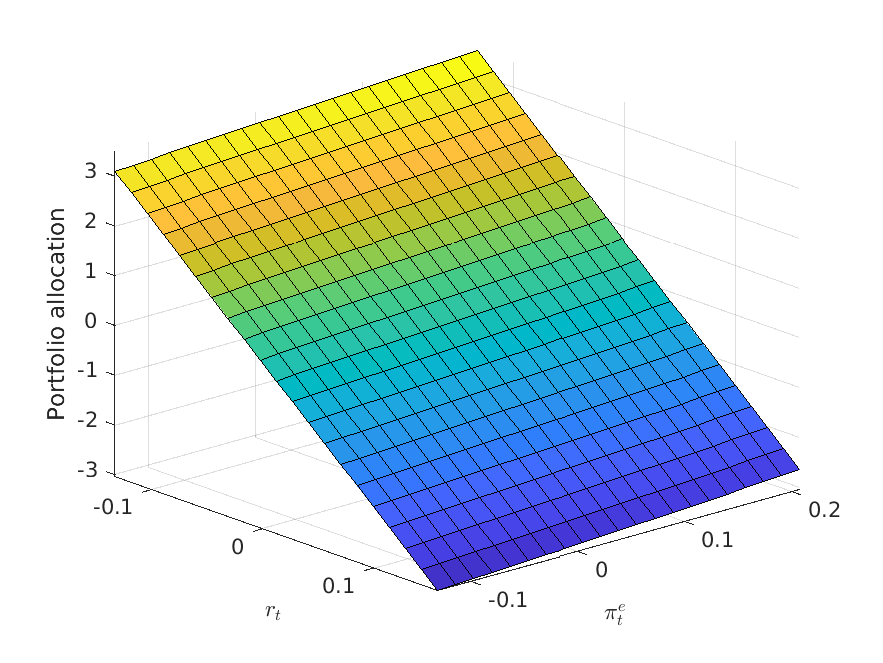}
    }
    \subfigure[$\text{SMD}_3$]
    {
    	\includegraphics[width=2.0in,height=2.0in]{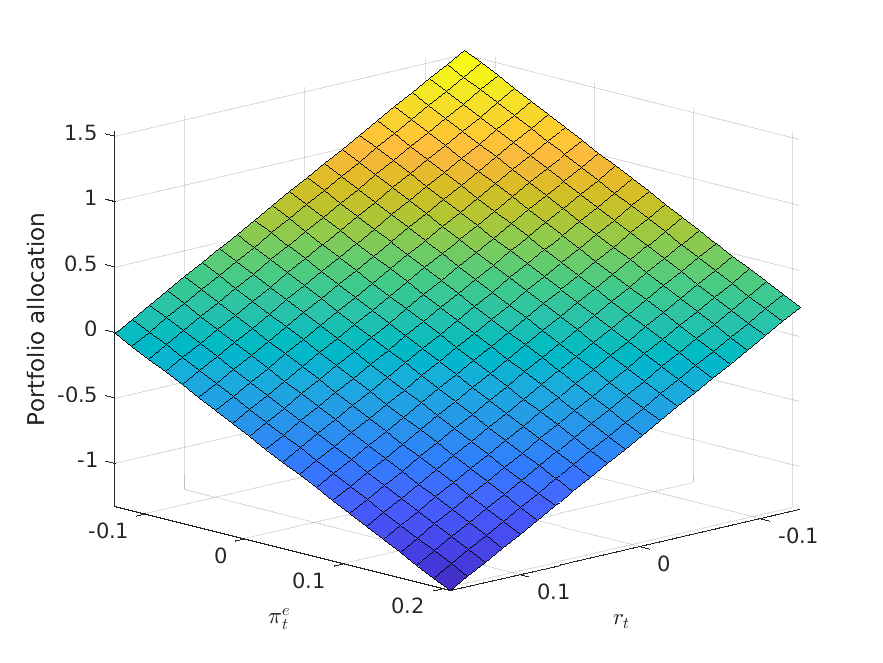}
    }
    \subfigure[$\text{SMD}_4$]
    {
    	\includegraphics[width=2.0in,height=2.0in]{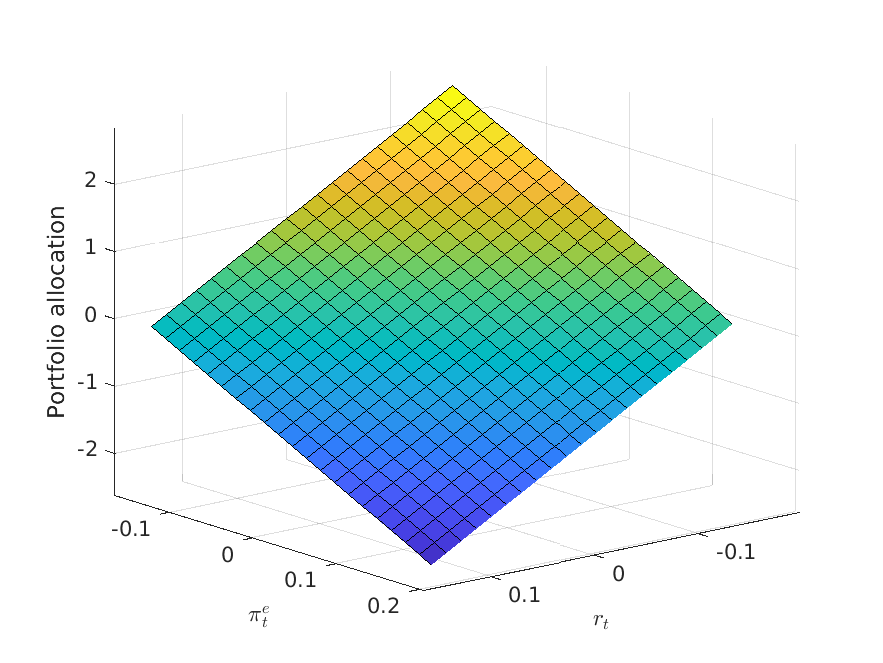}
    }
    \subfigure[$\text{ITHD}_1$]
    {
    	\includegraphics[width=2.0in,height=2.0in]{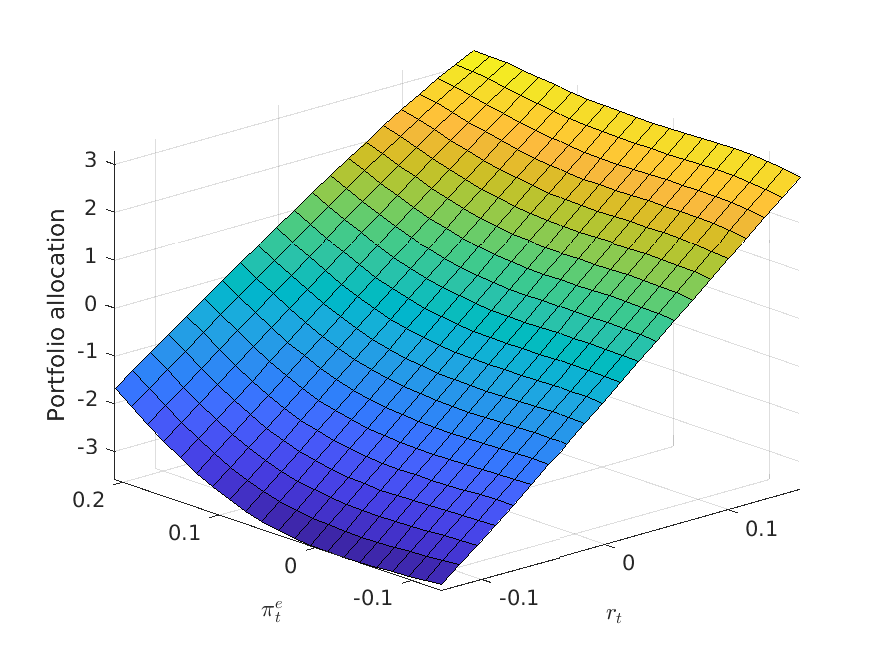}
    }
    \subfigure[$\text{ITHD}_2$]
    {
    	\includegraphics[width=2.0in,height=2.0in]{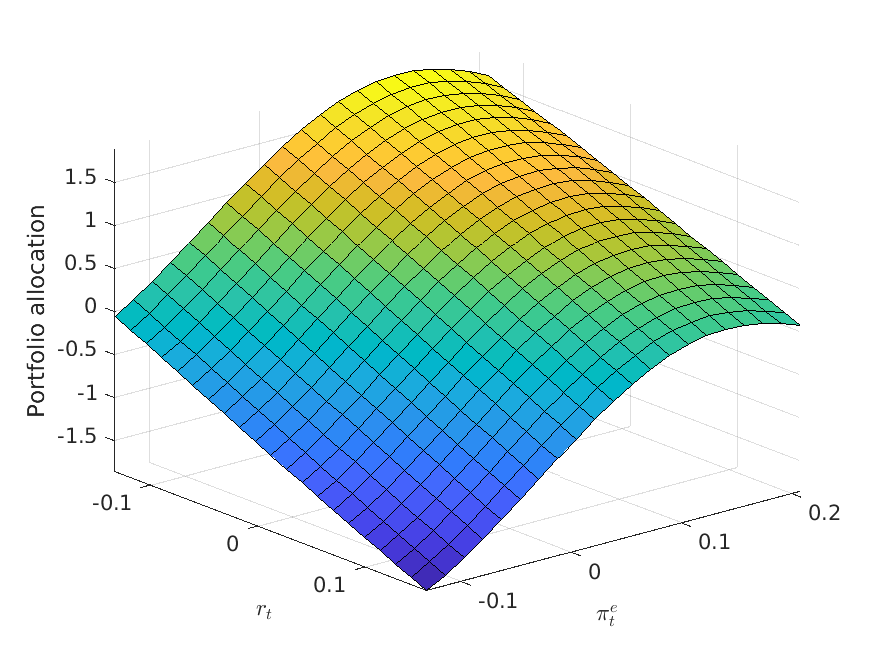}
    }
    \subfigure[$\text{ITHD}_3$]
    {
    	\includegraphics[width=2.0in,height=2.0in]{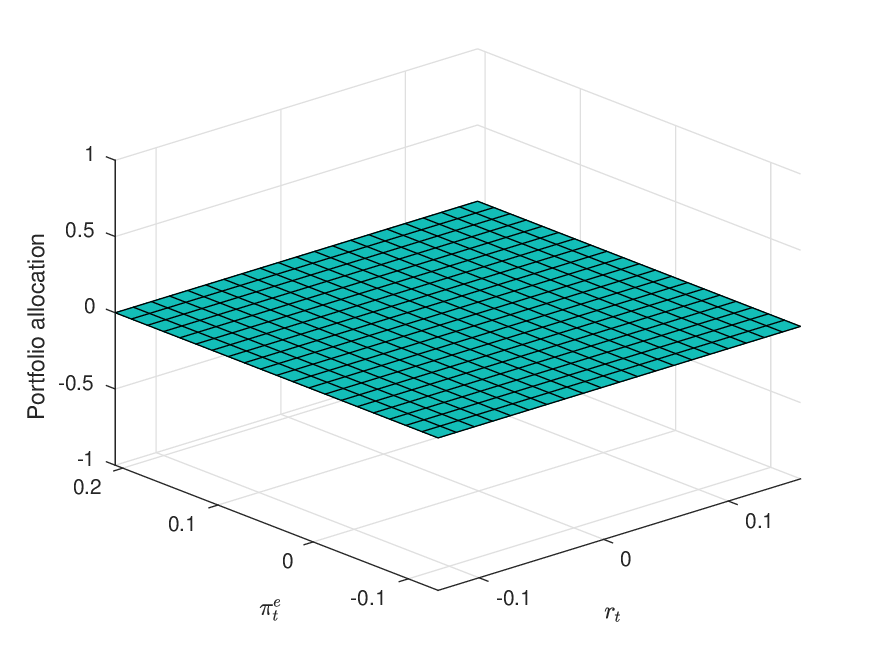}
    }
    \subfigure[$\text{ITHD}_4$]
    {
    	\includegraphics[width=2.0in,height=2.0in]{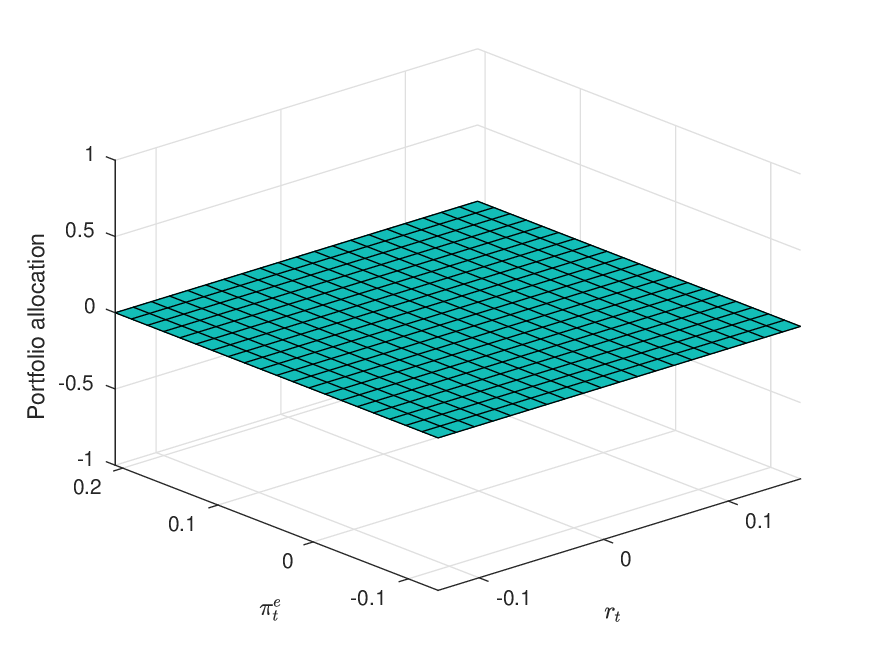}
    }
	\caption{Family's trading strategy when $\theta=0.8$ and $t=5$ (age 40).}
	\label{fig_beta_3D_theta08}
\end{sidewaysfigure}

\begin{sidewaysfigure}[htbp]
    \centering
    \subfigure[$c^*_1$ (age 40)]
    {
        \includegraphics[width=2.0in,height=2.0in]{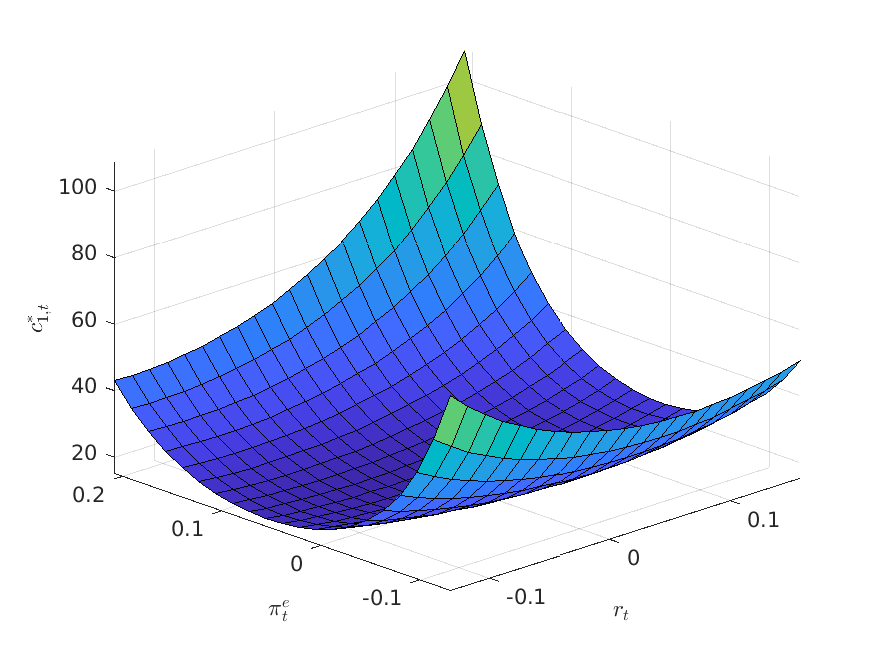}
    }
    \subfigure[$c^*_1$ (age 90)]
    {
	\includegraphics[width=2.0in,height=2.0in]{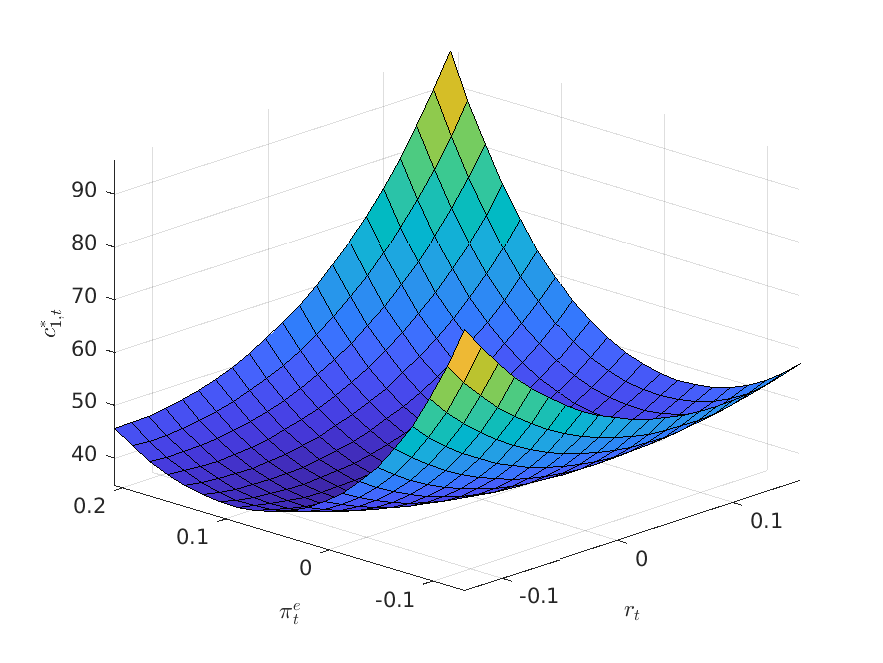}
    }
    \subfigure[$c^*_2$ (age 40)]
    {
	\includegraphics[width=2.0in,height=2.0in]{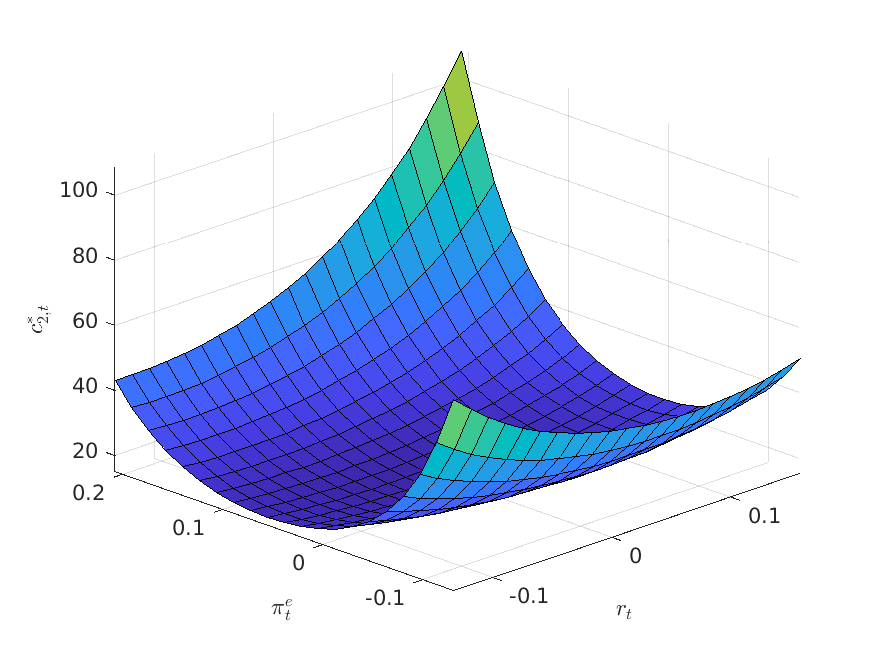}
    }
    \subfigure[$c^*_2$ (age 90)]
    {
	\includegraphics[width=2.0in,height=2.0in]{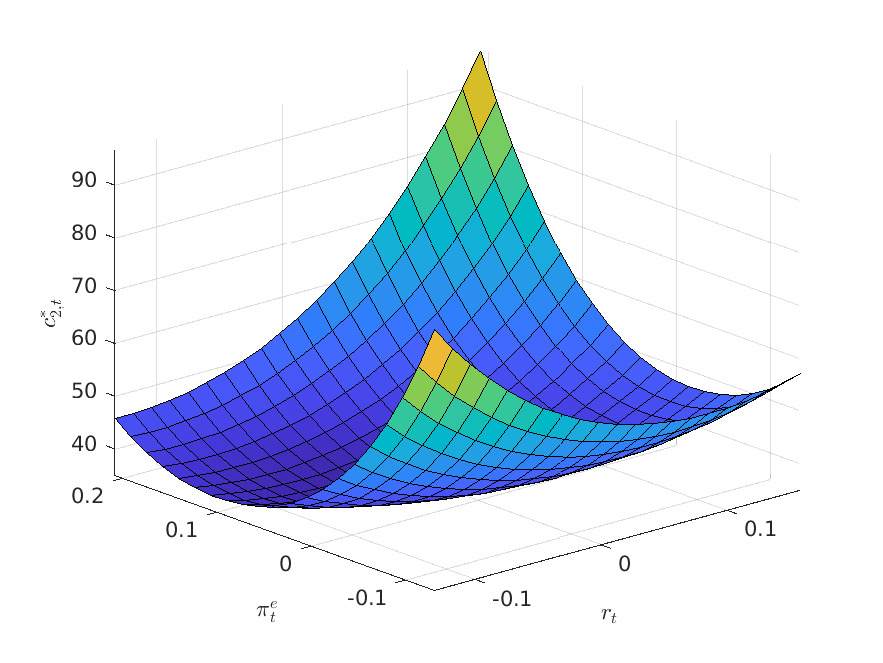}
    }
    \vspace{-0.15in}
    \subfigure[$I^*$ (age 40)]
    {
	\includegraphics[width=2.0in,height=2.0in]{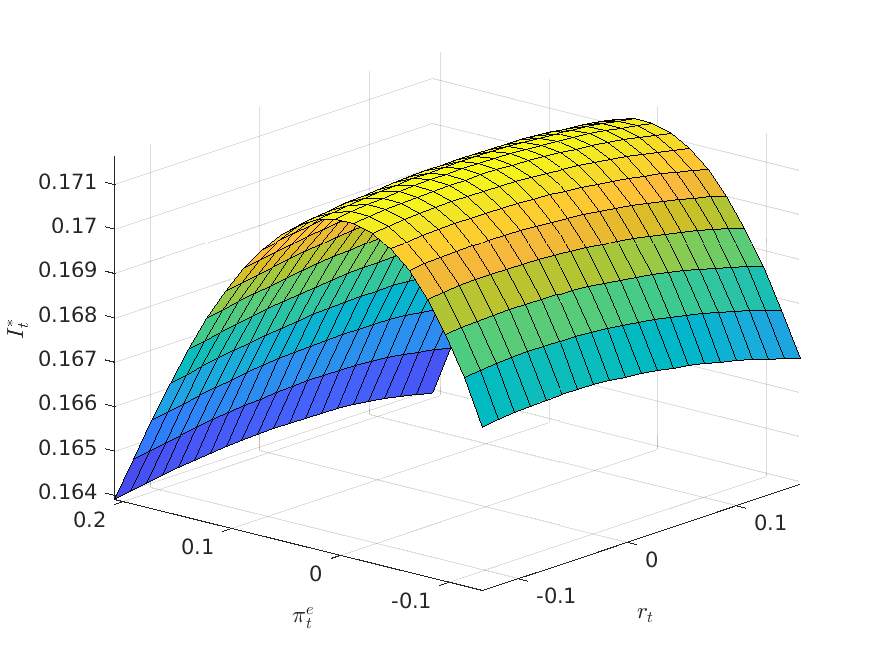}
	\label{fig_I_3D_theta0:life_insurance}
    }
    \subfigure[$I^*$ (age 90)]
    {
	\includegraphics[width=2.0in,height=2.0in]{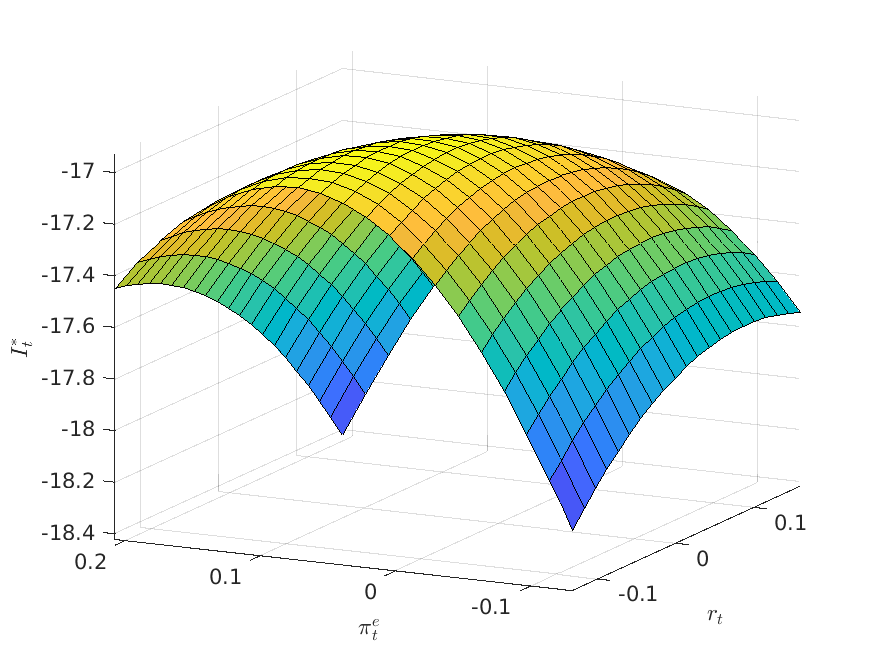}
	\label{fig_I_3D_theta0:annuity}
    }
    \subfigure[Insurance face value (age 40)]
    {
    		\includegraphics[width=2.0in,height=2.0in]{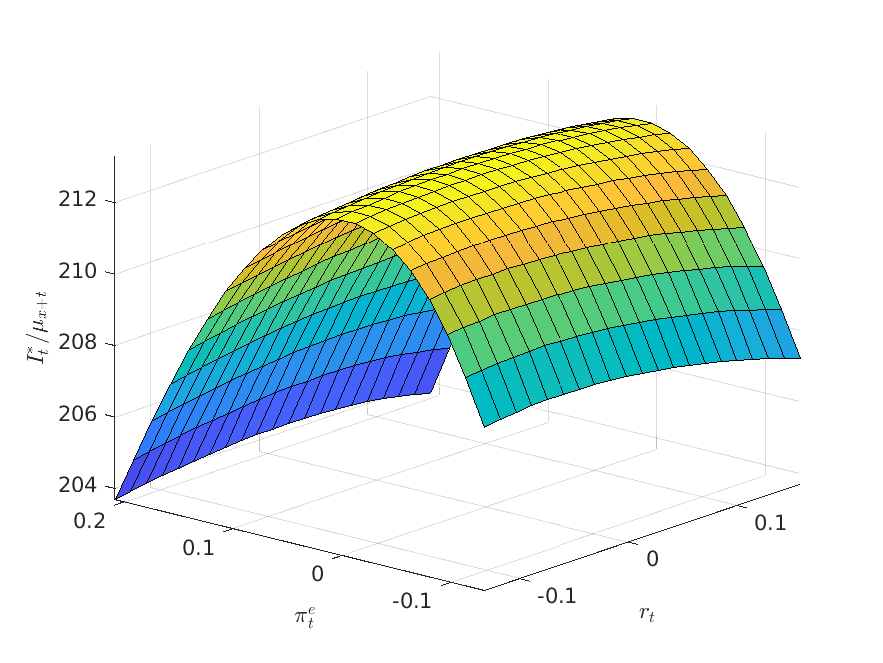}
    }
    \subfigure[Insurance face value (age 90)]
    {
    		\includegraphics[width=2.0in,height=2.0in]{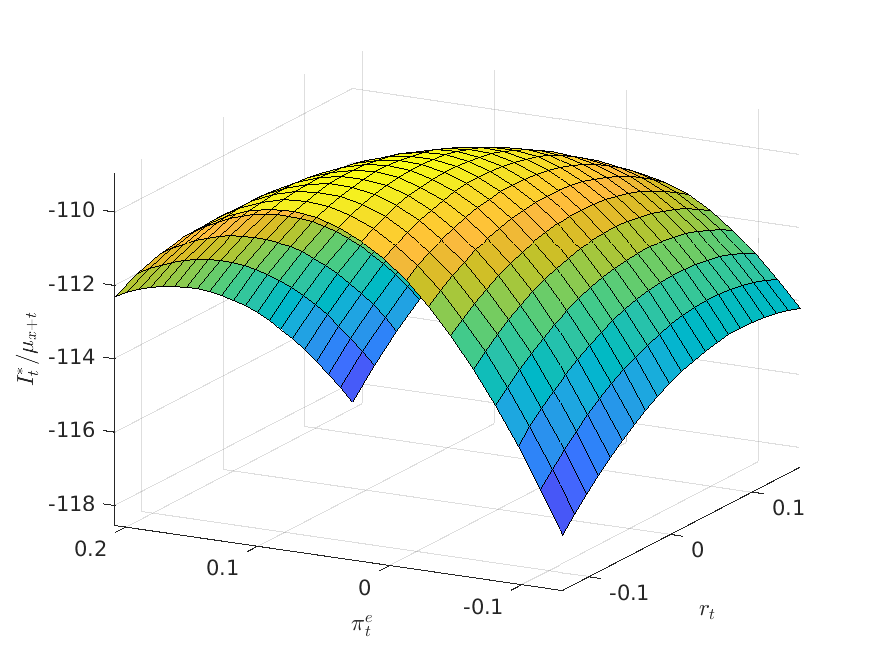}
    }
    \vspace{-0.15in}
    \subfigure[Bequest-wealth ratio (age 40)]
    {
    		\includegraphics[width=2.0in,height=2.0in]{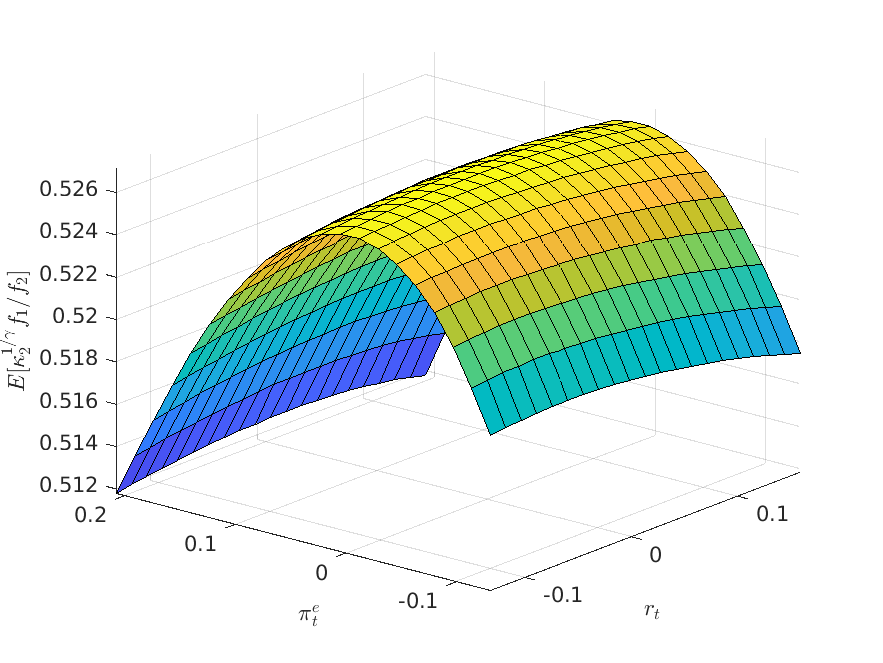}
    }
    \subfigure[Bequest-wealth ratio (age 90)]
    {
    		\includegraphics[width=2.0in,height=2.0in]{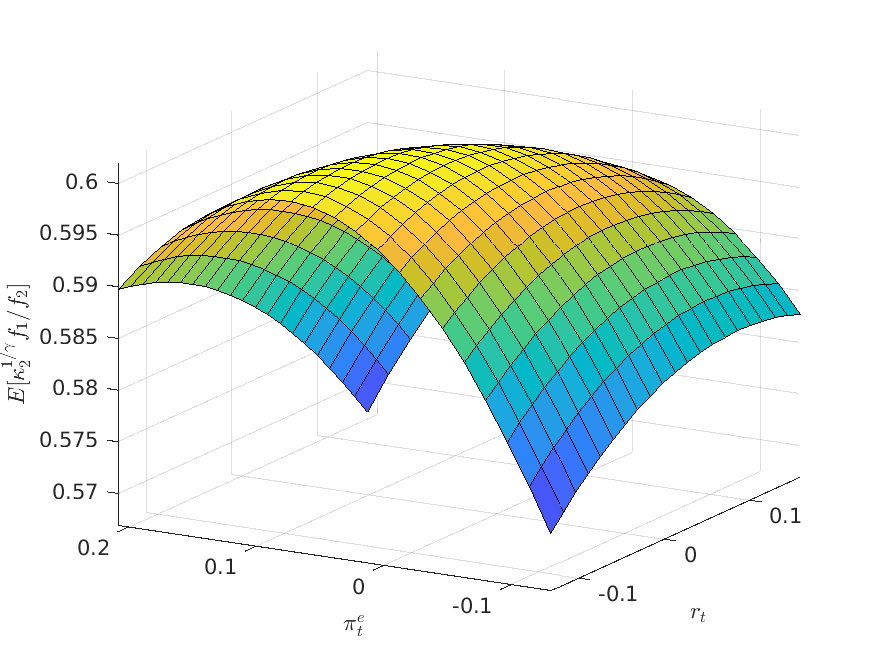}
    }
    \subfigure[future income (age 40)]
    {
    		\includegraphics[width=2.0in,height=2.0in]{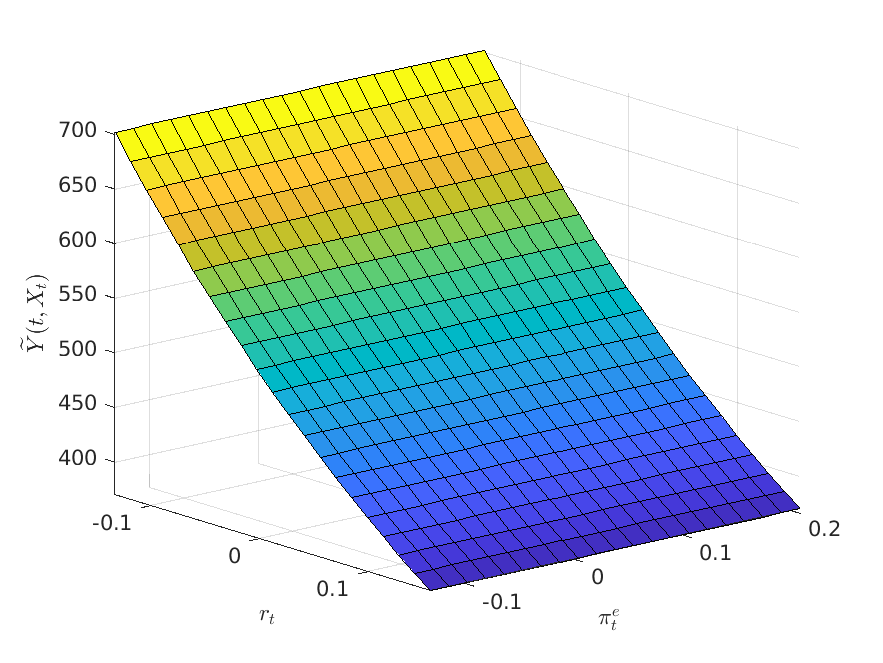}
    }
    \subfigure[future income (age 90)]
    {
    		\includegraphics[width=2.0in,height=2.0in]{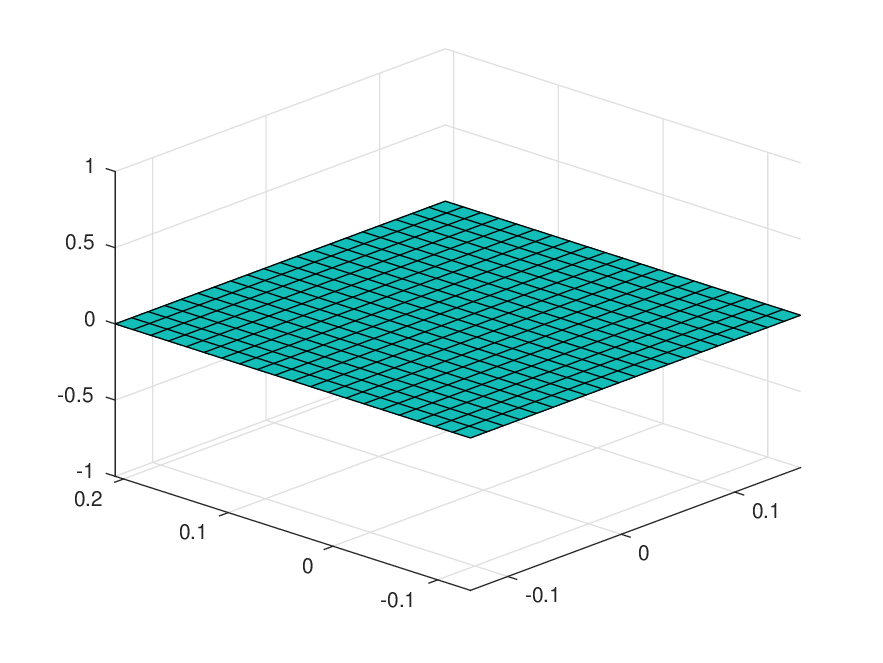}
    }
    \caption{Family's optimal consumption and insurance strategies when $\theta=0$. }
	\label{fig_I_3D_theta0}
\end{sidewaysfigure}

\begin{sidewaysfigure}[htbp]
    \centering
    \subfigure[$c^*_1$ (age 40)]
    {
        \includegraphics[width=2.0in,height=2.0in]{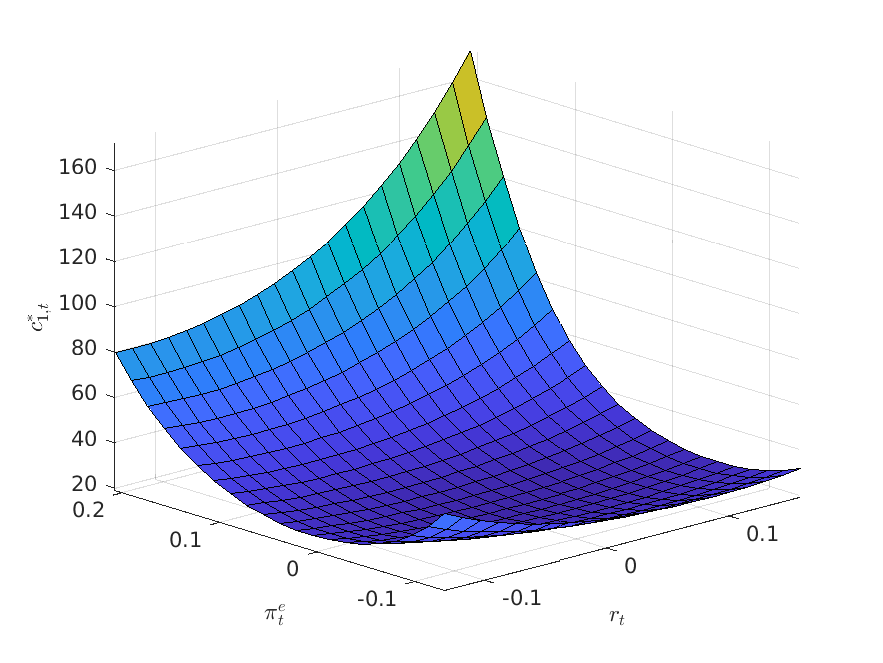}
    }
    \subfigure[$c^*_1$ (age 90)]
    {
	\includegraphics[width=2.0in,height=2.0in]{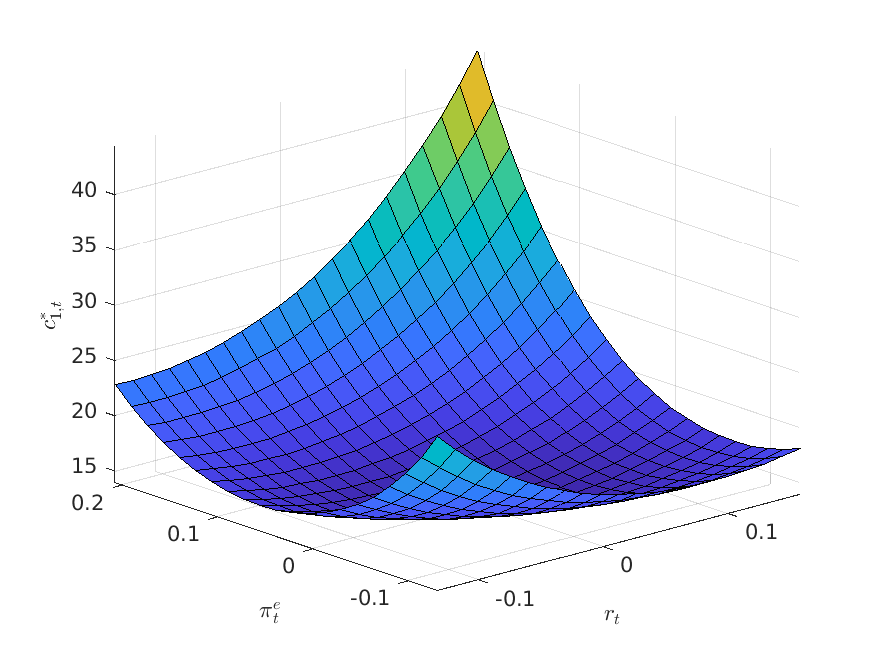}
    }
    \subfigure[$c^*_2$ (age 40)]
    {
	\includegraphics[width=2.0in,height=2.0in]{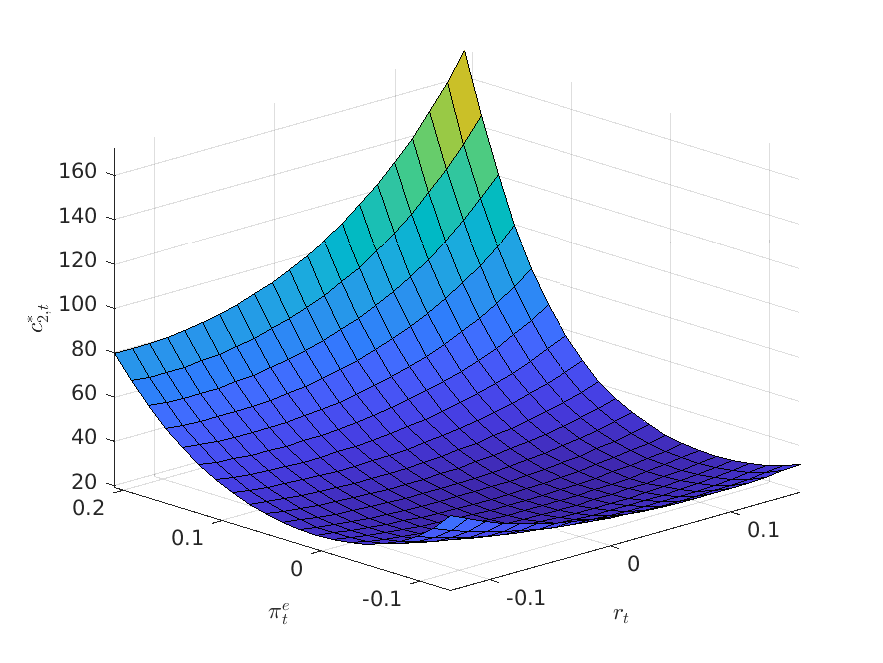}
    }
    \subfigure[$c^*_2$ (age 90)]
    {
	\includegraphics[width=2.0in,height=2.0in]{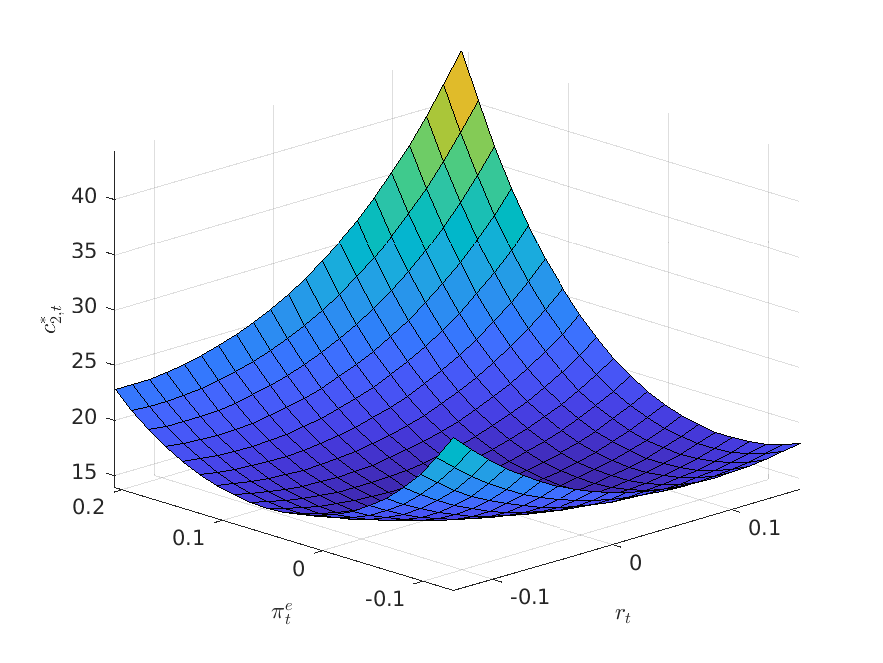}
    }
    \vspace{-0.15in}
    \subfigure[$I^*$ (age 40)]
    {
	\includegraphics[width=2.0in,height=2.0in]{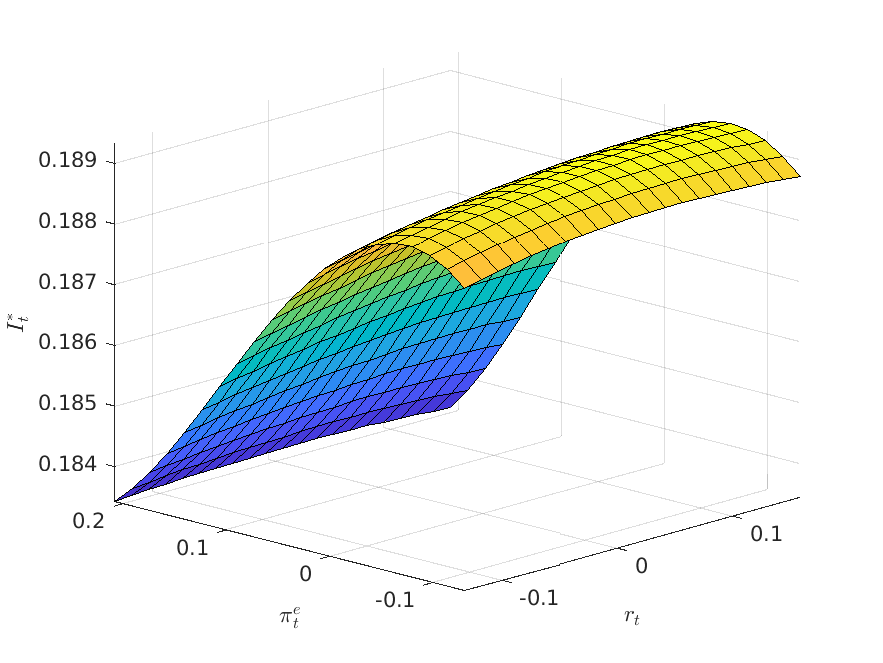}
    }
    \subfigure[$I^*$ (age 90)]
    {
	\includegraphics[width=2.0in,height=2.0in]{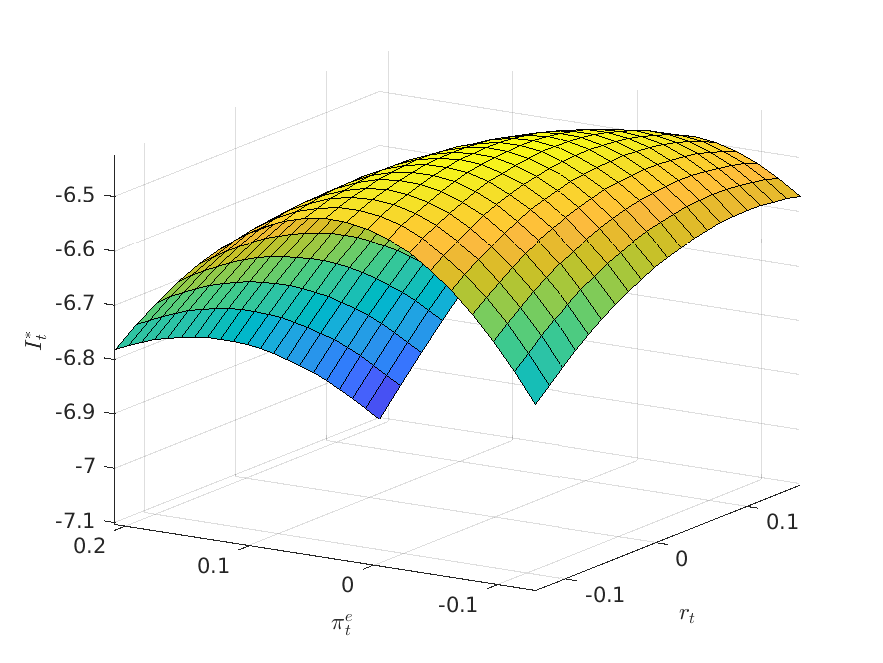}
    }
    \subfigure[Insurance face value (age 40)]
    {
    		\includegraphics[width=2.0in,height=2.0in]{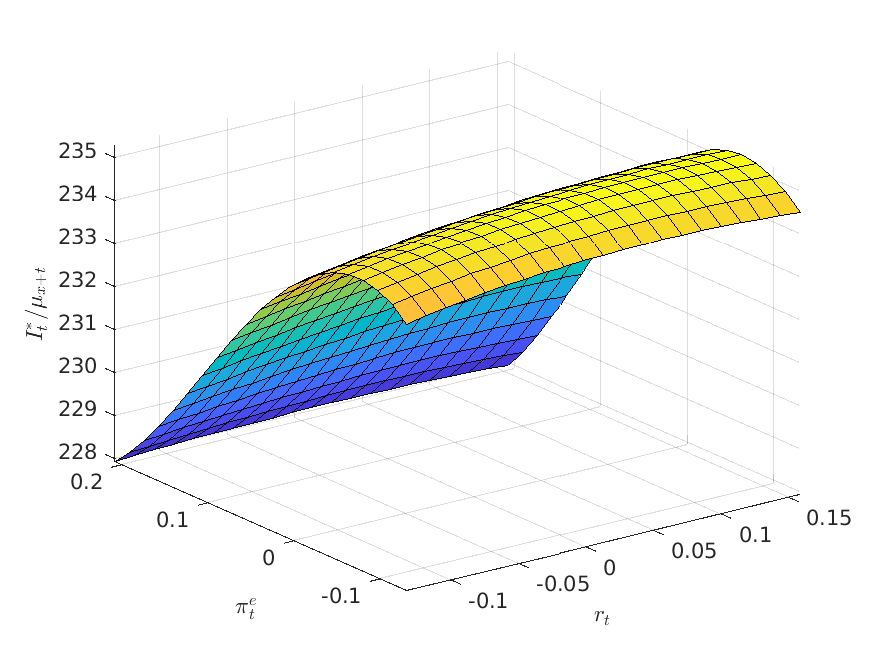}
    }
    \subfigure[Insurance face value (age 90)]
    {
    		\includegraphics[width=2.0in,height=2.0in]{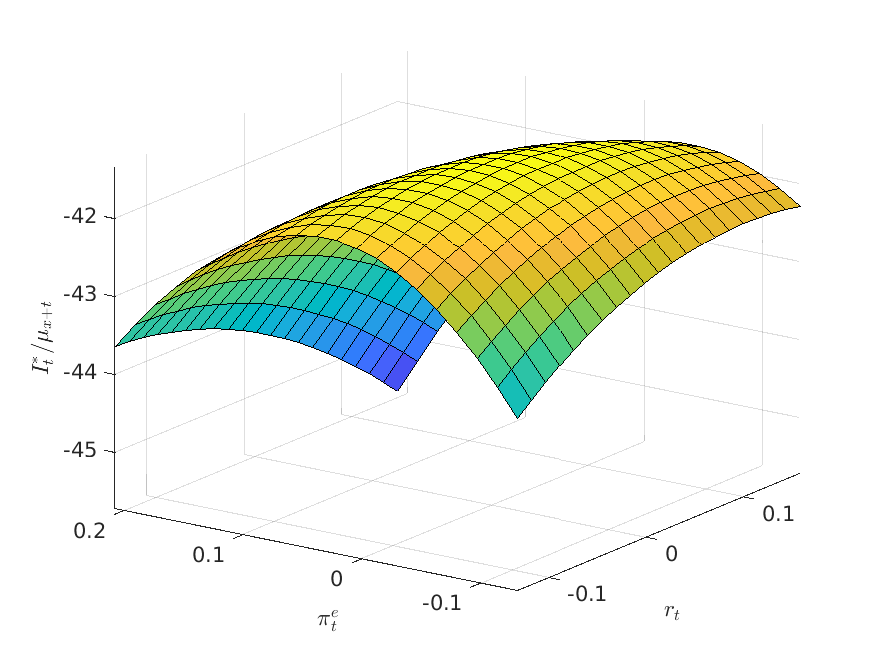}
    }
    \vspace{-0.15in}
    \subfigure[Bequest-wealth ratio (age 40)]
    {
    		\includegraphics[width=2.0in,height=2.0in]{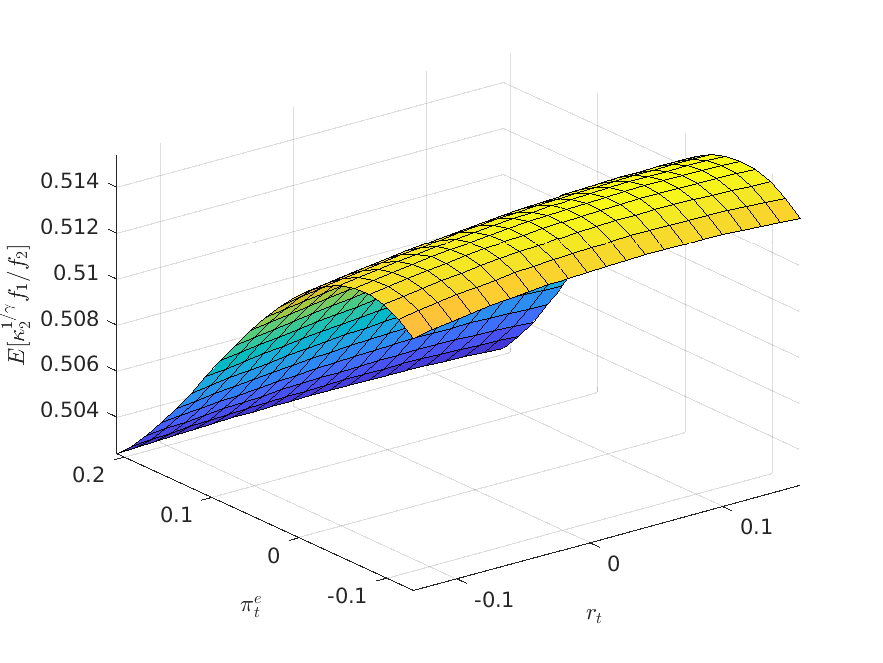}
    }
    \subfigure[Bequest-wealth ratio (age 90)]
    {
    		\includegraphics[width=2.0in,height=2.0in]{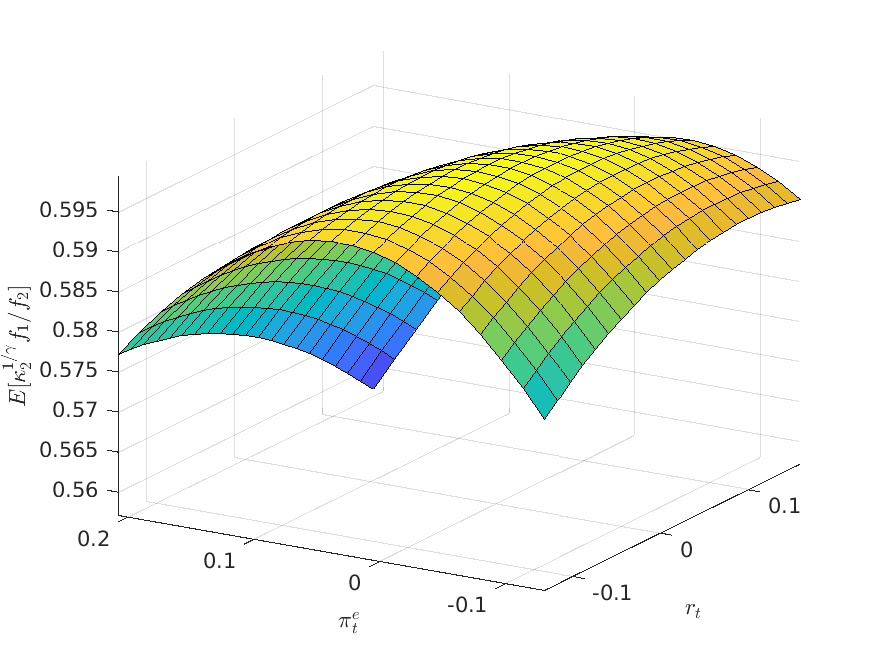}
    }
    \subfigure[future income (age 40)]
    {
    		\includegraphics[width=2.0in,height=2.0in]{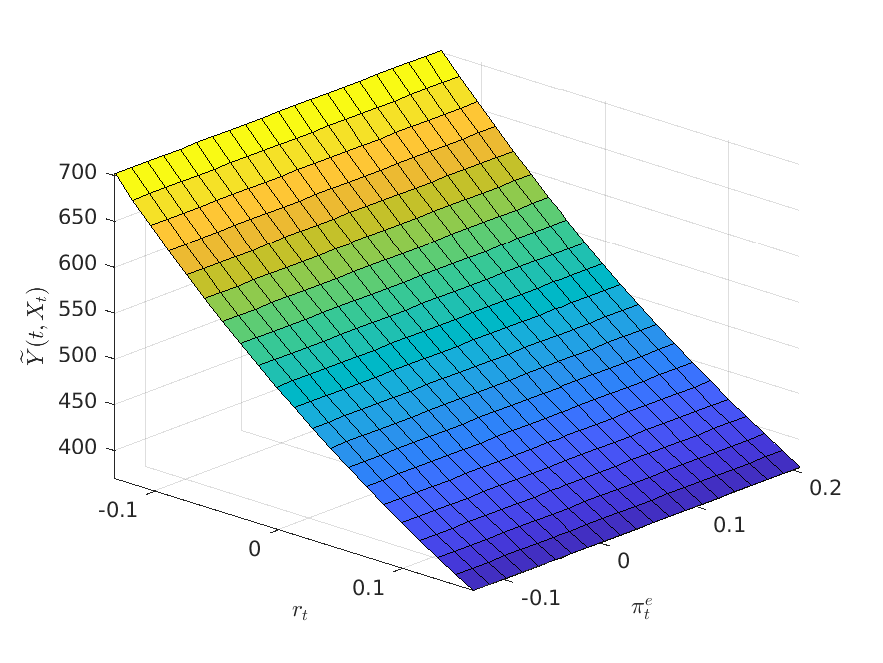}
    }
    \subfigure[future income (age 90)]
    {
    		\includegraphics[width=2.0in,height=2.0in]{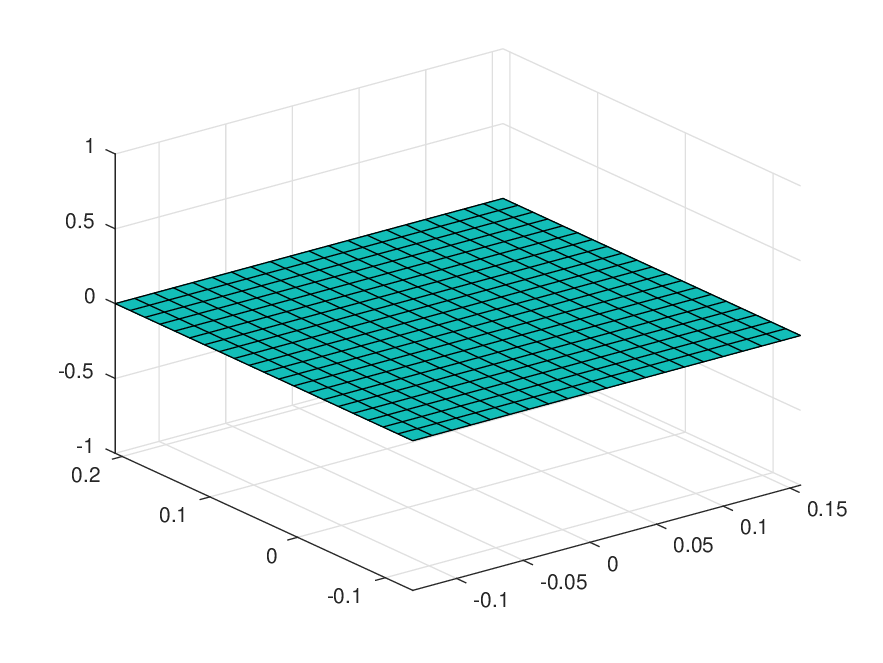}
    }
    \caption{Family's optimal consumption and insurance strategies when $\theta=0.8$. }
	\label{fig_I_3D_theta08}
\end{sidewaysfigure}

\subsection{Welfare loss}
This section evaluates the welfare loss resulting from the money illusion. We presume the family is non-illusioned, demonstrating rational preferences and full awareness of inflation risk. Specifically, the family's objective follows \eqref{objective} under $\theta = 0$
\begin{equation*}
 \sup \limits_{\alpha,c_1,c_2,I} E\left[ \kappa_1 \int_0^{T} {_tp_x} e^{-\delta t} \frac{c^{1-\gamma}_{1,t}}{1-\gamma}dt+ \kappa_2 \int_0^T e^{-\delta t}\frac{c^{1-\gamma}_{2,t}}{1-\gamma}dt\right].
\end{equation*}
We use $V(0,W^{\widetilde{Y}}_0,\Pi_0,X_0; \theta = 0)$ to denote the value function when non-illusioned family adopts strategies \eqref{primary_opt_c1_2} - \eqref{primary_opt_I_2} under $\theta=0$. Additionally, we denote $V^{\text{sub}}(0,W^{\widetilde{Y}}_0,X_0;\theta)$ for the non-illusioned family forced to adopt strategies \eqref{primary_opt_c1_2} - \eqref{primary_opt_I_2} under money-illusion degree $\theta$. Obviously, the non-illusioned family will obtain a lower value function under the money-illusioned strategies, i.e.,
\begin{equation*}
 V^{\text{sub}}(0,W^{\widetilde{Y}}_0,X_0;\theta)< V(0,W^{\widetilde{Y}}_0,\Pi_0,X_0; \theta = 0).
\end{equation*}
Inspired by \cite{basak2010equilibrium}, \cite{xue2019derivatives}, \cite{tan2020optimal}, and \cite{wei2023optimal}, we define the welfare loss $L(\theta)$ as
\begin{equation}
 V^{\text{sub}}(0,W^{\widetilde{Y}}_0,X_0;\theta)= V(0,W^{\widetilde{Y}}_0(1-L(\theta)),\Pi_0,X_0; \theta = 0).\label{welfare_loss_1}
\end{equation}
Substitute \eqref{adj_G} into \eqref{welfare_loss_1}, we have
\begin{equation*}
    V^{\text{sub}}(0,W^{\widetilde{Y}}_0,X_0;\theta)= \frac{1}{1-\gamma}\{W^{\widetilde{Y}}_0[1-L(\theta)]\}^{1-\gamma}[f_2(0,X_0;\theta=0)]^{\gamma},
\end{equation*}
where $f_2(0,X_0;\theta=0)$ is the function $f_2(0,X_0)$ under $\theta=0$. Solving it, we obtain the formula for welfare loss
\begin{equation*}
    L(\theta) = 1 - \frac{[(1-\gamma)V^{\text{sub}}(0,W^{\widetilde{Y}}_0,X_0;\theta)]^{\frac{1}{1-\gamma}}}{[f_2(0,X_0;\theta=0)]^{\frac{\gamma}{1-\gamma}} W^{\widetilde{Y}}_0}.
\end{equation*}
We employ the Monte Carlo method to simulate $V^{\text{sub}}(0,W^{\widetilde{Y}}_0,X_0;\theta)$, with the results depicted in Figure \ref{fig_welfare_loss}. The findings reveal that welfare loss escalates with money illusion, and the rate of increase varies with relative risk aversion. For a family with $\gamma=3$, the maximum welfare loss is less than $30\%$. For a family with $\gamma=5$, a $50\%$ welfare loss is experienced when the money illusion degree $\theta$ attains $0.8$. For a family with $\gamma=10$, a $50\%$ welfare loss is incurred earlier when $\theta$ reaches $0.37$. Generally, when a family becomes more risk-averse, they decrease their allocation to risky assets. The non-illusioned family perceives inflation-linked bonds as risk-free assets, while the illusioned family leans towards nominal bonds. Consequently, heightened risk aversion amplifies the welfare loss of money illusion.

\begin{figure}[htbp]	
	  \includegraphics[width=4.5in,height=3.7in]{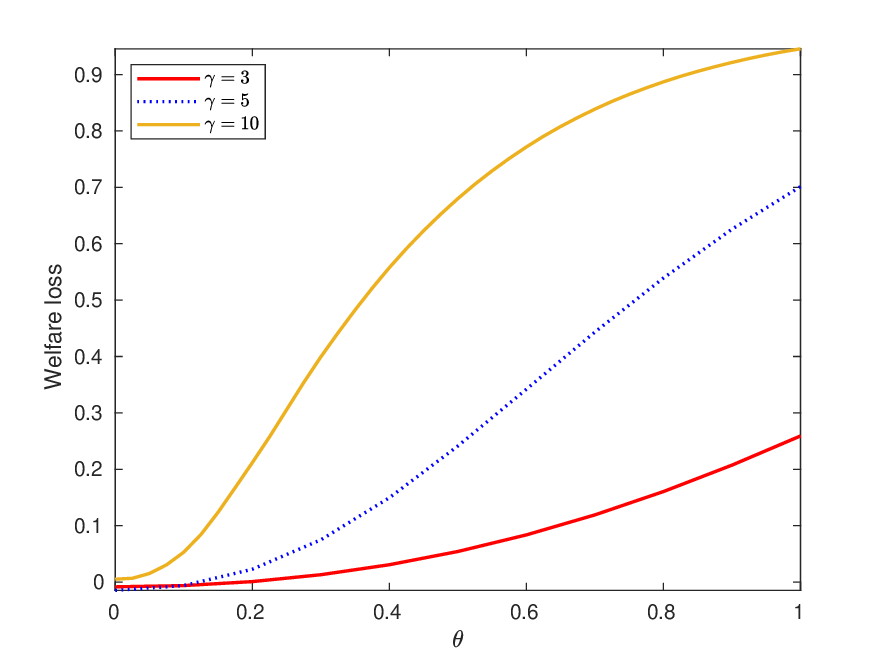}
	\caption{Welfare loss from money illusion.}\label{fig_welfare_loss}
\end{figure}

\section{Conclusion}\label{conclusion}
This paper investigates a life-cycle model under the money illusion, where households exhibit a preference for nominal over real money. The household can invest a part of the money in nominal bonds, inflation-linked bonds, a stock index, and a nominal cash account and use the other part of the money to purchase life insurance and annuities. We formulate this problem as a random-horizon utility maximization problem and derive its corresponding explicit solutions under CRRA utility. Our model, calibrated to U.S. data, illustrates that money illusion elevates life insurance demand for young adults while diminishing annuity demand for retirees. Sensitivity analysis reveals that annuity demand exhibits an upward ``U-shape" with respect to the real interest rate and expected inflation, consistent with the upward ``U-shape" of consumption but contrasting with the downward ``U-shape" of life insurance. Lastly, numerical simulations show that the welfare loss from the money illusion is significant, regardless of the risk aversion coefficient. In general, our paper enriches the existing literature by showing that money illusion can contribute to annuity puzzles. We recommend that insurance companies enhance their educational efforts regarding inflation risk to encourage voluntary annuity purchases among retirees.

\bibliographystyle{apalike}
\bibliography{ref}

\appendix

\section{Proofs of Proposition \ref{prop1} to Proposition \ref{prop3}}\label{appendix1}
\begin{proof}
    We first prove the Proposition \ref{prop3}, i.e.,  $G(t,W^{\widetilde{Y}}_t,\Pi_t,X_t)$ is the candidate solution to the HJB equation \eqref{primary_adj_HJB_1}. The derivatives of the candidate solution are given by
	\begin{align*}
		&\frac{\partial G}{\partial t} = \frac{\gamma}{1-\gamma} \bigg(\frac{f_2}{w^{\widetilde{Y}}}\bigg)^{\gamma-1}\frac{\partial f_2}{\partial t}\pi^{\theta(1-\gamma)}, ~\frac{\partial G}{\partial w^{\widetilde{Y}}} = \bigg(\frac{f_2}{w^{\widetilde{Y}}}\bigg)^{\gamma}\pi^{\theta(1-\gamma)},~\frac{\partial G}{\partial \pi} = \theta (w^{\widetilde{Y}})^{1-\gamma}\pi^{\theta(1-\gamma)-1}f^{\gamma}_2,\\
		&\frac{\partial G}{\partial X^{\top}} = \frac{\gamma}{1-\gamma} \bigg(\frac{f_2}{w^{\widetilde{Y}}}\bigg)^{\gamma-1}\frac{\partial f_2}{\partial X^{\top}}\pi^{\theta(1-\gamma)}, ~\frac{\partial^2 G}{(\partial w^{\widetilde{Y}})^2} = -\gamma (w^{\widetilde{Y}})^{-\gamma-1} f_2^{\gamma}\pi^{\theta(1-\gamma)}, \\
        &\frac{\partial^2 G}{(\partial \pi)^2} = \theta[\theta(1-\gamma)-1](w^{\widetilde{Y}})^{1-\gamma}\pi^{\theta(1-\gamma)-2}f^{\gamma}_2,\\
		&\frac{\partial^2 G}{\partial X^{\top}\partial X} = - \gamma (w^{\widetilde{Y}})^{1-\gamma}\pi^{\theta(1-\gamma)} f^{\gamma-2}_2 \frac{\partial f_2}{\partial X^{\top}}\frac{\partial f_2}{\partial X} + \frac{\gamma}{1-\gamma}(w^{\widetilde{Y}})^{1-\gamma} \pi^{\theta(1-\gamma)} f^{\gamma-1}_2\frac{\partial^2 f_2}{\partial X^{\top} \partial X},\\
  &\frac{\partial^2 G}{\partial w^{\widetilde{Y}}\partial \pi}= \theta (1-\gamma)(w^{\widetilde{Y}})^{-\gamma}\pi^{\theta(1-\gamma)-1}f^{\gamma}_2, ~\frac{\partial^2 G}{\partial w^{\widetilde{Y}} \partial X^{\top}} = \gamma (w^{\widetilde{Y}})^{-\gamma} \pi^{\theta(1-\gamma)}f^{\gamma-1}_2 \frac{\partial f_2}{\partial X^{\top}},\\
  &\frac{\partial^2 G}{\partial \pi \partial X^{\top}} = \theta \gamma (w^{\widetilde{Y}})^{1-\gamma} \pi^{\theta(1-\gamma)-1}f^{\gamma-1}_2 \frac{\partial f_2}{\partial X^{\top}}.
	\end{align*}
    Substitute these derivatives into the HJB equation \eqref{primary_adj_HJB_1}, we can verify that the equality holds. Thus, $G(t,W^{\widetilde{Y}}_t,\Pi_t,X_t)$ is indeed the candidate solution to the HJB equation \eqref{primary_adj_HJB_1}. By inserting $G(t,W^{\widetilde{Y}}_t,\Pi_t,X_t)$ into \eqref{primary_opt_c1_1}-\eqref{primary_opt_I_1}, we can then derive the optimal strategies \eqref{primary_opt_c1_2}-\eqref{primary_opt_I_2}.
    
    For Proposition \ref{prop1}, $G_1$ is the special case of $G$ when $\widetilde{Y}(t,X_t)\equiv 0$, $\kappa_1=0$, and $\kappa_2=1$. For Proposition \ref{prop2}, $G_2$ is the special case of $G$ when $\widetilde{Y}(t,X_t)\equiv 0$. The proofs for these two propositions can be established based on the derivation above. 
\end{proof} 

\section{Proof of Proposition \ref{global_exist_gamma_great_1}}\label{appendix2}
We can establish the global existence of $\Gamma_2(\tau)$ by utilizing Theorems 4.1.4. and 4.1.6. from \cite{abou2012matrix}. It is worth noting that their comparison theorem can be easily adapted from a semi-definite matrix case to a definite matrix case. Since $\Gamma_2(\tau)$ is both existent and negative for all $\tau \in (0,T]$, the candidate solution $G(t,W^{\widetilde{Y}}_t,\Pi_t,X_t)$ in \eqref{adj_G} also exists.

To prove the verification theorem, we define the value process for any $(\beta_t, c_{1,t}, c_{2,t}, I_t)\in \mathcal{A}_{\gamma}(0, T_R)$ and $s\in[t, T]$
\begin{align}\label{value_process}
      &g^{\beta,I,c_1,c_2}(s,W^{\widetilde{Y}}_s,\Pi_s,X_s):=\int_t^s {_{u-t}p_{x+t}}e^{-\delta(u-t)}\left[\kappa_1 U_1(c_{1,u},\Pi_u)+\kappa_2U_2(c_{2,u},\Pi_u)\right. \notag\\
      &\left.+\kappa_2\lambda_{x+u}G_1\bigg(u,W^{\widetilde{Y}}_u - \widetilde{Y}(u,X_u) + \frac{I_u}{\lambda_{x+u}},\Pi_u,X_u\bigg)\right] du + e^{-\delta(s-t)}{_{s-t}p_{x+t}}G(s,W^{\widetilde{Y}}_s,\Pi_s,X_s).
\end{align}
By Ito's formula, we have
\begin{align}\label{g_bI_SDE_gamma_01}
      &dg^{\beta,I,c_1,c_2}(s,W^{\widetilde{Y}}_s,\Pi_s,X_s) = {_{s-t}p_{x+t}}e^{-\delta(s-t)}\left\{\kappa_1 U_1(c_{1,s},\Pi_s) + \kappa_2 U_2(c_{2,s},\Pi_s) \right.\notag\\
      & \left.+\kappa_2 \lambda_{x+s} G_1\left(s,W^{\widetilde{Y}}_s-\widetilde{Y}(s,X_s)+\frac{I_s}{\lambda_{x+s}},\Pi_s,X_s\right) - (\lambda_{x+s}+\delta) G(s,W^{\widetilde{Y}}_s,\Pi_s,X_s)\right.\notag\\
      & \left. + \mathcal{D}^{\beta,I,c_1,c_2}G(s,W^{\widetilde{Y}}_s,\Pi_s,X_s)\right\}ds + g^{\beta,I,c_1,c_2}(s,W^{\widetilde{Y}}_s,\Pi_s,X_s) h^{\beta,I,c_1,c_2}(s,W^{\widetilde{Y}}_s,\Pi_s,X_s) dZ_s,
\end{align}
where $\mathcal{D}^{\beta,I,c_1,c_2}$ is the infinitesimal generator defined in \eqref{infi_generator_primal} and $h^{\beta,I,c_1,c_2}$ satisfies
\begin{align}
&h^{\beta,I,c_1,c_2}(s,W^{\widetilde{Y}}_s,\Pi_s,X_s) =\notag\\
&\frac{{_{s-t}p_{x+t}}e^{-\delta(s-t)}G(s,W^{\widetilde{Y}}_s,\Pi_s,X_s)}{g^{\beta,I,c_1,c_2}(s,W^{\widetilde{Y}}_s,\Pi_s,X_s)}\bigg[(1-\gamma)(\beta^{\top}_s\Sigma-\sigma^{\top}_{\Pi}) +(1-\gamma)\theta \sigma^{\top}_{\Pi} +\frac{\gamma}{f_2} \frac{\partial f_2}{\partial X}\Sigma_X  \bigg]. \label{h_bI}
\end{align}
Next, fix $(t,w^{\widetilde{Y}},\pi,X)\in[0,T]\times[0,\infty) \times[0,\infty)\times \mathbb{R}^2$ and denote the conditional expectation of the value process as
\begin{eqnarray*}\label{tilde_J}
      &&J(t,w^{\widetilde{Y}},\pi,X) :=  E_{t,w^{\widetilde{Y}},\pi,X}\left[ \int_t^{T_R} 
      {_{s-t}p_{x+t}} e^{-\delta(s-t)}\left[\kappa_1 U_1(c_{1,s},\Pi_s) + \kappa_2 U_2(c_{2,s},\Pi_s)+\kappa_2\lambda_{x+s}\right.\right.\notag\\
      &&\left.\left.\Phi_1\left(s,W^{\widetilde{Y}}_t - \widetilde{Y}(t,X_t) + \frac{I_t}{\mu_{x+t}},\Pi_s,X_s\right)\right]ds + e^{-\delta(T_R-t)}{_{T_R-t}p_{x+t}} \Phi_2\left(T_R,W^{\widetilde{Y}}_{T_R},\Pi_{T_R},X_{T_R}\right)\right],
\end{eqnarray*}
where $E_{t,w^{\widetilde{Y}},\pi,X}[\cdot]$ is short for $E[\cdot|W^{\widetilde{Y}}_t = w^{\widetilde{Y}}, \Pi_t = \pi, X_t=X]$.
Then, we have
\begin{equation}
      V(t,W^{\widetilde{Y}}_t,\Pi_t,X_t) = \sup \limits_{(\beta,I,c_1,c_2)\in \mathcal{A}_{\gamma}(0,T_R)}J(t,W^{\widetilde{Y}}_t,\Pi_t,X_t). \label{VJ_equality}
\end{equation}

For candidate solutions, $G_1$ is the special case of $G$ when $\widetilde{Y}(t,X_t)\equiv 0$, $\kappa_1=0$, and $\kappa_2=1$. Moreover, $G_2$ is the special case of $G$ when $\widetilde{Y}(t,X_t)\equiv 0$. One can easily verify $\Phi_1 = G_1$ and $\Phi_2 = G_2$ first, then use the same approach to verify $G=V$ by induction. We only verify $G=V$ in the following part since $G$ has the most complex form. 

The proof verification theorem has three steps:
\begin{enumerate}[Step 1:]
	
	\item Verify the optimal strategy $(\beta^*,I^*,c^*_1,c^*_2)$ is in the admissible set $\mathcal{A}_{\gamma}(0,T_R)$.
	
	Recall from \eqref{primary_opt_beta_2}
	\begin{equation*}
		   \beta^*_t = \frac{(\Sigma^{\top})^{-1}}{\gamma}\left[\Lambda_t -\sigma_{\Pi} + \theta(1-\gamma)\sigma_{\Pi}+ 
              \Sigma^{\top}_X \frac{\gamma}{f_2(t,X_t)}\frac{\partial f_2(t,X_t)}{\partial X}\right] + (\Sigma^{\top})^{-1}\sigma_{\Pi},
	\end{equation*}
	which satisfies a linear growth with $X_t$ due to the linear growth of $(\Lambda_t-\sigma_{\Pi})$ and $\frac{1} 
        {f_2}\frac{\partial f_2}{\partial X^{\top}}$. Next, substitute \eqref{primary_opt_c1_2} - \eqref{primary_opt_I_2} into \eqref{WY_1}, we have
	\begin{eqnarray}\label{WY_opt_SDE}
	       &&d(W^{\widetilde{Y}}_t)^* =  (W^{\widetilde{Y}}_t)^* \bigg[ r_t + \lambda_{x+t}\bigg(1-\kappa^{\frac{1} 
                {\gamma}}_2\frac{f_1}{f_2}\bigg)-\left(\kappa^{\frac{1}{\gamma}}_1+\kappa^{\frac{1}{\gamma}}_2\right)\frac{1}{f_2} + (\eta_t)^{\top}(\Lambda_t - \sigma_{\Pi})\bigg]dt \notag\\
              &&+ (W^{\widetilde{Y}}_t)^* (\eta_t)^{\top}dZ_t,
	\end{eqnarray}
	where $(\eta_t)^{\top} = \frac{1}{\gamma} [\Lambda^{\top}_t - \sigma^{\top}_{\Pi}+\theta(1-\gamma)\sigma^{\top}_{\Pi}] + 
        \frac{1}{f_2} \frac{\partial f_2}{\partial X}\Sigma_X$. Given that the drift term and volatility term of SDE \eqref{WY_opt_SDE} are almost surely sample continuous, we can utilize Proposition 1.1 in \cite{kraft2004optimal} to demonstrate the existence of a unique strong solution for SDE \eqref{WY_1} under $(\beta^*,I^*,c^*_1,c^*_2)$. As a result, we can conclude that $(\beta^*,I^*,c^*_1,c^*_2)\in \mathcal{A}_{\gamma}(0,T_R)$.
	
	\item Verify $J(t,W^{\widetilde{Y}}_t,\Pi_t,X_t) \leq G(t,W^{\widetilde{Y}}_t,\Pi_t,X_t)$ for any 
        $(\beta,I,c_1,c_2)\in \mathcal{A}_{\gamma}(0,T_R)$.

	We first introduce the following useful lemma.
	\begin{lemma} \label{martingale_lemma}
		Assume a $n$-dimensional stochastic process $\widetilde{X}_t$ is driven by a $m$-dimensional Brownian motion $\widetilde{Z}$
		\begin{eqnarray*}
			   d\widetilde{X}_t = \mu(t,\widetilde{X}_t) dt + \sigma(t) d\widetilde{Z}_t,~~ \widetilde{X}_0 = 
                  \widetilde{x}_0,
		\end{eqnarray*}
		where $\widetilde{x}_0$ is a constant n-dimensional vector, $\mu(t,\widetilde{X})$ is a borel function and $\sigma(t)$ a continuous function
		\begin{eqnarray*}
			   \mu(t,\widetilde{X}) : (0,\infty) \times \mathbb{R}^n \rightarrow \mathbb{R}^n,~~ \sigma(t) : (0,\infty) 
                  \rightarrow \mathbb{R}^n\times \mathbb{R}^m,
		\end{eqnarray*}
		satisfying
		\begin{eqnarray*}
			   ||\mu(t,\widetilde{X}_t)-\mu(t,\widetilde{Y}_t)||_2\leq k||\widetilde{X}_t-\widetilde{Y}_t||_2,\\
			   ||\mu(\cdot,0)||_2 + ||\sigma(\cdot)||_2 \in L^2(0,T;\mathbb{R}), \forall T>0,
		\end{eqnarray*}
		where $||\cdot||_2$ is the Euclidean norm and $L^2(0,T;\mathbb{R})$ represents the set of Lebesgue measurable function $\psi : [0,T]\rightarrow \mathbb{R}$, such that $\int_0^T |\psi(t)|^2 dt<\infty$. If a stochastic process $\widetilde{g}(t,\widetilde{X}_t)$, $\widetilde{g}:[0,T]\times \mathbb{R}^n \rightarrow \mathbb{R}^n$, grows linearly with respect to $\widetilde{X}_t$ ( $||\widetilde{g}(t,\widetilde{X}_t)||_2\leq c_0 + c_1 ||\widetilde{X}_t||_2$), then we have
		\begin{equation*}
			   E[\mathcal{E}(T,\widetilde{g})] = 1,
		\end{equation*}
		where
		\begin{equation*}
			   \mathcal{E}(t,\widetilde{g}) := \exp \bigg\{ \int_0^t [\widetilde{g}(s,\widetilde{X}_s)]^{\top}d\widetilde{Z}_s- \frac{1}{2}\int_0^t||\widetilde{g}(s,\widetilde{X}_s)||_2^2ds\bigg\}.
		\end{equation*}
	\end{lemma}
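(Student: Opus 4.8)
The plan is to recognize $\mathcal{E}(t,\widetilde{g})$ as the stochastic (Dol\'eans--Dade) exponential of the continuous local martingale $M_t=\int_0^t[\widetilde{g}(s,\widetilde{X}_s)]^{\top}d\widetilde{Z}_s$. As such it is a nonnegative local martingale, hence a supermartingale, so automatically $E[\mathcal{E}(T,\widetilde{g})]\le\mathcal{E}(0,\widetilde{g})=1$; the content of the lemma is to upgrade this inequality to an equality, i.e.\ to show that $\mathcal{E}(\cdot,\widetilde{g})$ is a genuine martingale on $[0,T]$. Since the linear growth of $\widetilde{g}$ only yields $\|\widetilde{g}(s,\widetilde{X}_s)\|_2^2\le C(1+\|\widetilde{X}_s\|_2^2)$, a direct application of Novikov's criterion $E[\exp(\tfrac12\int_0^T\|\widetilde{g}\|_2^2\,ds)]<\infty$ will in general fail, because the required exponential moment of $\int_0^T\|\widetilde{X}_s\|_2^2\,ds$ holds only below a finite threshold. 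I would therefore use the localized (stepwise) Novikov argument of Bene\v{s}.

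First I would establish the crucial exponential integrability estimate for the state process. Because $\mu(t,\cdot)$ is globally Lipschitz and $\sigma(\cdot)$ is deterministic and square integrable, the martingale part $\int_0^t\sigma(s)d\widetilde{Z}_s$ is Gaussian with deterministic covariance, and a Gr\"onwall argument applied to $\widetilde{X}_t=\widetilde{x}_0+\int_0^t\mu(s,\widetilde{X}_s)ds+\int_0^t\sigma(s)d\widetilde{Z}_s$ dominates $\sup_{s\le T}\|\widetilde{X}_s\|_2$ by a constant plus a multiple of $\sup_{s\le T}\|\int_0^s\sigma(u)d\widetilde{Z}_u\|_2$. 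Using the Gaussian tail of the latter (equivalently the Burkholder--Davis--Gundy inequality together with the deterministic quadratic variation), this yields the existence of a constant $a>0$ with
\[
E\Big[\exp\big(a\sup_{0\le s\le T}\|\widetilde{X}_s\|_2^2\big)\Big]<\infty,
\]
and, by the Markov property, the analogous conditional bound given $\mathcal{F}_{t_0}$ for any $t_0\in[0,T]$.

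Next I would partition $[0,T]$ into $0=t_0<t_1<\cdots<t_n=T$ with mesh $\Delta:=\max_k(t_k-t_{k-1})$ chosen small enough that $\tfrac{C}{2}\Delta<a$. On each subinterval the linear growth of $\widetilde{g}$ gives
\[
\tfrac12\int_{t_{k-1}}^{t_k}\|\widetilde{g}(s,\widetilde{X}_s)\|_2^2\,ds\le \tfrac{C}{2}\Delta\Big(1+\sup_{t_{k-1}\le s\le t_k}\|\widetilde{X}_s\|_2^2\Big),
\]
so the exponential moment bound above delivers the conditional Novikov condition $E[\exp(\tfrac12\int_{t_{k-1}}^{t_k}\|\widetilde{g}\|_2^2\,ds)\mid\mathcal{F}_{t_{k-1}}]<\infty$ a.s.\ for every $k$. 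By the standard corollary of Novikov's theorem this makes $\mathcal{E}(\cdot,\widetilde{g})/\mathcal{E}(t_{k-1},\widetilde{g})$ a true martingale on $[t_{k-1},t_k]$, i.e.\ $E[\mathcal{E}(t_k,\widetilde{g})\mid\mathcal{F}_{t_{k-1}}]=\mathcal{E}(t_{k-1},\widetilde{g})$. Chaining these identities by the tower property from $k=n$ down to $k=1$ gives $E[\mathcal{E}(T,\widetilde{g})]=\mathcal{E}(0,\widetilde{g})=1$.

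The main obstacle is the exponential integrability estimate of the second step: one has to extract an explicit threshold $a>0$ for $E[\exp(a\sup_{s\le T}\|\widetilde{X}_s\|_2^2)]$ and then ensure that the growth constant $C$ inherited from $\widetilde{g}$ satisfies $\tfrac{C}{2}\Delta<a$ after refining the partition --- everything else (the supermartingale property and the tower-property chaining) is routine. The deterministic nature of $\sigma(\cdot)$ is precisely what makes this threshold finite and explicitly reachable; if $\sigma$ depended on $\widetilde{X}$ the Gaussian domination would break down and a more delicate argument would be required.
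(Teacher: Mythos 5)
Your argument is correct, and it is essentially the standard Bene\v{s}/stepwise--Novikov proof of this kind of statement. The paper itself does not spell out a proof: it only asserts that the lemma is an extension of Lemma 4.1.1 in \cite{bensoussan2004stochastic} to the case where the state process and the stochastic exponential share the same Brownian motion. Your write-up supplies exactly the details that citation is standing in for: the exponential integrability $E[\exp(a\sup_{s\le T}\|\widetilde{X}_s\|_2^2)]<\infty$ for some $a>0$ (which uses the global Lipschitz drift, Gr\"onwall, and, crucially, the fact that $\sigma(\cdot)$ is deterministic so that $\int_0^{\cdot}\sigma(s)\,d\widetilde{Z}_s$ is Gaussian with deterministic covariance), followed by partitioning $[0,T]$ finely enough that Novikov's condition holds on each subinterval and chaining via the tower property. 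Two small refinements. First, you do not actually need a conditional form of Novikov's criterion: since $\sup_{t_{k-1}\le s\le t_k}\|\widetilde{X}_s\|_2^2\le\sup_{0\le s\le T}\|\widetilde{X}_s\|_2^2$, the unconditional bound $E[\exp(\tfrac{C\Delta}{2}(1+\sup_{s\le T}\|\widetilde{X}_s\|_2^2))]<\infty$ for $\tfrac{C\Delta}{2}<a$ already verifies the hypothesis of the partitioned Novikov corollary (Karatzas and Shreve, Corollary 3.5.14), which yields the martingale property on all of $[0,T]$ directly; this sidesteps the almost-sure conditional Novikov statement, which is less standard and slightly delicate to invoke. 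Second, your closing observation --- that nothing in the argument uses independence of $\widetilde{X}$ from the noise driving $\mathcal{E}$ --- is precisely the content of the paper's phrase ``share the same Brownian motion'': the Bene\v{s} argument goes through verbatim in the coupled case, which is what the verification step requires since $h^{\beta,I,c_1,c_2}$ is a functional of the state driven by the same $Z$.
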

    \begin{proof}
    	The proof is an extension of Lemma 4.1.1. in \cite{bensoussan2004stochastic} to the case where $\widetilde{X}_t$ and $\mathcal{E}(t,\widetilde{g})$ share the same Brownian motion.
    \end{proof}

	Next, following \eqref{primary_adj_HJB_1}, \eqref{g_bI_SDE_gamma_01}, and $\Phi_1=G_1$, we have
	\begin{eqnarray}\label{g_bI_inequality_gamma_great_1}
		   g^{\beta,I,c_1,c_2}(T,W^{\widetilde{Y}}_{T},\Pi_{T},X_{T}) \leq g^{\beta,I,c_1,c_2}(t,W^{\widetilde{Y}}_t,\Pi_t,X_t) \frac{\mathcal{E}(T, h^{\beta,I,c_1,c_2})}{\mathcal{E}(t, h^{\beta,I,c_1,c_2})},
	\end{eqnarray}
	Recall from \eqref{h_bI}, for $s\in [t,T]$,
	\begin{eqnarray*}
              &&h^{\beta,I,c_1,c_2}(s,W^{\widetilde{Y}}_s,\Pi_s,X_s) =\notag\\
              &&\frac{{_{s-t}p_{x+t}}e^{-\delta(s-t)}G(s,W^{\widetilde{Y}}_s,\Pi_s,X_s)}{g^{\beta,I,c_1,c_2}(s,W^{\widetilde{Y}}_s,\Pi_s,X_s)}\bigg[(1-\gamma)(\beta^{\top}_s\Sigma-\sigma^{\top}_{\Pi}) +(1-\gamma)\theta \sigma^{\top}_{\Pi} +\frac{\gamma}{f_2} \frac{\partial f_2}{\partial X}\Sigma_X  \bigg].
	\end{eqnarray*}
	It is easy to prove that $h^{\beta,I,c_1,c_2}(s,W^{\widetilde{Y}}_s,\Pi_s,X_s)$ satisfies a linear growth with respect to $X_t$. By Lemma \ref{martingale_lemma}, $\mathcal{E}(t,h^{\beta,I,c_1,c_2})$ is a martingale. Given $\Phi_1 = G_1$, we can derive
	\begin{eqnarray}\label{G_inequlity2_gamma_great_1}
 	     &&J(t,w^{\widetilde{Y}},\pi,X) \notag\\
              &=&  E_{t,w^{\widetilde{Y}},\pi,X}\left[ \int_t^{T_R} 
                  {_{s-t}p_{x+t}} e^{-\delta(s-t)}\left[\kappa_1 U_1(c_{1,s},\Pi_s) + \kappa_2 U_2(c_{2,s},\Pi_s)\right.\right.\notag\\
              &&\left.\left.+\kappa_2\lambda_{x+s}\Phi_1\left(s,W^{\widetilde{Y}}_t - \widetilde{Y}(t,X_t) + \frac{I_t} 
                {\mu_{x+t}},\Pi_s,X_s\right)\right]ds\right.\notag\\
              &&\left.+ e^{-\delta(T_R-t)}{_{T_R-t}p_{x+t}} 
                 \Phi_2\left(T_R,W^{\widetilde{Y}}_{T_R},\Pi_{T_R},X_{T_R}\right)\right]\notag\\
		   &=&E_{t,w^{\widetilde{Y}},\pi,X}[g^{\beta,I,c_1,c_2}(T_R,W^{\widetilde{Y}}_{T_R},\Pi_{T_R},X_{T_R})]\notag\\
		   &\leq& E_{t,w^{\widetilde{Y}},\pi,X}\bigg[g^{\beta,I,c_1,c_2}(t,w^{\widetilde{Y}},\pi,X) \frac{\mathcal{E} 
                    (T_R, h^{\beta,I,c_1,c_2})}{\mathcal{E}(t, h^{\beta,I,c_1,c_2})}\bigg]\notag\\
		   &=& G(t,w^{\widetilde{Y}},\pi,X), ~\text{for}~ \forall (\beta,I,c_1,c_2)\in \mathcal{A}_{\gamma}(0,T_R).
	\end{eqnarray}
 
	\item Verify $V(t,W^{\widetilde{Y}}_t,\Pi_t,X_t) = G(t,W^{\widetilde{Y}}_t,\Pi_t,X_t)$ under the optimal strategy $(\beta^*,I^*,c^*_1,c^*_2)$.
	
	Since $(\beta^*_t,I^*_t,c^*_{1,t},c^*_{2,t})$ maximizes the HJB \eqref{primary_adj_HJB_1} and $G(t,W^{\widetilde{Y}}_t,\Pi_t,X_t)$ is the solution to \eqref{primary_adj_HJB_1}, the equality in \eqref{g_bI_inequality_gamma_great_1} holds
	\begin{equation*}\label{exp_g_gamma_01}
		   g^{\beta^*,I^*,c^*_1,c^*_2}(s',(W^{\widetilde{Y}}_{s'})^*,\Pi_{s'},X_{s'}) = g^{\beta^*,I^*,c^*_1,c^*_2}(s, 
              (W^{\widetilde{Y}}_s)^*,\Pi_s,X_s) \frac{\mathcal{E}(s', h^{\beta^*,I^*,c^*_1,c^*_2})}{\mathcal{E}(s, h^{\beta^*,I^*,c^*_1,c^*_2})}, 
	\end{equation*}
	for $s' \in [s,T]$, where
	\begin{eqnarray*}
              &&h^{\beta^*,I^*,c^*_1,c^*_2}(s,(W^{\widetilde{Y}}_s)^*,\Pi_s,X_s) =\notag\\
              &&\frac{{_{s-t}p_{x+t}}e^{-\delta(s-t)}G(s,(W^{\widetilde{Y}}_s)^*,\Pi_s,X_s)}{g^{\beta^*,I^*,c^*_1,c^*_2}(s,(W^{\widetilde{Y}}_s)^*,\Pi_s,X_s)}\bigg[\frac{1-\gamma}{\gamma}(\Lambda^{\top}_t-\sigma^{\top}_{\Pi}) +\frac{1-\gamma}{\gamma}\theta \sigma^{\top}_{\Pi} +\frac{1}{f_2} \frac{\partial f_2}{\partial X}\Sigma_X  \bigg].\label{opt_h_bI_star}
	\end{eqnarray*}
	It is straightforward to demonstrate that $h^{\beta^*,I^*,c^*_1,c^*_2}$ satisfies the linear growth condition. Thus, according to Lemma \ref{martingale_lemma}, $\mathcal{E}(t,h^{\beta^*,I^*,c^*_1,c^*_2})$ is a martingale. Considering $\Phi_1 = G_1$ and $\Phi_2 = G_2$, we can establish the following equality for $G(t,w^{\widetilde{Y}},\pi,X)$.
	\begin{eqnarray}\label{G_equality_gamma_great_1}
              &&V(t,w^{\widetilde{Y}},\pi,X)\notag\\
              &\geq& E_{t,w^{\widetilde{Y}},\pi,X}\left[ \int_t^{T_R} 
              {_{s-t}p_{x+t}} e^{-\delta(s-t)}\left[\kappa_1 U_1(c^*_{1,s},\Pi_s) + \kappa_2 U_2(c^*_{2,s},\Pi_s)\right.\right.\notag\\
              &&\left.\left.+\kappa_2\lambda_{x+s}\Phi_1\left(s,(W^{\widetilde{Y}}_t)^* - \widetilde{Y}(t,X_t) + \frac{I^*_t}{\mu_{x+t}},\Pi_s,X_s\right)\right]ds\right.\notag\\
              &&\left.+ e^{-\delta(T_R-t)}{_{T_R-t}p_{x+t}} \Phi_2\left(T_R,(W^{\widetilde{Y}}_{T_R})^*,\Pi_{T_R},X_{T_R}\right)\right]\notag\\
		   &=&E_{t,w^{\widetilde{Y}},\pi,X}[g^{\beta^*,I^*,c^*_1,c^*_2}(T_R, 
              (W^{\widetilde{Y}}_{T_R})^*,\Pi_{T_R},X_{T_R})]\notag\\
		   &=& E_{t,w^{\widetilde{Y}},\pi,X}\bigg[g^{\beta^*,I^*,c^*_1,c^*_2}(t,w^{\widetilde{Y}},\pi,X) \frac{\mathcal{E}(T_R, h^{\beta^*,I^*,c^*_1,c^*_2})}{\mathcal{E}(t, h^{\beta^*,I^*,c^*_1,c^*_2})}\bigg]\notag\\
		   &=& G(t,w^{\widetilde{Y}},\pi,X).
	\end{eqnarray}
By combining \eqref{G_inequlity2_gamma_great_1}, \eqref{G_equality_gamma_great_1}, and \eqref{VJ_equality}, we can verify that $V(t,W^{\widetilde{Y}}_t,\Pi_t,X_t)=$\\$G(t,W^{\widetilde{Y}}_t,\Pi_t,X_t)$, and $(\beta^*,I^*,c^*_1,c^*_2)$ obtained from \eqref{primary_opt_c1_2}-\eqref{primary_opt_I_2} represents the optimal strategy.
\end{enumerate}
The proof is complete.

\section{Proof of Proposition \ref{global_exist_gamma_01}}\label{appendix3}
\begin{proof}
By substituting $y = \lambda - \frac{b}{4}$ into \eqref{f_lambda}, we obtain a quartic equation
\begin{equation*}
      f_y(y) = y^4 + qy^2 + ry + s,
\end{equation*}
where the coefficients are given as follows
\begin{eqnarray*}
      q = \frac{8c - 3b^2}{8}, \quad r = \frac{b^3 - 4bc + 8d}{8}, \quad s =\frac{-3b^4 + 256j - 64bd + 16 b^2c}{256}.
\end{eqnarray*}
Moreover, the discriminant of $f_{y}(y)$ is defined as
\begin{eqnarray*}
      \widetilde{\Delta} &=& -4 q^3 r^2 - 27 r^4 + 256s^3 + 16q^4s + 144 qr^2s - 128q^2s^2.
\end{eqnarray*}
As stated in \cite{rees1922graphical}, if condition \eqref{root_condition} is satisfied, then \eqref{f_lambda} has four distinct real roots, which implies that the Hamiltonian matrix $H$ has four different real eigenvalues. This guarantees the diagonalizability of $H$ and the full rank of its eigenvector matrix $V$. According to Radon's lemma \citep[see Theorem 3.1.1. in][]{abou2012matrix}, we can express $\Gamma_2(\tau) = P(\tau)/Q(\tau)$, and the existence and negative definiteness of $\Gamma_2(\tau)$ can be derived from \eqref{det_condition}. Given that $\Gamma_2(\tau)$ exists and $\Gamma_2(\tau)<0$ for $\tau \in (0,T]$, the candidate solution $G(t,W^{\widetilde{Y}}_t,\Pi_t,X_t)$ in \eqref{adj_G} is globally existent. 

Similar to Appendix \ref{appendix2}, the verification theorem can be proven in three steps. However, there are two differences in this case. First, in Step 1, we need to verify that $(W^{\widetilde{Y}}_t)^*>0$. It is straightforward to observe that the solution to \eqref{WY_opt_SDE} satisfies
\begin{eqnarray*}\label{WY_bigger_0}
     (W^{\widetilde{Y}}_t)^* &=& W^{\widetilde{Y}}_0 \exp \left\{ \int_0^t \left[r_s + \lambda_{x+s}\bigg(1 - \kappa^{\frac{1}{\gamma}}_2\frac{f_1}{f_2}\bigg) -\left(\kappa^{\frac{1}{\gamma}}_1+\kappa^{\frac{1}{\gamma}}_2\right)\frac{1}{f_2}+ \eta^{\top}_s(\Lambda_s - \sigma_{\Pi})  \right.\right.\notag\\
     &&\left.\left. - \frac{1}{2} \eta^{\top}_s \eta_s \right] ds+ \int_0^t \eta^{\top}_s dZ_s \right\}>0,
\end{eqnarray*}
which satisfies the requirement of the admissible set \eqref{adset1_gamma_01}. The argument for the existence of a strong solution remains the same as Step 1 in Appendix \ref{appendix2}.

Second, in Step 2, we can adopt Fatou's lemma rather than Lemma \ref{martingale_lemma} to prove the inequality \eqref{G_inequlity2_gamma_great_1}, as the value process \eqref{value_process} is bounded below by zero when $0<\gamma<1$. Define
\begin{equation*}
      \Psi(s) := \int_t^s || g^{\beta,I,c_1,c_2}(u,W^{\widetilde{Y}}_u,\Pi_u,X_u) h^{\beta,I,c_1,c_2}(u,W^{\widetilde{Y}}_u,\Pi_u,X_u) ||^2_2du,
\end{equation*}
and let $\tau_n := T\wedge \inf\{s\in[t,T]|\Psi(s)\geq n\}$ for $n\in \mathbb{N}$. For $s\in [t,\tau_n]$, the stochastic integral $\int_t^s g^{\beta,I,c_1,c_2}(u,W^{\widetilde{Y}}_u,\Pi_u,X_u)h^{\beta,I,c_1,c_2}(u,W^{\widetilde{Y}}_u,\Pi_u,X_u)dZ_u$ is a martingale. Therefore, by using \eqref{primary_adj_HJB_1}, \eqref{g_bI_SDE_gamma_01}, and $\Phi_1=G_1$, we obtain
\begin{eqnarray}\label{chapter3_g_bI_inequality_gamma_01}
      &&g^{\beta,I,c_1,c_2}(\tau_n,W^{\widetilde{Y}}_{\tau_n},\Pi_{\tau_n},X_{\tau_n}) \leq g^{\beta,I,c_1,c_2}(t,W^{\widetilde{Y}}_t,\Pi_t,X_t)\notag\\
      &&\qquad \qquad \qquad \qquad \qquad +\int_t^{\tau_n} g^{\beta,I,c_1,c_2}(s,W^{\widetilde{Y}}_s,\Pi_s,X_s)h^{\beta,I,c_1,c_2}(s,W^{\widetilde{Y}}_s,\Pi_s,X_s)dZ_s.
\end{eqnarray}
Given that $\lim \limits_{n\rightarrow \infty} \tau_n = T$ and $g^{\beta,I,c_1,c_2}(t,W^{\widetilde{Y}}_t,\Pi_t,X_t)\geq0$ for any $t\in[0,T]$ under $0<\gamma<1$, we can prove
\begin{eqnarray}\label{chapter3_G_inequlity2_gamma_01}
      &&J(t,w^{\widetilde{Y}},\pi,X) \notag\\
      &=&  E_{t,w^{\widetilde{Y}},\pi,X}\left[ \int_t^{T_R} 
           {_{s-t}p_{x+t}} e^{-\delta(s-t)}\left[\kappa_1 U_1(c_{1,s},\Pi_s) + \kappa_2 U_2(c_{2,s},\Pi_s)\right.\right.\notag\\
      &&\left.\left.+\kappa_2\lambda_{x+s}\Phi_1\left(s,W^{\widetilde{Y}}_t - \widetilde{Y}(t,X_t) + \frac{I_t} 
          {\mu_{x+t}},\Pi_s,X_s\right)\right]ds\right.\notag\\
      &&\left.+ e^{-\delta(T_R-t)}{_{T_R-t}p_{x+t}} \Phi_2\left(T_R,W^{\widetilde{Y}}_{T_R},\Pi_{T_R},X_{T_R}\right)\right]\notag\\
      &=&E_{t,w^{\widetilde{Y}},\pi,X}[g^{\beta,I,c_1,c_2}(T_R,W^{\widetilde{Y}}_{T_R},\Pi_{T_R},X_{T_R})]\notag\\
      &\leq& \lim \limits_{n\rightarrow \infty} E_{t,w^{\widetilde{Y}},\pi,X}[g^{\beta,I,c_1,c_2} 
              (\tau_n,W^{\widetilde{Y}}_{\tau_n},\Pi_{\tau_n},X_{\tau_n})]\notag\\
      &\leq& g^{\beta,I,c_1,c_2}(t,w^{\widetilde{Y}},\pi,X)\notag\\
      &=& G(t,w^{\widetilde{Y}},\pi,X), ~\text{for}~ \forall (\beta,I)\in \mathcal{A}_{\gamma}(0,T),
\end{eqnarray}
where the first inequality follows from Fatou's lemma and the second inequality is obtained by taking conditional expectations on both sides of \eqref{chapter3_g_bI_inequality_gamma_01}. This completes the proof.
\end{proof}

\section{Estimation details for financial market}\label{appendix8}
Let $K_t = (X_{1,t}, X_{2,t},\log \Pi_t, \log S_t)^{\top}$ denote the vector of states in the financial market. Then, it evolves as follows
\begin{equation*}
	dK_t = (\theta_0 + \theta_1 K_t) dt + \Sigma_K dZ_t,
\end{equation*}
where 
\begin{equation*}
	\theta_0 = \begin{pmatrix}
		0_{2 \times 1}\\
		\delta_{\pi^e} - \frac{1}{2}\sigma^{\top}_{\Pi}\sigma_{\Pi}\\
		\delta_R + \mu_0 - \frac{1}{2}\sigma^{\top}_S\sigma_S
	\end{pmatrix},
	\theta_1 = \begin{pmatrix}
		-K_X       &0_{2 \times 2}\\
		e^{\top}_2 &0_{1 \times 2}\\
		\iota^{\top}_2-\sigma^{\top}_{\Pi}\Lambda_1 + \mu^{\top}_1 & 0_{1 \times 2}
	\end{pmatrix},
	\Sigma_K =  \begin{pmatrix}
		\Sigma_X\\
		\sigma^{\top}_{\Pi}\\
		\sigma^{\top}_S
	\end{pmatrix},
\end{equation*}
and $e_i$ represents the $i$-th unit vector in $\mathbb{R}^2$ and $\iota_2  = (1,1)^{\top}$. Applying Ito's formula, the transition equation for states satisfies
\begin{equation}\label{transition_equation}
	K_{t+\Delta t} = \Upsilon_1 + \Psi_1 K_t + \epsilon_{t+\Delta t}, ~~\epsilon_{t+\Delta t}\overset{i.i.d.}{\sim} N(0_{4\times 1}, \Sigma_{\epsilon}),
\end{equation}
where 
\begin{equation*}
	\Upsilon_1 = \int_0^{\Delta t} e^{\theta_1(\Delta t-s)}\theta_0 ds, \quad \Psi_1 = e^{\theta_1 \Delta t}, \quad \Sigma_{\epsilon} = \int_0^{\Delta t} e^{\theta_1 (\Delta t-s)}\Sigma_K \Sigma^{\top}_K (e^{\theta_1(\Delta t-s)})^{\top} ds.
\end{equation*}
For monthly data, we set $\Delta t = \frac{1}{12}$. Ten financial variables are observed each month, including the inflation index, equity index, and yield rates of nominal zero-coupon bonds across eight different maturities. Following \cite{koijen2011optimal}, we assume that the yield rates are observed with independent errors. Let $R^{Y}(t,\tau_i), i=1,2,...,8$ denote the yield rates of nominal zero-coupon bonds at time $t$ with maturity $\tau_i, i=1,2,...,8$. Then, the measurement equation for the states can be expressed as
\begin{equation}\label{measurement_equation}
	L_t = \Upsilon_2 + \Psi_2 K_t + \eta_{t}, ~  \eta_{t}\overset{i.i.d.}{\sim} N(0_{10\times 1}, \Sigma_{\eta}),	
\end{equation}
where $L_t = (R^{Y}(t,\tau_i)_{i=1,2,...,8}, \log \Pi_t, \log S_t)^{\top}$ is the observation vector. Furthermore, the coefficients in \eqref{measurement_equation} are
\begin{eqnarray*}
	\Upsilon_2 = \begin{pmatrix}
		-A_0(\tau_1)/\tau_1\\
		\vdots\\
		-A_0(\tau_8)/\tau_8\\
		0_{2 \times 1}
	\end{pmatrix}, ~
	\Psi_2 = \begin{pmatrix}
		-A^{\top}_1(\tau_1)/\tau_1 &0_{1 \times 2}\\
		\vdots &\vdots\\
		-A^{\top}_1(\tau_8)/\tau_8 &0_{1 \times 2}\\
		0_{2\times 2} &\widetilde{I}_2
	\end{pmatrix}, ~
	\Sigma_{\eta} = \begin{pmatrix}
		\chi_1 &        &         &     &\\
		& \ddots &         &     &\\
		&        & \chi_8  &     &\\
		&        &         & 0   &\\
		&        &         &     &0      
	\end{pmatrix},
\end{eqnarray*}
where $A_0$ and $A_1$ are given by \eqref{A0} and \eqref{A1} respectively, $\widetilde{I}_2$ is the 2nd-order identity matrix, and $\chi_i, i=1,2,...,8,$ are the measurement errors in yields to be estimated.

Let $\widetilde{L}_t$ denote the set of past observations $L_1, L_2, ..., L_t$ for $t=1,2,..., n$. We define the conditional means and variances as follows
\begin{eqnarray*}
    &&\bar{K}_{t|t} = E[K_t|\widetilde{L}_t], ~~\bar{K}_{t+1} = E[K_{t+1}|\widetilde{L}_t], ~~P_{t|t} = Var(K_t|\widetilde{L}_t), ~~P_{t+1} = Var(K_{t+1}|\widetilde{L}_t),\\
    &&v_t = L_t - E[L_t|\widetilde{L}_t] = L_t - \Upsilon_2 - \Psi_2 \bar{K}_t, ~~F_t = Var(v_t|\widetilde{L}_t) = \Psi_2 P_t \Psi^{\top}_2 + \Sigma_{\eta}.
\end{eqnarray*}
Next, we can express the Kalman filter iteration for \eqref{transition_equation} and \eqref{measurement_equation} as
\begin{eqnarray*}
    &&v_t = L_t  - \Upsilon_2 - \Psi_2 \bar{K}_t, ~F_t = \Psi_2 P_t \Psi^{\top}_2 + \Sigma_{\eta},\\
    &&\bar{K}_{t|t} = \bar{K}_t + P_t \Psi^{\top}_2 F^{-1}_t v_t, ~ P_{t|t} = P_t - P_t \Psi^{\top}_2 F^{-1}_t \Psi_2 P_t,\\
    &&\bar{K}_{t+1} = \Upsilon_1 + \Psi_1 \bar{K}_{t|t}, ~ P_{t+1} = \Psi_1 P_{t|t} \Psi^{\top}_1 + \Sigma_{\epsilon}.
\end{eqnarray*}
Let $\psi$ represent the vector of all model parameters, which encompasses 21 parameters in Table \ref{financial_market_estimate_table} and eight parameters in $\Sigma_{\eta}$. Then, we can derive the log-likelihood function from the ``prediction error decomposition" (see Chapter 3.4 in \cite{harvey1990forecasting})
\begin{eqnarray*}
    \text{logL}(\widetilde{L}_n|\psi) &=& \sum \limits_{t=1}^n \log p(L_t|\widetilde{L}_{t-1},\psi) = -\frac{10n}{2} \log(2\pi) - \frac{1}{2} \sum \limits_{t=1}^n (\log|F_t| + v^{\top}_t F^{-1}_t v_t).
\end{eqnarray*}
Finally, we can utilize the ``SSM" package in Matlab to estimate the maximum likelihood estimator (MLE) $\hat{\psi}$ of the unknown parameters $\psi$. Alternative estimation methodologies include the Expectation–maximization (EM) algorithm and Markov chain Monte Carlo (MCMC) algorithm, as detailed in Chapters 7.3.4 and 13.4 of \cite{durbin2012time}.

\end{document}